\numberwithin{equation}{section} 
\numberwithin{table}{section} 
\numberwithin{figure}{section} 
\theoremstyle{plain}
\newtheorem{theorem}{Theorem}[section]
\newtheorem{definition}[theorem]{Definition}
\newtheorem{lemma}[theorem]{Lemma}
\newtheorem{corollary}[theorem]{Corollary}
\newtheorem{proposition}[theorem]{Proposition}
\newtheorem{assumption}[theorem]{Assumption}
\newtheorem{remark}[theorem]{Remark}
\theoremstyle{nonumberplain}
\newtheorem{proof}{Proof}
\providecommand{\ie}{i.~e.~}
\providecommand{\eg}{e.~g.~}
\providecommand{\cf}{cf.~}
\providecommand{\R}{\mathbb{R}}
\providecommand{\C}{\mathbb{C}}
\renewcommand{\C}{\mathbb{C}}
\providecommand{\T}{\mathbb{T}}
\renewcommand{\T}{\mathbb{T}}
\providecommand{\N}{\mathbb{N}}
\providecommand{\Z}{\mathbb{Z}}
\providecommand{\ii}{\mathrm{i}}
\providecommand{\e}{\mathrm{e}}
\renewcommand{\Re}{\mathrm{Re} \,}
\renewcommand{\Im}{\mathrm{Im} \,}
\providecommand{\Hil}{\mathcal{H}}
\providecommand{\eps}{\varepsilon}
\providecommand{\Cont}{\mathcal{C}}
\providecommand{\ker}{\mathrm{ker} \, }
\providecommand{\ran}{\mathrm{ran} \, }
\providecommand{\ker}{\mathrm{ker} \,}
\providecommand{\trace}{\mathrm{Tr} \,}
\providecommand{\dd}{\mathrm{d}}
\providecommand{\id}{\mathrm{id}}
\providecommand{\order}{\mathcal{O}}
\providecommand{\trace}{\mathrm{Tr}}
\providecommand{\abs}[1]{\left \lvert #1 \right \rvert}
\providecommand{\sabs}[1]{\lvert #1 \vert}
\providecommand{\norm}[1]{\left \lVert #1 \right \rVert}
\providecommand{\snorm}[1]{\lVert #1 \rVert}
\providecommand{\bnorm}[1]{\bigl \lVert #1 \bigr \rVert}
\providecommand{\Bnorm}[1]{\Bigl \lVert #1 \Bigr \rVert}
\providecommand{\scpro}[2]{\left \langle #1 , #2 \right \rangle}
\providecommand{\sscpro}[2]{\langle #1 , #2 \rangle}
\providecommand{\bscpro}[2]{\bigl \langle #1 , #2 \bigr \rangle}
\providecommand{\Bscpro}[2]{\Bigl \langle #1 , #2 \Bigr \rangle}
\providecommand{\ket}[1]{\left \vert #1 \right \rangle}
\providecommand{\bra}[1]{\left \langle #1 \right \vert}
\providecommand{\sopro}[2]{\vert #1 \rangle \langle #2 \vert}
\providecommand{\ncint}{\mathrel{{\ooalign{$\int$\cr\kern+.07em\raise.15ex\hbox{$\pmb{\scriptstyle-}$}\cr}}}}           \providecommand{\ncpartial}{\mathrel{{\ooalign{$\partial$\cr\kern+.29em\raise.79ex\hbox{$\pmb{\scriptstyle-}$}\cr}}}}
\renewcommand{\Hil}{\mathfrak{H}}
\providecommand{\Zak}{\mathcal{Z}}
\providecommand{\Msymb}{\mathcal{M}}
\providecommand{\Op}{\mathfrak{Op}}
\providecommand{\Weyl}{\sharp}
\providecommand{\Maxwell}{M}
\providecommand{\Mper}{\Maxwell_0}
\providecommand{\HperT}{\mathfrak{h}_0}
\providecommand{\BZ}{\mathbb{B}}
\providecommand{\Hoer}[1]{S^{#1}_{\rho}}
\providecommand{\Hoerm}[2]{S^{#1}_{#2}}
\providecommand{\Hoereq}[1]{S^{#1}_{\rho,\mathrm{eq}}}
\providecommand{\Hoermeq}[2]{S^{#1}_{#2,\mathrm{eq}}}
\providecommand{\Hoermper}[2]{S^{#1}_{#2,\mathrm{per}}}
\providecommand{\SemiHoereq}[1]{A S^{#1}_{\rho,\mathrm{eq}}}
\providecommand{\SemiHoermeq}[2]{A S^{#1}_{#2,\mathrm{eq}}}
\providecommand{\SemiHoerper}[1]{A S^{#1}_{\rho,\mathrm{per}}}
\providecommand{\SemiHoermper}[2]{A S^{#1}_{#2,\mathrm{per}}}
\providecommand{\Rot}{\mathbf{Rot}}
\providecommand{\Weyl}{\sharp}
\providecommand{\domainT}{\mathfrak{d}}
\providecommand{\Mext}{M^{\mathrm{ext}}}
\title{Derivation of Ray Optics \\ Equations in Photonic Crystals \\ Via a Semiclassical Limit}
\author{Giuseppe De Nittis${}^1$ \& Max Lein${}^2$}
\begin{document}

\maketitle
\vspace{-9mm}
\begin{center}
	${}^1$ Facultad de Matemáticas, 
	Pontificia Universidad Católica de Chile \linebreak
	Avenida Vicuña Mackenna 4860, 
	Santiago, 
	Chile \linebreak
	{\footnotesize \href{mailto:denittis@math.fau.de}{\texttt{denittis@math.fau.de}}}
	\medskip
	\\
	${}^2$ Advanced Institute of Materials Research,  
	Tohoku University \linebreak
	2-1-1 Katahira, Aoba-ku, 
	Sendai, 980-8577, 
	Japan \linebreak
	{\footnotesize \href{mailto:maximilian.lein.d2@tohoku.ac.jp}{\texttt{maximilian.lein.d2@tohoku.ac.jp}}}
\end{center}
\begin{abstract}
	In this work we present a novel approach to the ray optics limit: we rewrite the dynamical Maxwell equations in Schrödinger form and prove Egorov-type theorems, a robust \emph{semiclassical} technique. We implement this scheme for periodic light conductors, \emph{photonic crystals}, thereby making the quantum-light analogy between semiclassics for the Bloch electron and ray optics in photonic crystals rigorous. 
	One major conceptual difference between the two theories, though, is that electromagnetic fields are real, and hence, we need to add one step in the derivation to reduce it to a single-band problem. 
	Our main results, Theorem~\ref{ray_optics:thm:ray_optics} and Corollary~\ref{ray_optics:cor:phase_space_average}, give a ray optics limit for quadratic observables and, among others, apply to local averages of energy density, the Poynting vector and the Maxwell stress tensor. 
	Ours is the first rigorous derivation of ray optics equations which include all sub-leading order terms, some of which are also new to the physics literature. 
	The ray optics limit we prove applies to photonic crystals of \emph{any topological class}. 
\end{abstract}
% 
% \noindent{\scriptsize \textbf{Key words:} Maxwell equations, ray optics, semiclassical limit, Egorov theorem}\\
% {\scriptsize \textbf{MSC 2010:} 81Q20, 35Q60, 35Q61, 35S05}
% 81Q20  	Semiclassical techniques, including WKB and Maslov methods
% 35Q60  	PDEs in connection with optics and electromagnetic theory
% 35Q61  	Maxwell equations
% 35S05  	Pseudodifferential operators

\newpage
\tableofcontents

%%% begin content %%% (fold)
\section{Introduction} % (fold)
\label{intro}
The main idea of ray optics is to approximate full electrodynamics as given by the source-free Maxwell equations in a medium 
\begin{subequations}
	\label{intro:eqn:Maxwell}
	\begin{align}
		\left (
		\begin{matrix}
			\eps & \chi \\
			\chi^* & \mu \\
		\end{matrix}
		\right ) 
		\, \frac{\dd}{\dd t} \left (
		\begin{matrix}
			\mathbf{E} \\
			\mathbf{H} \\
		\end{matrix}
		\right ) &= \left (
		\begin{matrix}
			+ \nabla \times \mathbf{H} \\
			- \nabla \times \mathbf{E} \\
		\end{matrix}
		\right )
		,
		&& \mbox{(dynamical eqns.)}
		\label{intro:eqn:dynamical_Maxwell}
		\\
		% \nabla \cdot
		\mathrm{Div}
		% \left (
		% %
		% \begin{matrix}
		% 	\nabla \cdot \\
		% 	\nabla \cdot \\
		% \end{matrix}
		% %
		% \right ) \cdot
		\left ( \left (
		\begin{matrix}
			\eps & \chi \\
			\chi^* & \mu \\
		\end{matrix}
		\right ) 
		\left (
		\begin{matrix}
			\mathbf{E} \\
			\mathbf{H} \\
		\end{matrix}
		\right ) \right ) &= 0 
		% \left (
		% %
		% \begin{matrix}
		% 	\nabla \cdot \bigl ( \eps \mathbf{E} + \chi \mathbf{H} \bigr ) \\
		% 	\nabla \cdot \bigl ( \chi^* \mathbf{E} + \mu \mathbf{H} \bigr ) \\
		% \end{matrix}
		% %
		% \right ) &= 0
		, 
		&& \mbox{(no sources eqns.)}
		\label{intro:eqn:source_Maxwell}
	\end{align}
\end{subequations}
by simpler hamiltonian equations of motion of the form 
\begin{subequations}
	\label{intro:eqn:naive_ray_optics}
	\begin{align}
		\dot{r} &= + \nabla_k \Omega + \order(\lambda) 
		, 
		\\
		\dot{k} &= - \nabla_r \Omega + \order(\lambda) 
		. 
	\end{align}
\end{subequations}
Here, $\mathrm{Div} = (\nabla \cdot) \otimes \id_{\C^2}$ consists of two copies of the divergence and the material weights \emph{electric permittivity} $\eps = \eps(\lambda)$, \emph{magnetic permeability} $\mu = \mu(\lambda)$ and \emph{bi-aniso\-tropic tensor} $\chi = \chi(\lambda)$ are $3 \times 3$-matrix-valued functions which describe the response of the medium to the impinging electromagnetic waves; the presence of the perturbation parameter $\lambda \ll 1$ indicates that the material weights are modulated compared to their unperturbed counterparts (see Assumption~\ref{setup:assumption:modulated_weights} for the case considered in this paper). 
While \eqref{intro:eqn:Maxwell} only describes \emph{non-gyrotropic} media where $\eps$, $\mu$ and $\chi$ are all real-valued, our ideas also apply to Maxwell's equations describing \emph{gyrotropic} media (\cf equations~\eqref{setup:eqn:generic_Maxwell_equations}). 
In both cases the material weights enter \eqref{intro:eqn:naive_ray_optics} implicitly via the \emph{dispersion relation} $\Omega(r,k)$, and indeed, one of the main tasks in justifying a ray optics limit is to determine $\Omega$ from the weights for \emph{suitable} initial states. 

The advantage of ray optics equations~\eqref{intro:eqn:naive_ray_optics} is that they provide a simpler, effective description of the propagation of light in a medium, \ie we can study solutions of an ODE to understand the behavior of a PDE. Ray optics are used in a wide variety of circumstances, and newfound applications to fields such as computer vision and image processing (see \eg \cite{Siddiqi_Tannenbaum_Zucker:hamiltonian_approach_eikonal_equation:1999,Rangarajan_Gurumoorthy:Schroedinger_Eikonal_equation:2009}) mean it still is an area of active research. One may also think of more sophisticated ray optics equations which include polarization as a classical spin degree of freedom. Instead of having to solve \eqref{intro:eqn:Maxwell} for $\bigl ( \mathbf{E}(t) , \mathbf{H}(t) \bigr )$, ray optics equations describe a light wave by its position $r$ and its \emph{wave vector} $k$, and the wave front propagates with \emph{group velocity} $\dot{r}$ along the trajectory $\bigl ( r(t) , k(t) \bigr )$. However, \emph{a priori} it is not at all clear in what sense \eqref{intro:eqn:naive_ray_optics} approximates \eqref{intro:eqn:Maxwell}, and how to quantify the error. 

The purpose of this paper is to derive the ray optics limit in a \emph{novel way} by rewriting the dynamical Maxwell equations~\eqref{intro:eqn:dynamical_Maxwell} in Schrödinger form and proving an Egorov theorem, a well-known and robust semiclassical technique. While most derivations of ray optics (see \eg \cite[Chapter~5.4]{Someda:electromagnetic_waves:1998}, \cite[Chapter~2]{Perlick:ray_optics:2000} and \cite{Onoda_Murakami_Nagaosa:geometrics_optical_wave-packets:2006}) employ what would be called “semiclassical wavepacket methods” in the context of quantum mechanics, our technique does not rely on the localization of $(\mathbf{E},\mathbf{H})$ around some $(r_0,k_0)$ in phase space. 
%%% begin new bit on meaning of ray optics %%%

Instead, we will \emph{prove a ray optics limit for} a class of \emph{observables} that includes local averages of the field energy, the Poynting vector and components of the Maxwell stress tensor. Conceptually, there are two major differences to quantum mechanics we will need to deal with: 
\begin{enumerate}[(i)]
	\item Electromagnetic fields — unlike quantum mechanical wave functions — are \emph{real}. 
	\item Observables in electromagnetism are not selfadjoint operators, but functionals on the fields. 
\end{enumerate}
The reason the reality of electromagnetic fields complicates matters is that real electromagnetic fields are \emph{necessarily} a linear combination of states associated to $N$ positive and $N$ negative frequency bands, \ie at least \emph{two}. 
While there are multiband semiclassical techniques available \cite{Bellissard_Rammal:algebraic_semi-classical_approach:1990,Littlejohn_Flynn:geometric_phases_Bohr_Sommerfeld_quantization_multicomponent_wavefunctions:1991}, we rely on a result by Teufel and Stiepan \cite{Stiepan_Teufel:semiclassics_op_valued_symbols:2012} which works only for single bands. 
Because of the reality of electromagnetic fields, we first use 
% Given that standard semiclassical techniques only work for single bands, we first need to use
symmetry arguments to reduce everything to the positive frequency bands (\cf Proposition~\ref{ray_optics:prop:rewriting_electromagnetic_observables}), and then apply the single-band technique from \cite{Stiepan_Teufel:semiclassics_op_valued_symbols:2012}. We do that by projecting the real electromagnetic field $(\mathbf{E},\mathbf{H})$ onto the positive frequencies via the orthogonal projection $P_{+,\lambda}$. The original real electromagnetic field $(\mathbf{E},\mathbf{H})$ can be recovered by taking the real part of $P_{+,\lambda} (\mathbf{E},\mathbf{H})$. Hereinafter, it is useful to think of the real part 
\begin{align*}
	\Re := \tfrac{1}{2} \bigl ( \id + C \bigr )
\end{align*}
as an $\R$-linear projection; any operator which commutes with $C$ also commutes with $\Re$. Just like in quantum mechanics, not all quantum observables have a good semiclassical limit; The same is true in electromagnetism. Our results hold for “quadratic” observables which come in pairs, an electromagnetic observable $\mathcal{F} : L^2(\R^3,\C^6) \longrightarrow \R$ (a functional on the electromagnetic field) and a ray optics observable $f$ (\ie a function of $(r,k)$). The former can be seen as the “quantum expectation value” of the pseudodifferential operator associated to $f$ with respect to the electromagnetic field, 
\begin{align}
	\mathcal{F}[(\mathbf{E},\mathbf{H})] &= \mathbb{E}_{f} [(\mathbf{E},\mathbf{H})]
	\notag \\
	:\negmedspace &= 2 \, \Re \, \bscpro{P_{+,\lambda} (\mathbf{E},\mathbf{H}) \,}{\, \Op_{\lambda}^{S \Zak}(f) \, P_{+,\lambda} (\mathbf{E},\mathbf{H})}_{L^2_{W_{+,\lambda}}(\R^3,\C^6)}
	. 
	\label{intro:eqn:ray_optics_observables}
\end{align}
We will explain the notation in detail later in Section~\ref{ray_optics:observables}. 

Now assume the fields are associated to a given (positive) frequency band $\omega(k)$. More specifically, $\omega(k)$ determines a projection $\Pi_{+,0}$ (different from $P_{+,\lambda}$), and the electromagnetic fields $(\mathbf{E},\mathbf{H})$ of interest lie in $\Re \ran \Pi_{+,0}$. Then for observables of the form~\eqref{intro:eqn:ray_optics_observables}, we can approximate the observable at time $t$ by transporting $f$ along the ray optics flow $\Phi^0$, 
\begin{align}
	\mathcal{F} \bigl [ \bigl ( \mathbf{E}(t) , \mathbf{H}(t) \bigr ) \bigr ] &= \mathbb{E}_{f \circ \Phi^0_t}[(\mathbf{E},\mathbf{H})] + \order(\lambda)
	\notag \\
	&
	= 2 \, \Re \, \int_{\R^3} \dd r \int_{\R^3} \dd k \, \bigl ( f \circ \Phi^0_t \bigr )(r,k) \; \mathrm{w}_{P_{+,\lambda} (\mathbf{E},\mathbf{H})}(r,k) + \order(\lambda)
	. 
	\label{intro:eqn:ray_optics_limit}
\end{align}
The dispersion $\Omega(r,k) = \tau(r)^2 \, \omega(k)$ which enters the ray optics equations~\eqref{intro:eqn:naive_ray_optics} consist of a factor $\tau(r)^2$ that is due to the slow modulation and the periodic frequency band function $\omega(k)$. Moreover, we can express $\mathbb{E}_{f \circ \Phi^0_t}(\mathbf{E},\mathbf{H})$ as a phase space average where we integrate $f \circ \Phi^0_t$ against the Wigner transform $\mathrm{w}_{P_{+,\lambda} (\mathbf{E},\mathbf{H})}$ of the positive frequency part of $(\mathbf{E},\mathbf{H})$ at time $0$ (\cf Corollary~\ref{ray_optics:cor:phase_space_average}). 

Our two main results, Theorem~\ref{ray_optics:thm:ray_optics} and Corollary~\ref{ray_optics:cor:phase_space_average}, are in fact stronger than \eqref{intro:eqn:ray_optics_limit} because after careful analysis we have been able us to reduce the error by one order of magnitude to $\order(\lambda^2)$. This is done by modifying the ray optics flow $\Phi^{\lambda} = \Phi^0 + \order(\lambda)$, the projection onto the relevant states $\Pi_{+,\lambda} = \Pi_{+,0} + \order(\lambda)$ and potentially also the ray optics observable $f$. 

Apart from the local energy density, our results also cover local averages of the Poynting vector, the field amplitudes and the components of the Maxwell stress tensor (see Section~\ref{ray_optics:examples}). Note that the error term in \eqref{intro:eqn:ray_optics_limit} can be estimated uniformly in $(\mathbf{E},\mathbf{H})$ as long as we keep the field energy fixed. Our approach overcomes two major limitation of “wavepacket techniques”: Mathematically, these are notoriously hard to justify. And physically, given that they depend on a judicious choice of initial state, it is hard to go beyond leading order and compute the $\order(\lambda)$ corrections which often contain novel physical effects.
%%% end new bit on meaning of ray optics %%%

We will illustrate how to implement a ray optics limit via semiclassical techniques for \emph{photonic crystals}, periodically patterned light conductors. Just as in case of the Bloch electron the periodic structure modifies the dispersion relation: whereas in quantum mechanics $\tfrac{1}{2m} k^2 + V(r)$ has to be replaced by an energy band function $E_n(k) + V(r)$, the so-called semiclassical limit of the Bloch electron (see \eg \cite{PST:effective_dynamics_Bloch:2003,DeNittis_Lein:Bloch_electron:2009} and references therein), also in case of photonic crystals $c \abs{k}$ has to be substituted by $\Omega(r,k) = \tau^2(r) \, \omega_n(k) + \order(\lambda)$ where $\omega_n(k)$ is a frequency band function and the modulation $\tau(r)^2$ is due to the external perturbation. And just like in the case of the Bloch electron, we rely on the presence of a \emph{spectral gap}, \ie $\omega_n(k)$ is a non-degenerate frequency band which does not intersect or merge with other bands. The choice of band not only enters the dispersion relation, but also determines the subspace $\Re \ran \Pi_{+,\lambda} \ni (\mathbf{E},\mathbf{H})$ on which \eqref{intro:eqn:ray_optics_limit} holds. Moreover, finding the form of the $\order(\lambda)$ terms in \eqref{intro:eqn:naive_ray_optics} is crucial, because these first-order corrections are believed to explain geometric and topological effects \cite{Raghu_Haldane:quantum_Hall_effect_photonic_crystals:2008,Bliokh_Bliokh:Berry_curvature_optical_Magnus_effect:2004,Onoda_Murakami_Nagaosa:geometrics_optical_wave-packets:2006}. 

Our first main result, Theorem~\ref{ray_optics:thm:ray_optics}, rigorously establishes the ray optics limit for two classes of observables, scalar and non-scalar quadratic observables (\cf Definition~\ref{ray_optics:defn:admissible_observables}). Apart from generic conditions on the material weights, no restrictions such as topological triviality of the frequency band $\omega_n$ or the presence of symmetries needs to be imposed, in the parlance of \cite{DeNittis_Lein:symmetries_Maxwell:2014,DeNittis_Lein:symmetries_electromagnetism:2016} our main Theorem~\ref{ray_optics:thm:ray_optics} applies to photonic crystals of any topological class. We follow the ideas of Stiepan and Teufel, but it is necessary to generalize their procedure to include \emph{non-scalar} observables to cover prominent examples such as the Poynting vector and the Maxwell stress tensor. 

Up until this work the exact form of the ray optics equations had been an open problem, even on the level of physics the exact form of the ray optics equations had not yet been established: Raghu and Haldane proposed their ray optics equations \emph{by analogy} to the corresponding quantum system, the Bloch electron. 
% FIXME Add work by Bliokh and Nori here
Subsequently, only three works attempted to derive ray optics equations systematically: Onoda et al \cite{Onoda_Murakami_Nagaosa:geometrics_optical_wave-packets:2006} used variational techniques developed by Sundaram and Niu \cite{Sundaram_Niu:wave_packet_dynamics_Bloch_electron:1999}, and  their ray optics equation differ to sub-leading order (where all topological contributions enter) from those of Raghu and Haldane. The second work is by Esposito and Gerace \cite{Esposito_Gerace:photonic_crystals_broken_TR_symmetry:2013} who derive only the equation for $\dot{r}$ via standard perturbation theory. None of these equations coincide with the equations we have found, though (\cf Proposition~\ref{ray_optics:prop:ray_optics_limit}). The only rigorous work we are aware of is \cite{Allaire_Palombaro_Rauch:diffractive_Bloch_wave_packets_Maxwell:2012}, and they justify the eikonal approximation via a multiscale WKB ansatz. However, Allaire et al crucially assume in \cite[Hypothesis~1.1]{Allaire_Palombaro_Rauch:diffractive_Bloch_wave_packets_Maxwell:2012} that the perturbation of the material weights is a \emph{second}-order effect, \eg $\eps(\lambda) = \eps(0) + \order(\lambda^2)$, meaning the perturbation is of the same order of magnitude as the error in \eqref{intro:eqn:ray_optics_limit}. We refer to Section~\ref{discussion:previous_results} for a more in-depth discussion of these previous results and a comparison to ours. 

The equations we have derived are \emph{one}-band equations, and in principle, one may wonder whether \emph{degenerate} bands are a generic feature of a certain class of photonic crystals? Fortunately, for most the answer is no. There are two symmetries which lead to globally degenerate bands, and neither of them are present in most photonic crystals: 
\begin{enumerate}[(i)]
	\item Light comes in two chiralities, left- and right-hand circularly polarized light, and in many materials the light dynamics are independent of the polarization. The associated symmetry operator can be written as a function of $- \ii \nabla$ \cite[equation~(22)]{Bliokh_Kivshar_Nori:magneto_electric_effects:2014}, and hence, position-dependent material weights break this symmetry. Nevertheless, in periodic waveguide arrays where $\eps$ and $\mu$ are scalar, $\chi = 0$, and the contrast is very low (of the order of $10^{-4} \sim 10^{-3}$ \cite{Longhi:quantum_optical_ananlog_review:2009,Rechtsman_Zeuner_et_al:photonic_topological_insulators:2013}), the degeneracy of the two polarization states is broken only at the subleading order. Here, we reckon one needs to include a classical spin degree of freedom in the ray optics equations using semiclassical techniques for a particle with spin \cite{Gat_Lein_Teufel:semiclassical_dynamics_particle_spin:2013}. 
	\item Materials where the roles of electric and magnetic field are “symmetric” possess the “dual symmetry” \cite{Bliokh_Bekshaev_Nori:dual_electromagnetism:2013}; this symmetry generates “rotations”, 
	\begin{align*}
		(\mathbf{E},\mathbf{H}) \mapsto \bigl ( \cos \alpha \, \mathbf{E} + \sin \alpha \, \mathbf{H} \; , \; - \sin \alpha \, \mathbf{E} + \cos \alpha \, \mathbf{H} \bigr )
		, 
	\end{align*}
	mixing electric and magnetic fields. Periodic light conductors made up of dual symmetric materials exist: in case $\eps = c \, \mu$ and $\chi = 0$ (\eg vacuum or certain YIG 2d photonic crystals \cite{Pozar:microwave_engineering:1998,Wang_et_al:edge_modes_photonic_crystal:2008}) each band is two-fold degenerate due to this dual symmetry. 
\end{enumerate}

\paragraph{Outline} % (fold)
\label{intro:outline}
The essential ingredient for the derivation of ray optics is to bring the Maxwell equations~\eqref{intro:eqn:Maxwell} in Schrödinger form and to extend them to include gyrotropic media, something which we explain in Section~\ref{setup}. There we also introduce other necessary objects and notation, and state all assumptions. Because the adiabatically perturbed Maxwell operator (which takes the place of the hamilton operator) is a \emph{pseudodifferential} operator \cite[Theorem~1.3]{DeNittis_Lein:adiabatic_periodic_Maxwell_PsiDO:2013}, standard semiclassical techniques can be applied to yield ray optics equations. Those approximate full electrodynamics in the sense of an Egorov theorem (Section~\ref{ray_optics}), the proof of which is the content of Section~\ref{egorov}. Our work closes with a discussion of our results in Section~\ref{discussion}. Some auxiliary results are put into an appendix. 
% paragraph outline (end)

\paragraph{Acknowledgements} % (fold)
\label{intro:acknowledgements}
G.~D.~research is supported by the  grant \emph{Iniciaci\'{o}n en Investigaci\'{o}n 2015} - $\text{N}^{\text{o}}$ 11150143 funded  by FONDECYT. We would like to take the opportunity to thank Stefan Teufel for useful feedback and friendly discussions. 
% paragraph acknowledgements (end)
% section introduction (end)

\section{Schrödinger formalism of the Maxwell equations} % (fold)
\label{setup}
Let us proceed to clearly define the mathematical problem. For the purpose of this paper we restrict ourselves to \emph{linear}, \emph{lossless} media meaning that the \emph{material weights} 
\begin{align}
	W^{-1}(x) := \left (
	\begin{matrix}
		\eps(x) & \chi(x) \\
		\chi^*(x) & \mu(x) \\
	\end{matrix}
	\right ) 
	\label{setup:eqn:Winverse}
\end{align}
which quantify the response of the medium are frequency-\emph{in}dependent and take values in the \emph{hermitian} $6 \times 6$-matrices. 
We will always make the following assumptions:
\begin{assumption}[Material weights]\label{setup:assumption:material_weights}
	Assume that $W^{-1} \in L^{\infty} \bigl ( \R^3 , \mathrm{Mat}_{\C}(6) \bigr )$ is positive, selfadjoint, bounded and has a bounded inverse $W$. We say that the weights are \emph{real} if and only if $[C , W] = 0$ where $C$ denotes complex conjugation. 
\end{assumption}
Throughout the main body of the paper, we will make a conscious attempt to cut down on technical details which are not necessary to understand the strategy of the proofs.

\subsection{Materials with real material weights} % (fold)
\label{setup:real}
Let us start by considering light conductors whose material weights are real (as opposed to complex). Here, the reality of electromagnetic fields is preserved by Maxwell's equations~\eqref{intro:eqn:Maxwell} — which simplifies the mathematical description. The case where $W \neq \overline{W}$ is complex will be discussed in Section~\ref{setup:complex}. In both cases the first goal is to rewrite Maxwell's equations in Schrödinger form as that allows us to adapt techniques initially developed for quantum mechanics and apply them to classical electromagnetism.

\subsubsection{First-order Schrödinger framework of electromagnetism} % (fold)
\label{setup:real:Schroedinger_framework}
As our starting point we recast the Maxwell equations as a \emph{Schrödinger equation} 
\begin{align}
	\ii \partial_t \Psi = M_w \Psi
	% , 
	\label{setup:eqn:Maxwell_Schroedinger_eqn}
\end{align}
by multiplying both sides of \eqref{intro:eqn:dynamical_Maxwell} by $\ii \, W$ and restricting oneself to electromagnetic fields $\Psi = (\mathbf{E},\mathbf{H}) \in L^2(\R^3,\C^6)$ which satisfy \eqref{intro:eqn:source_Maxwell} in the distributional sense. Based on this precise formulation of the “quantum-light analogy” we can systematically adapt techniques from applied mathematics and quantum physics to classical electromagnetism. Here, the electromagnetic field $\Psi = (\mathbf{E},\mathbf{H})$ plays the role of the wave function and the \emph{Maxwell operator} 
\begin{align}
	M_w := W \, \Rot 
	= W \, \left (
	\begin{matrix}
		0 & + \ii \nabla^{\times} \\
		- \ii \nabla^{\times} & 0 \\
	\end{matrix}
	\right )
	\label{setup:eqn:Maxwell_operator}
\end{align}
takes the place of the Schrödinger operator. $\nabla^{\times} \mathbf{E} = \nabla \times \mathbf{E}$ is the curl for vector fields on $\R^3$, and we will frequently make use of this notation to connect the matrix 
\begin{align*}
	v^{\times} \psi := \left (
	\begin{matrix}
		0 & -v_3 & +v_2 \\
		+v_3 & 0 & -v_1 \\
		-v_2 & +v_1 & 0 \\
	\end{matrix}
	\right ) \left (
	\begin{matrix}
		\psi_1 \\
		\psi_2 \\
		\psi_3 \\
	\end{matrix}
	\right )
	= v \times \psi
\end{align*}
to any vectorial quantity $v$ such as the canonical basis vectors $e_j$, $ j = 1 , 2 , 3$, of $\C^3$. Moreover, the the Maxwell operator is selfadjoint \cite[Theorem~2.1]{DeNittis_Lein:adiabatic_periodic_Maxwell_PsiDO:2013} on the Hilbert space $\Hil_w$ one obtains by endowing the complex Banach space $L^2(\R^3,\C^6)$ with the weighted \emph{energy scalar product} $\bscpro{\Psi}{\Phi}_w := \bscpro{\Psi}{W^{-1} \Phi}$. Consequently, we are able to reach into the rich toolbox from the theory of selfadjoint operators. In particular, the time evolution group $\e^{- \ii t M_w}$ exists and is unitary with respect to $\scpro{\cdot \,}{\cdot}_w$. For the case of real material weights where $W$ commutes with complex conjugation $C$, the complexification of electromagnetic fields is just a matter of convenience, real electromagnetic fields are recovered by taking the real part of the solution afterwards (\cf \cite[Section~4]{DeNittis_Lein:sapt_photonic_crystals:2013}). 

On the level of operators, $C$ gives rise to an \emph{even particle-hole-type symmetry} because $C$ is anti-linear, $C^2 = + \id$ and it \emph{anti}commutes with the Maxwell operator, 
\begin{align}
	C \, M \, C = - M
	. 
	\label{setup:eqn:real_weights_C_PH_symmetry}
\end{align}
As particle-hole symmetries \emph{commute} with the time-evolution group $\e^{- \ii t M}$, we conclude that real fields remain real under the evolution, 
\begin{align}
	\bigl [ \e^{- \ii t M} , \Re \bigr ] = 0 
	. 
	\label{setup:eqn:time_evolution_Re_commute}
\end{align}
This leads to a symmetry in the band spectrum: if $\varphi_n(k)$ is an eigenfunction of $M(k)$ to $\omega_n(k)$, then $\overline{\varphi_n(-k)}$ is an eigenfunction of $M(k)$ to $-\omega_n(-k)$, \ie we obtain a \emph{pairing of frequency bands}
\begin{align}
	\bigl ( \varphi_n(k) \, , \, \omega_n(k) \bigr ) 
	\; \; 
	\longleftrightarrow
	\; \; 
	\bigl ( \overline{\varphi_n(-k)} \, , \, - \omega_n(-k) \bigr ) 
	. 
	\label{setup:eqn:band_pairing}
\end{align}
Hence, the frequency band spectrum of $M(k)$ is symmetric under inversion at $k = 0$ (\cf Figure~\ref{setup:figure:band_spectrum}). 

\begin{figure}
	\centering
	\resizebox{!}{50mm}{\includegraphics{./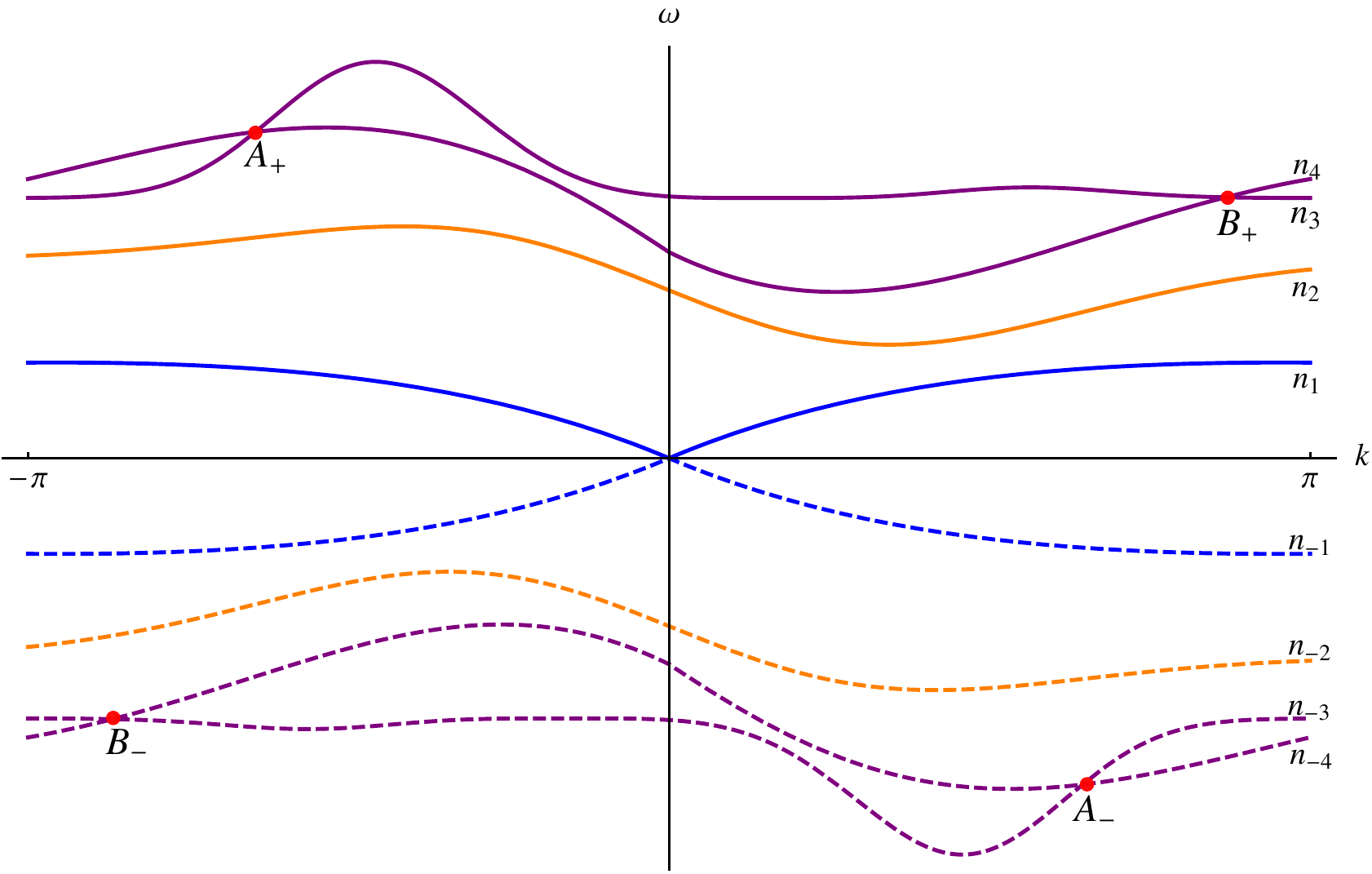}}
	\caption{One-dimensional representation of a frequency band spectrum of a photonic crystal. The particle-hole symmetry manifests itself as a point symmetry of the spectrum. Time-reversal symmetries, on the other hand, lead to the spectrum being even \ie $\omega_n(-k) = \omega_n(k)$ then holds for all frequency band functions. }
	\label{setup:figure:band_spectrum}
\end{figure}
%
% subsubsection first_order_schrodinger_framework_of_electromagnetism (end)

\subsubsection{Adiabatically perturbed photonic crystals} % (fold)
\label{setup:real:photonic_crystals}
We are interested in the propagation of light in adiabatically perturbed photonic crystals where the periodic material weights are perturbed in a specific manner: 
\begin{assumption}[Slowly modulated weights]\label{setup:assumption:modulated_weights}
	Suppose the material weights are of the form $W_{\lambda}(x) = S^{-2}(\lambda x) \, W(x)$ where 
	\begin{enumerate}[(i)]
		\item the periodic contribution $W$ satisfies Assumption~\ref{setup:assumption:material_weights} and is periodic with respect to some lattice $\Gamma \cong \Z^3$, and 
		\item the slow modulation $S$ is either of the form $S(\lambda x) := \tau^{-1}(\lambda x)$ when $\chi \neq 0$ or 
		\begin{align}
			S(\lambda x) := \left (
			\begin{matrix}
				\tau_{\eps}^{-1}(\lambda x) \; \id_{\C^3} & 0 \\
				0 & \tau_{\mu}^{-1}(\lambda x) \; \id_{\C^3} \\
			\end{matrix}
			\right )
			\label{intro:eqn:modulated_weights_v1}
		\end{align}
		in case $\chi = 0$. 
	\end{enumerate}
	The functions $\tau , \tau_{\eps} , \tau_{\mu} \in \Cont^{\infty}_{\mathrm{b}}(\R^3)$ are always assumed to be positive, $\tau(0) = \tau_{\eps}(0) = \tau_{\mu}(0) = 1$ and bounded away from $0$ and $+\infty$. In case of modulation \eqref{intro:eqn:modulated_weights_v1}, one defines $\tau := \sqrt{\tau_{\eps} \, \tau_{\mu}}$. 
\end{assumption}
We will use the index $\lambda$ systematically, \eg $\Hil_{\lambda} := \Hil_{w_{\lambda}}$. Objects with the index $0$ denote the periodic case, and we can write $M_{\lambda} = S(\lambda \hat{x})^{-2} \, \Mper$ where $S(\lambda \hat{x})$ denotes the operator of multiplication by $S(\lambda x)$. We will use this notation for multiplication operators also for other variables.  

The periodic Maxwell operator $\Mper \cong \int_{\BZ}^{\oplus} \dd k \, \Mper(k)$, 
\begin{align*}
	\Mper(k) := W(\hat{y}) \, \left (
	\begin{matrix}
		0 & - (- \ii \nabla_y + k)^{\times} \\
		+ (- \ii \nabla_y + k)^{\times} & 0 \\
	\end{matrix}
	\right )
	, 
\end{align*}
fibers in crystal momentum $k \in \BZ$ ($\BZ \simeq \T^3$ being the Pontryagin dual of the lattice $\Gamma$, usually referred to as Brillouin zone) via the \emph{Zak transform} 
\begin{align}
	(\Zak \Psi)(k,y) := \sum_{\gamma \in \Gamma} \e^{- \ii k \cdot (y + \gamma)} \, \Psi(y + \gamma) 
	, 
	\label{setup:eqn:Zak_transform}
\end{align}
and apart from essential spectrum at $\omega = 0$ due to unphysical gradient fields, $\sigma \bigl ( \Mper(k) \bigr ) = \bigcup_{n \in \Z} \{ \omega_n(k) \}$ is purely discrete and consists of frequency bands \cite[Theorem~1.4]{DeNittis_Lein:adiabatic_periodic_Maxwell_PsiDO:2013}. With the exception of the ground state bands (which have a linear dispersion around $k = 0$ and $\omega = 0$), all Bloch functions $\varphi_n$ are locally analytic away from frequency band crossings. Note that unlike periodic Schrödinger operators, the Maxwell operator is not bounded from below. In fact, symmetries such as complex conjugation induce relations between bands of different signs \cite{DeNittis_Lein:symmetries_Maxwell:2014}: if complex conjugation $C$ commutes with the material weights (\ie $W$ is real), then the periodic Maxwell operator satisfies $C \, \Mper(k) \, C = - \Mper(-k)$. Consequently, if $\varphi_n(k)$ is an eigenfunction of $\Mper(k)$ to $\omega_n(k)$, then $C \varphi_n(-k)$ is an eigenfunction of $\Mper(k)$ to $- \omega_n(-k)$. Such pairings of \emph{twin bands} will become crucial to understanding the ray optics limit of real states, because $C \varphi_n(-k) \not\propto \varphi_n(k)$ implies these are eigenfunctions to distinct eigenvalues of $\Mper(k)$. Put another way, single bands cannot support real states (\cf discussion in \cite[Section~4.1]{DeNittis_Lein:sapt_photonic_crystals:2013}), a fact which will be discussed further in Section~\ref{setup:reduction_omega_g_0}. 
% subsection adiabatically_perturbed_photonic_crystals (end)

\subsubsection{Auxiliary representations} % (fold)
\label{setup:real:representations}
Our choice of representation exploits (i) the periodicity and (ii) gets rid of the $\lambda$-dependence of the Hilbert spaces. Just like in quantum mechanics, a change of representation is mitigated by a unitary map. The  Zak transform $\Zak : \Hil_{\lambda} \longrightarrow \Zak \Hil_{\lambda}$ defined in \eqref{setup:eqn:Zak_transform} above makes use of the periodicity. 

In a second step, we use the unitary $S(\ii \lambda \nabla_k) : \Zak \Hil_{\lambda} \longrightarrow \Zak \Hil_0$ to map the problem onto the (fibered) Hilbert space of the unperturbed, periodic system (\cf also \cite[Section~2.2]{DeNittis_Lein:adiabatic_periodic_Maxwell_PsiDO:2013}). And because the unperturbed weights $W$ are periodic, 
\begin{align*}
	\Zak \Hil_0 \simeq L^2(\BZ) \otimes \HperT
\end{align*}
decomposes into the “slow” space $L^2(\BZ)$ and the “fast” space $\HperT$ which is defined as $L^2(\T^3,\C^6)$ endowed with the scalar product 
\begin{align*}
	\scpro{\varphi}{\psi}_{\HperT} := \scpro{\varphi}{W^{-1} \psi}_{L^2(\T^3,\C^6)} 
	. 
\end{align*}
Alternatively, we could have reversed the order of the transformations because $\Zak \, S(\lambda \hat{x}) = S(\ii \lambda \nabla_k) \, \Zak$. 
% subsection representations (end)

\subsubsection{The Maxwell operator as a $\Psi$DO} % (fold)
\label{setup:real:Maxwell_operator_PsiDO}
The last ingredient we need is that the Maxwell operator 
\begin{align}
	M_{\lambda} = \Op_{\lambda}^{S \Zak}(\Msymb_{\lambda}) :\negmedspace &= \Zak^{-1} \, S(\ii \lambda \nabla_k)^{-1} \; \Op_{\lambda}(\Msymb_{\lambda}) \; S(\ii \lambda \nabla_k) \, \Zak 
	\label{setup:eqn:M_lambda_PsiDO}
	\\
	&= S(\lambda \hat{x})^{-1} \, \Zak^{-1} \; \Op_{\lambda}(\Msymb_{\lambda}) \; \Zak \, S(\lambda \hat{x}) 
	\notag 
\end{align}
can also be seen as a \emph{pseudodifferential} operator (\cf \cite[Theorem~1.3]{DeNittis_Lein:adiabatic_periodic_Maxwell_PsiDO:2013}) associated to 
\begin{align}
	\Msymb_{\lambda}(r,k) &= \Msymb_0(r,k) + \lambda \, \Msymb_1(r,k)
	\notag \\
	% &= \tau(r) \, \Mper(k) - \lambda \, \tau(r) \, W
	% \left (
	% %
	% \begin{matrix}
	% 	0 & \tfrac{\ii}{2} \bigl ( \nabla_r \ln \nicefrac{\tau_{\eps}}{\tau_{\mu}} \bigr )^{\times}(r) \\
	% 	\tfrac{\ii}{2} \bigl ( \nabla_r \ln \nicefrac{\tau_{\eps}}{\tau_{\mu}} \bigr )^{\times}(r) & 0 \\
	% \end{matrix}
	% %
	% \right )
	% \notag
	% \\
	:\negmedspace &= \tau^2(r) \, \Mper(k) - \lambda \, \tau^2(r) \, \frac{\ii}{2} \,  \bigl ( \nabla_r \ln \tfrac{\tau_{\eps}}{\tau_{\mu}} \bigr )(r) \cdot \Sigma
	\, 
	\label{setup:eqn:symbol_Maxwellian}
\end{align}
where $\Sigma := \bigl ( \Sigma_1 , \Sigma_2 , \Sigma_3 \bigr )$ is an operator-valued vector with components 
\begin{align*}
	\Sigma_j := W \, \left (
	\begin{matrix}
		0 & e_j^{\times} \\
		e_j^{\times} & 0 \\
	\end{matrix}
	\right )
	. 
\end{align*}
The equivariance of the operator-valued function 
\begin{align}
	\Msymb_{\lambda}(r,k - \gamma^*) = \e^{+ \ii \gamma^* \cdot \hat{y}} \, \Msymb_{\lambda}(r,k) \, \e^{- \ii \gamma^* \cdot \hat{y}} 
	, 
	&&
	\forall r,k \in \R^3 
	, \; 
	\gamma^* \in \Gamma^*
	, 
	\label{setup:eqn:equivariance}
\end{align}
with respect to translations in the dual lattice $\Gamma^*$ ensures that its Weyl quantization 
\begin{align}
	\Op_{\lambda}(\Msymb_{\lambda}) &:= \frac{1}{(2\pi)^6} \int_{\R^3} \dd r \int_{\R^3} \dd k \int_{\R^3} \dd r' \int_{\R^3} \dd k' \; \e^{+ \ii (k' \cdot r - r' \cdot k)} \, 
	\cdot \notag\\
	&\qquad \qquad \qquad \qquad \qquad \qquad \cdot 
	\Msymb_{\lambda}(r,k) \; \e^{- \ii (k \cdot (\ii \lambda \nabla_k) - r \cdot \hat{k})} 
	\label{setup:eqn:definition_Op_lambda}
\end{align}
associated to the slow variables $r \mapsto \ii \lambda \nabla_k$ and $k \mapsto \hat{k}$ (multiplication by $k$) defines an equivariant operator on $\Zak \Hil_0$ (\cf Appendix~\ref{appendix:PsiDOs} and references therein). Note that while the bi-anisotropic tensor $\chi$ is absent in \cite{DeNittis_Lein:adiabatic_periodic_Maxwell_PsiDO:2013}, the results there immediately generalize: in case $\chi \neq 0$, the modulation $S(r) = \tau^{-1}(r)$ is scalar and a quick computation yields $\Msymb_{\lambda} = S^{-1} \Weyl \Mper(\, \cdot \,) \Weyl S^{-1} = \tau^2 \, \Mper(\, \cdot \,)$ agrees with \eqref{setup:eqn:symbol_Maxwellian} after setting $\tau_{\eps} = \tau_{\mu}$. 
\begin{remark}[Notation used here compared to \cite{DeNittis_Lein:adiabatic_periodic_Maxwell_PsiDO:2013,DeNittis_Lein:sapt_photonic_crystals:2013}]
	In an attempt to unburden the notation, we deviate from our earlier works. For instance, $M_{\lambda}$ as given by equation~\eqref{setup:eqn:Maxwell_operator} is a selfadjoint operator on $\Hil_{\lambda}$, and corresponds to $\mathbf{M}_{\lambda}$ in \cite{DeNittis_Lein:adiabatic_periodic_Maxwell_PsiDO:2013,DeNittis_Lein:sapt_photonic_crystals:2013}, while $\Msymb_{\lambda}$ from \eqref{setup:eqn:symbol_Maxwellian} above refers to the same symbol as in \cite[Corollary~4.3]{DeNittis_Lein:adiabatic_periodic_Maxwell_PsiDO:2013}. 
\end{remark}
%
% subsubsection the_maxwell_operator_as_a_psi_do (end)
% subsection materials_with_real_material_weights (end)

\subsection{Materials with complex material weights} % (fold)
\label{setup:complex}
When the material weights are complex, additional considerations are necessary to connect mathematics and physics. While the details are somewhat tedious, they are crucial for the definition of \emph{physically meaningful} Maxwell equations in gyrotropic media. Later on in Section~\ref{setup:reduction_omega_g_0} we show how to do away with most of the notational baggage by writing $(\mathbf{E},\mathbf{H}) = \Re \Psi$ as the real part of a complex wave. Essentially, one has two choices: 
\begin{enumerate}[(i)]
	\item Keep Maxwell's equations as given by equations~\eqref{intro:eqn:Maxwell}, and accept that the solution $\bigl ( \mathbf{E}(t) , \mathbf{H}(t) \bigr )$ acquires a non-vanishing imaginary part even if the initial condition is real. 
	\item Alternatively, we modify equations~\eqref{intro:eqn:Maxwell} so as to ensure that solutions $\bigl ( \mathbf{E}(t) , \mathbf{H}(t) \bigr )$ remain real if the initial conditions are. 
\end{enumerate}
These two approaches lead to distinct physical predictions: according to our considerations in \cite{DeNittis_Lein:symmetries_Maxwell:2014} the presence of a chiral-type symmetry in approach~(i) predicts the existence of counter propagating edge modes in gyrotropic two-dimensional photonic crystals — in contradiction to what was observed in experiment \cite{Wang_et_al:unidirectional_backscattering_photonic_crystal:2009}. Nevertheless, approach~(i) was widely used \emph{implicitly} to describe materials with complex weights \cite[Section~7]{DeNittis_Lein:symmetries_electromagnetism:2016}. 

Approach~(ii) yields physically meaningful equations by baking a particle-hole symmetry akin to \eqref{setup:eqn:real_weights_C_PH_symmetry} into the model. While this seems \emph{ad hoc}, the equations we propose can be derived from the linear Maxwell equations (see \cite{DeNittis_Lein:symmetries_electromagnetism:2016} and references therein). 

The essential ingredient is that real electromagnetic fields 
\begin{align*}
	(\mathbf{E},\mathbf{H}) = \tfrac{1}{2} \bigl ( \Psi_+ + \Psi_- \bigr ) = \tfrac{1}{2} \bigl ( \Psi_+ + \overline{\Psi_+} \bigr )
\end{align*}
are necessarily a linear combination of \emph{complex} waves that come in \emph{positive-negative frequency pairs}. It turns out that this symmetry is preserved even when the weights $W_{+,\lambda} = W_{\lambda}$ are \emph{complex} (\ie the medium is \emph{gyrotropic}), because negative frequency complex waves are subjected to the complex conjugate weights $W_{-,\lambda} = \overline{W_{+,\lambda}}$. Put another way, there are \emph{two} sets of Maxwell equations of the form~\eqref{intro:eqn:Maxwell}, one for \emph{complex} waves for $\omega > 0$, the other for $\omega < 0$ with complex conjugate weights, namely 
\begin{subequations}
	\label{setup:eqn:generic_Maxwell_equations}
	\begin{align}
		W_{\pm,\lambda} \, \partial_t \Psi_{\pm}(t) &= \left (
		\begin{smallmatrix}
			0 & + \nabla^{\times} \\
			- \nabla^{\times} & 0 \\
		\end{smallmatrix}
		\right ) \, \Psi_{\pm}(t) 
		,
		&& \mbox{(dynamical eqns.)}
		\label{setup:eqn:generic_Maxwell_equations:dynamics}
		\\
		\mathrm{Div} \, W_{\pm,\lambda} \Psi_{\pm}(t) &= 0 
		. 
		&& \mbox{(no sources eqns.)}
		\label{setup:eqn:generic_Maxwell_equations:no_sources}
	\end{align}
\end{subequations}
This is a \emph{bona fide} extension of equations~\eqref{intro:eqn:Maxwell}, because if $W_+ = W_-$ then these two sets of equations coincide. The Maxwell operator 
\begin{align}
	M_{\lambda} := M_{+,\lambda} \oplus M_{-,\lambda} := \Mext_{+,\lambda} \big \vert_{\ran P_{+,\lambda}} \oplus \Mext_{-,\lambda} \big \vert_{\ran P_{-,\lambda}}
	\label{setup:eqn:generic_Maxwell_operator}
\end{align}
which enters the Schrödinger-type equation 
\begin{align}
	\ii \partial_t \Psi(t) = M_{\lambda} \Psi(t) 
	, 
	\qquad \qquad 
	\Psi(0) = \Phi \in \Hil_{\lambda} 
	, 
	\label{setup:eqn:generic_Maxwell_Schroedinger_equation}
\end{align}
is then the direct sum of positive/negative frequency contributions 
\begin{align}
	\Mext_{\pm,\lambda} := W_{\pm,\lambda} \, \Rot 
	\label{setup:eqn:extended_Maxwell_operator}
\end{align}
restricted to positive/negative frequency subspaces which are the ranges of the projections 
\begin{align*}
	P_{\pm,\lambda} := 1_{(0,+\infty)}(\pm \Mext_{\pm,\lambda}) 
\end{align*}
defined by functional calculus. The relevant Hilbert space 
\begin{align}
	\Hil_{\lambda} :\negmedspace &= \Hil_{+,\lambda} \oplus \Hil_{-,\lambda} 
	:= \ran P_{+,\lambda} \oplus \ran P_{-,\lambda} 
	\notag \\
	&\subset L^2_{W_{+,\lambda}}(\R^3,\C^6) \oplus L^2_{W_{-,\lambda}}(\R^3,\C^6) 
	\label{setup:eqn:gyrotropic_Hil_lambda}
\end{align}
is the direct sum of positive and negative frequency subspaces which inherit the scalar products from the suitably weighted $L^2$-spaces. Note that the condition $\omega \neq 0$ automatically implies that the fields are \emph{transversal}, \ie they satisfy the divergence free conditions~\eqref{setup:eqn:generic_Maxwell_equations:no_sources}. 

If we endow \eqref{setup:eqn:generic_Maxwell_operator} with the domain 
\begin{align}
	\mathcal{D}(M_{\lambda}) := \bigl ( P_{+,\lambda} \, \mathcal{D}(\Rot) \bigr ) \oplus \bigl ( P_{-,\lambda} \, \mathcal{D}(\Rot) \bigr )
	, 
	\label{setup:eqn:domain_gyrotropic_M_lambda}
\end{align}
then $M_{\lambda}$ defines a selfadjoint operator on $\Hil_{\lambda}$. 
\begin{definition}[Maxwell operator for gyrotropic media]\label{setup:defn:gyrotropic_Maxwell_operator}
	Suppose the modulated material weights $W_+ = W$ satisfy Assumption~\ref{setup:assumption:modulated_weights}. Then the Maxwell operator for gyrotropic media is given by \eqref{setup:eqn:generic_Maxwell_operator}, endowed with the domain~\eqref{setup:eqn:domain_gyrotropic_M_lambda}, and considered on the Hilbert space~\eqref{setup:eqn:gyrotropic_Hil_lambda}. 
\end{definition}
Baked into the construction is an even particle-hole symmetry given by 
\begin{align*}
	K = \sigma_1 \otimes C : \bigl ( \Psi_+ , \Psi_- \bigr ) \mapsto \bigl ( \overline{\Psi_-} , \overline{\Psi_+} \bigr ) 
\end{align*}
since this antiunitary operator anticommutes with $M_{\lambda}$, 
\begin{align}
	K \, M_{\lambda} \, K = - M_{\lambda}
	. 
	\label{setup:eqn:K_particle_hole_symmetry_M}
\end{align}
The presence of the symmetry $K$ means we still have a frequency band pairing~\eqref{setup:eqn:band_pairing}, and that $\e^{- \ii t M_{\lambda}}$ commutes with $\Re_K := \tfrac{1}{2} \bigl ( \id + K \bigr )$. On a physical level $K$ still translates complex conjugation of fields; this gives rise to a systematic identifcation of \emph{real, transversal} electromagnetic fields with complex fields via 
\begin{align}
	L^2(\R^3,\R^6) \ni 
	(\mathbf{E},\mathbf{H}) = \tfrac{1}{2} \bigl ( \Psi_+ + C \Psi_+ \bigr ) 
	\longleftrightarrow 
	(\Psi_+,C \Psi_+) = \Psi_{(\mathbf{E},\mathbf{H})} 
	\in \Hil_{\lambda} 
	, 
	\label{setup:eqn:canonical_identification_double}
\end{align}
and thus, \eqref{setup:eqn:K_particle_hole_symmetry_M} still implies that real fields remain real under the time evolution. This identification rests on the following 
\begin{lemma}
	Suppose Assumption~\ref{setup:assumption:material_weights} on the material weights holds true. Then the identification of the space $L^2_{\mathrm{trans}}(\R^3,\R^6)$ of real, transversal electromagnetic fields (\ie those satisfying~\eqref{setup:eqn:generic_Maxwell_equations:no_sources}) with 
	\begin{align}
		\mathrm{Eig}(K,+1) :\negmedspace &= \bigl \{ \Psi = (\Psi_+,\Psi_-) \in \Hil_{\lambda} \; \; \vert \; \; K \Psi = + \Psi \bigr \} 
		\notag \\
		&= \bigl \{ (\Psi_+ , C \Psi_+) \; \; \vert \; \; \Psi_+ \in \ran P_{+,\lambda} \bigr \} 
		\label{setup:eqn:eigenspace_K_real_fields}
	\end{align}
	via equation~\eqref{setup:eqn:canonical_identification_double} is a vector space isomorphism. 
\end{lemma}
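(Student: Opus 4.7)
The plan is to construct mutually inverse $\R$-linear maps between the two spaces:
\[
\Phi(\mathbf{E},\mathbf{H}) := \bigl ( 2 P_{+,\lambda}(\mathbf{E},\mathbf{H}) \,,\, 2 P_{-,\lambda}(\mathbf{E},\mathbf{H}) \bigr ) , \qquad \Phi^{-1}(\Psi_+, C\Psi_+) := \tfrac{1}{2}(\Psi_+ + C\Psi_+) = \Re \Psi_+ .
\]
Both maps are manifestly $\R$-linear, so the task reduces to verifying (a) the alternative description of $\mathrm{Eig}(K,+1)$ given in \eqref{setup:eqn:eigenspace_K_real_fields}, (b) well-definedness of each map, and (c) that the two compositions are the identity.

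The key ingredient for (a) is the intertwining relation $C \, \Mext_{+,\lambda} \, C = - \Mext_{-,\lambda}$, which follows from $C \Rot C = - \Rot$ together with $\overline{W_{+,\lambda}} = W_{-,\lambda}$. Via the functional calculus under antilinear conjugation, this yields $C P_{+,\lambda} = P_{-,\lambda} C$, so writing $K(\Psi_+,\Psi_-) = (C\Psi_-, C\Psi_+)$ the eigenvalue equation $K\Psi = + \Psi$ collapses (using $C^2 = \id$) to the single constraint $\Psi_- = C \Psi_+$, with the compatibility $C \Psi_+ \in \ran P_{-,\lambda}$ then automatic. For (b), well-definedness of $\Phi$ uses $C(\mathbf{E},\mathbf{H}) = (\mathbf{E},\mathbf{H})$ to rewrite the second component as $2 P_{-,\lambda}(\mathbf{E},\mathbf{H}) = C \bigl ( 2 P_{+,\lambda}(\mathbf{E},\mathbf{H}) \bigr ) = C\Psi_+$; reality of $\Re\Psi_+$ is immediate, and transversality of $\Re\Psi_+$ in the sense of \eqref{setup:eqn:generic_Maxwell_equations:no_sources} follows from $\mathrm{Div}(W_{+,\lambda}\Psi_+) = 0$ together with its complex conjugate $\mathrm{Div}(W_{-,\lambda}C\Psi_+) = 0$. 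Finally for (c), $\Phi^{-1}\circ\Phi(\mathbf{E},\mathbf{H}) = P_{+,\lambda}(\mathbf{E},\mathbf{H}) + P_{-,\lambda}(\mathbf{E},\mathbf{H}) = (\mathbf{E},\mathbf{H})$ by the spectral decomposition of a transversal field (the zero-frequency part vanishing precisely by transversality), while $\Phi \circ \Phi^{-1}(\Psi_+, C\Psi_+) = (\Psi_+, C\Psi_+)$ once one uses $P_{+,\lambda}\Psi_+ = \Psi_+$ and spectral disjointness $P_{+,\lambda}(C\Psi_+) = 0$ for $C\Psi_+ \in \ran P_{-,\lambda}$.

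The main obstacle is the careful bookkeeping between the two distinct weighted Hilbert spaces $L^2_{W_{\pm,\lambda}}$ linked by the antilinear intertwiner $C$: the transversality verification for $\Phi^{-1}$, the spectral-sum identity $P_{+,\lambda}(\mathbf{E},\mathbf{H}) + P_{-,\lambda}(\mathbf{E},\mathbf{H}) = (\mathbf{E},\mathbf{H})$ for real transversal fields, and the spectral-disjointness step all require treating fields simultaneously in both spaces and exploiting $CP_{+,\lambda}C = P_{-,\lambda}$ to patch together the two otherwise-distinct spectral decompositions of $\Mext_{+,\lambda}$ and $\Mext_{-,\lambda}$.
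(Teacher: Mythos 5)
Your argument is fine for non-gyrotropic media, and there it coincides with the paper's (the paper explicitly says the real-weight case is easy and rests on $C \, P_{+,\lambda} \, C = P_{-,\lambda}$ and $P_{+,\lambda} \, P_{-,\lambda} = 0$ from functional calculus). The gap is in the gyrotropic case $W_+ \neq W_-$, which is the case the lemma is actually there to handle. Both of the load-bearing steps in your part (c) — the resolution of the identity $P_{+,\lambda}(\mathbf{E},\mathbf{H}) + P_{-,\lambda}(\mathbf{E},\mathbf{H}) = (\mathbf{E},\mathbf{H})$ ``by the spectral decomposition of a transversal field,'' and the ``spectral disjointness'' $P_{+,\lambda} \, P_{-,\lambda} = 0$ — presuppose that $P_{+,\lambda}$ and $P_{-,\lambda}$ are complementary spectral projections of a \emph{single} selfadjoint operator. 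They are not: $P_{+,\lambda} = 1_{(0,+\infty)}(\Mext_{+,\lambda})$ and $P_{-,\lambda} = 1_{(0,+\infty)}(-\Mext_{-,\lambda})$ are built from two \emph{different} operators when $W_+ \neq W_-$, so neither identity is automatic. The intertwining relation $C \, P_{+,\lambda} \, C = P_{-,\lambda}$, which you invoke in your closing paragraph to ``patch together the two spectral decompositions,'' only tells you the two projections are antiunitarily conjugate; it cannot by itself exclude a nontrivial intersection $\ran P_{+,\lambda} \cap \ran P_{-,\lambda}$, nor does it make the two ranges span the transversal subspace.

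The paper's proof supplies precisely the argument you are missing for injectivity: if $\Re \Psi_+ = 0$ (equivalently $\Psi_+ = C\Psi_+$ after multiplying by $\ii$), then $\Psi_+ \in \ran P_{+,\lambda} \cap \ran P_{-,\lambda}$, and one rules this out by a \emph{sign} argument on the quadratic form $\bscpro{\Psi_{\pm}}{\Rot \, \Psi_{\pm}}_{L^2(\R^3,\C^6)}$, which is $\geq 0$ on $\ran P_{+,\lambda}$ and $\leq 0$ on $\ran P_{-,\lambda}$ (after a frequency cut-off to handle domain issues); a vector in both ranges therefore satisfies $\bscpro{\Psi_+}{\Rot \Psi_+} = 0$, forcing $\Psi_+ \in \ker \Rot$, and $\ker \Rot$ consists of longitudinal gradient fields which do not meet $\ran P_{\pm,\lambda}$, so $\Psi_+ = 0$. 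You would need to reproduce this (or an equivalent) argument to justify your ``spectral disjointness'' step, and you would need a separate argument for the surjectivity claim $2 \, \Re \, P_{+,\lambda}(\mathbf{E},\mathbf{H}) = (\mathbf{E},\mathbf{H})$, which your appeal to a joint spectral decomposition does not deliver in the gyrotropic setting.
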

\begin{proof}
	For non-gyrotropic media, the proof is easy, one has to use the symmetry $C \, P_{+,\lambda} \, C = P_{-,\lambda}$ between positive and negative frequency projections as well as $P_{+,\lambda} \, P_{-,\lambda} = 0$ which follows directly from functional calculus. 
	
	In case $W_- \neq W_+$, then $P_{\pm,\lambda}$ are defined via functional calculus for two \emph{different} operators, and there is no direct way to verify whether $P_{+,\lambda} \, P_{-,\lambda} = 0$. Nevertheless, the claim still holds true: First of all, the explicit characterization of the eigenspace of $K$ to $+1$ as given in \eqref{setup:eqn:eigenspace_K_real_fields} follows from direct computation. Moreover, we may view $\mathrm{Eig}(K,+1)$ as a vector space over $\R$ using the canonical identification of $\C \simeq \R^2$. Clearly, $C \, P_{+,\lambda} \, C = P_{-,\lambda}$ implies $\bigl ( P_{+,\lambda} (\mathbf{E},\mathbf{H}) , P_{-,\lambda} (\mathbf{E},\mathbf{H}) \bigr ) \in \mathrm{Eig}(K,+1)$. 
	
	To show that the association is bijective, we have to prove that $\Re \Psi_+ = 0$ implies $\Psi_+ = 0$ — only then are $\Re$ and $(\mathbf{E},\mathbf{H}) \mapsto \bigl ( P_{+,\lambda} (\mathbf{E},\mathbf{H}) , P_{-,\lambda} (\mathbf{E},\mathbf{H}) \bigr )$ inverses to one another. 
	Evidently, \emph{a priori} $\Re \ii \Psi_+ = 0$ just means that $\Psi_+$ is purely real (we choose to work with purely real vectors merely for notational convenience). We will show that the transversality condition does not allow for purely real or purely imaginary fields. Suppose $\Psi_+ = C \Psi_+ = \Psi_-$ is purely real. Then $\Psi_+ \in \ran P_{+,\lambda} \cap \ran P_{-,\lambda}$ lies in the intersection of positive and negative frequency spaces. 
	
	By definition, $M_{\pm,\lambda}$ are non-negative/non-positive operators, \ie for any $\Psi_{\pm}$ we deduce 
	\begin{align*}
		0 \leq \pm \, \bscpro{\Psi_{\pm}}{M_{\pm,\lambda} \, \Psi_{\pm}}_{L^2_{W_{\pm,\lambda}}(\R^3,\C^6)} 
		= \pm \, \bscpro{\Psi_{\pm}}{\Rot \, \Psi_{\pm}}_{L^2(\R^3,\C^6)}
		. 
	\end{align*}
	Even in case $\Psi_{\pm} \not\in P_{\pm,\lambda} \, \mathcal{D}(\Rot)$, we still retain the information on the difference in sign: we approximate $\Psi_{\pm}$ by cutting off high frequencies, $\Psi_{\pm,\varpi} := 1_{(0,\varpi)}(\pm \Mext_{\pm,\lambda}) \, \Psi_{\pm} \in P_{\pm,\lambda} \, \mathcal{D}(\Rot)$, $\varpi > 0$, and these cut off vectors are necessarily in the domain of $\Rot$. Evidently, for vectors $\Psi_{\pm} \not\in P_{\pm,\lambda} \, \mathcal{D}(\Rot)$ the expectation value $\bscpro{\Psi_{\pm,\varpi}}{\Rot \, \Psi_{\pm,\varpi}}_{L^2(\R^3,\C^6)}$ tends to $\pm \infty$ as $\varpi \rightarrow \infty$. So if $\Psi_+ = \Psi_-$, then necessarily $\bscpro{\Psi_+}{\Rot \, \Psi_+}_{L^2(\R^3,\C^6)} = 0$, and consequently, $\Psi_+ \in P_{\pm,\lambda} \, \mathcal{D}(\Rot)$ is in the domain and $\Psi_+ \perp \Rot \, \Psi_+$. There are two options now: either $\Psi_+ \in \ker \Rot$ or $\Psi_+ \in \ran 1_{(-\infty,0)}(\Mext_{+,\lambda})$. However, writing out the definition of $P_{+,\lambda}$ we get 
	\begin{align*}
		P_{+,\lambda} \, 1_{(-\infty,0)}(\Mext_{+,\lambda}) = 1_{(0,+\infty)}(\Mext_{+,\lambda}) \; 1_{(-\infty,0)}(\Mext_{+,\lambda}) = 0
		, 
	\end{align*}
	which means $\Psi_+ \in \ker \Rot$. But we know that $\ker \Rot$ consists of the (longitudinal) gradient fields \cite[Appendix~A.5]{DeNittis_Lein:adiabatic_periodic_Maxwell_PsiDO:2013}, and thus, $\ker \Rot \cap \ran P_{\pm,\lambda} = \{ 0 \}$. That means $\Psi_+ = 0$ and we have shown the lemma. 
\end{proof}
A second symmetry which will play an important role in our analysis later on is the \emph{grading} $\Gamma = \sigma_3 \otimes \id$ which commutes with $M_{\lambda}$, 
\begin{align*}
	\bigl [ M_{\lambda} , \Gamma \bigr ] = 0 
	. 
\end{align*}
Here, the eigenspaces of $\Gamma$ associated to $\pm 1$ are $\ran P_{\pm,\lambda}$, the spaces of complex waves with purely positive/negative frequencies. 

Moreover, just like in the non-gyrotropic case, the (real-valued) modulation $S(\lambda \hat{x})$, which acts “democratically” on the positive and negative frequency components, can be seen as a unitary operator $\Hil_{\lambda} \longrightarrow \Hil_0$ and connects $M_{\lambda}$ with $M_0$, 
\begin{align*}
	M_{\lambda} = S(\lambda \hat{x}) \, \bigl ( M_{+,0} \vert_{\ran P_{+,0}} \bigr ) \oplus S(\lambda \hat{x}) \, \bigl ( M_{-,0} \vert_{\ran P_{-,0}} \bigr ) 
	= S(\lambda \hat{x}) \, M_0 
	. 
\end{align*}
Lastly, let us mention that any and all of course applies also to \emph{non}-gyrotropic media, \ie if $W_+ = W_-$, then equations~\eqref{setup:eqn:generic_Maxwell_equations} coincides with equations~\eqref{intro:eqn:Maxwell}. 
% subsection materials_with_complex_material_weights (end)

\subsection{Real electromagnetic fields: reduction to complex waves with $\omega > 0$} % (fold)
\label{setup:reduction_omega_g_0}
The complexification of Maxwell's equations leads to a “doubling” of degrees of freedom (as $\C \simeq \R^2$). To undo this doubling, we will restrict our attention to \emph{complex} fields of \emph{positive} frequency. As a side benefit we are able to discard a lot of notational baggage. 

Any $\Psi_+ \in \Hil_{+,\lambda}$ defines \emph{two} real solustions, namely real and imaginary parts, 
\begin{subequations}
	\label{setup:eqn:real_states_linear_combination_even_bands}
	\begin{align}
		\bigl ( \mathbf{E}_{\Re} , \mathbf{H}_{\Re} \bigr ) &= \Re \, \Psi_+ = \tfrac{1}{2} \bigl ( \Psi_+ + C \Psi_+ \bigr ) 
		, 
		\\
		\bigl ( \mathbf{E}_{\Im} , \mathbf{H}_{\Im} \bigr ) &= \Im \, \Psi_+ = \tfrac{1}{\ii 2} \bigl ( \Psi_+ - C \Psi_+ \bigr ) 
		. 
	\end{align}
\end{subequations}
Then any linear combination of $\bigl ( \mathbf{E}_{\Re} , \mathbf{H}_{\Re} \bigr )$ and $\bigl ( \mathbf{E}_{\Im} , \mathbf{H}_{\Im} \bigr )$ with real coefficients $\alpha_{\Re} , \alpha_{\Im} \in \R$ can be expressed as the real part of a complex wave, 
\begin{align*}
	\alpha_{\Re} \, \bigl ( \mathbf{E}_{\Re} , \mathbf{H}_{\Re} \bigr ) + \alpha_{\Im} \, \bigl ( \mathbf{E}_{\Im} , \mathbf{H}_{\Im} \bigr ) = \Re \, \bigl ( (\alpha_{\Re} - \ii \alpha_{\Im}) \, \Psi_+ \bigr ) 
	. 
\end{align*}
This “phase locking” explains why all information is contained in $\Hil_{+,\lambda}$. Hence, we will identify the space of transversal \emph{real}-valued fields $L^2_{\mathrm{trans}}(\R^3,\R^6)$ with $\Hil_{+,\lambda}$ via 
\begin{align}
	P_{+,\lambda} : L^2_{\mathrm{trans}}(\R^3,\R^6) \longrightarrow \Hil_{+,\lambda} 
	\label{setup:eqn:identification_L2_real_L2_complex_omega_g_0}
\end{align}
and its inverse 
\begin{align*}
	\Re : \Hil_{+,\lambda} \longrightarrow L^2_{\mathrm{trans}}(\R^3,\R^6) 
	 . 
\end{align*}
What we are doing here is something completely standard and covered in every text book on electromagnetism, we are writing $(\mathbf{E},\mathbf{H})$ as the real part of a complex wave (see \eg \cite[equation~(6.128)]{Jackson:electrodynamics:1998}), 
\begin{align*}
	\bigl ( \mathbf{E}(t) , \mathbf{H}(t) \bigr ) &= \Re \, \bigl ( \e^{- \ii t M_{+,\lambda}} \, P_{+,\lambda} (\mathbf{E},\mathbf{H}) \bigr ) 
	. 
\end{align*}
From a mathematical perspective, this identification of vector spaces is a bit delicate, because $L^2_{\mathrm{trans}}(\R^3,\R^6)$ is a vector space over $\R$ while $\Hil_{+,\lambda}$ is a vector space over $\C$ — and tremendously useful because it allows us to adapt techniques initially developed for quantum mechanics to classical electromagnetism. For instance, the identification~\eqref{setup:eqn:identification_L2_real_L2_complex_omega_g_0} allows us to define and compute Chern classes (of vector bundles with \emph{complex} fibers) associated to (real) electromagnetic fields; We will explore this point in more detail in an upcoming publication \cite{DeNittis_Lein:symmetries_electromagnetism:2016}. 

Consequently, there is no need to work with direct sum spaces or distinguish between non-gyrotropic ($W = \overline{W}$) and gyrotropic ($W \neq \overline{W}$) materials. Instead, it suffices to study $M_{+,\lambda} = \Mext_{+,\lambda} \, P_{+,\lambda}$ on $\Hil_{+,\lambda}$; this operator in turn inherits all essential properties from $\Mext_{+,\lambda}$. For instance, $M_{+,\lambda} = \Op_{\lambda}^{S \Zak}(\Msymb_{\lambda}) \, P_{+,\lambda}$ can still be seen as a pseudodifferential operator associated to the symbol~\eqref{setup:eqn:symbol_Maxwellian} (where we set $M_0 = W_{+,0} \, \Rot$). While $P_{+,\lambda}$ is \emph{not} a pseudodifferential operator, we will only work with states associated to projections $\Pi_{+,\lambda}$ that \emph{are} $\Psi$DOs (\cf equation~\eqref{ray_optics:eqn:projection}) and satisfy 
\begin{align}
	\Pi_{+,\lambda} \, P_{+,\lambda} &= \Pi_{+,\lambda} + \order_{\norm{\cdot}}(\lambda^{\infty}) 
	= P_{+,\lambda} \, \Pi_{+,\lambda} + \order_{\norm{\cdot}}(\lambda^{\infty}) 
	. 
	\label{setup:eqn:relation_Pi_P}
\end{align}
That $P_{+,\lambda}$ is not a $\Psi$DO can be easily seen for the case $\lambda = 0$: then $k \mapsto M_{+,0}(k)$ is not analytic at $k = 0$ as we “lose” a two-dimensional subspace due to the ground state bands where $\omega_n(0) = 0$ (\cf discussion in \cite[Sections~3.2 and 3.3]{DeNittis_Lein:adiabatic_periodic_Maxwell_PsiDO:2013}). As explained in \cite[p.~230]{DeNittis_Lein:sapt_photonic_crystals:2013}, this essential difference between ground state and other bands is a physical one, and also here we will exclude ground state bands from our considerations (\cf Assumption~\ref{ray_optics:assumption:bands}). All other bands are well-behaved, though, and inherit the analyticity properties from $\Mext_{+,0}$, the operator discussed in \cite{DeNittis_Lein:adiabatic_periodic_Maxwell_PsiDO:2013,DeNittis_Lein:sapt_photonic_crystals:2013}. 
% subsection real_electromagnetic_fields_reduction_to_complex_waves_with_omega_0 (end)
% section setup (end)

\section{The meaning of the ray optics limit} % (fold)
\label{ray_optics}
It is tempting to think that now that we have recast Maxwell's equations in Schrödinger form, the ray optics limit is just a matter of applying your semiclassical technique of choice. To the extend of mathematics, this may be true, but from a physical perspective, we have to take the differences between quantum mechanics and classical electromagnetism into account. Most importantly, the notion of “physical observable” is different. While \emph{quantum} observables are usually represented by selfadjoint operators, in electromagnetism they are suitable functions of the fields 
\begin{align*}
	\mathcal{F} : L^2(\R^3,\C^6) \longrightarrow \R 
	. 
\end{align*}
Examples of observables in electromagnetism include the \emph{energy density} 
\begin{align*}
	e_x[(\mathbf{E},\mathbf{H})] := \tfrac{1}{2} \, \bigl ( \mathbf{E}(x) , \mathbf{H}(x) \bigr ) \cdot W^{-1}(x) \bigl ( \mathbf{E}(x) , \mathbf{H}(x) \bigr )
	, 
\end{align*}
given for non-gyrotropic media here, the \emph{Poynting vector} 
\begin{align*}
	\mathcal{S}_x[(\mathbf{E},\mathbf{H})] := \mathbf{E}(x) \times \mathbf{H}(x) 
	, 
\end{align*}
even the fields themselves, \eg $\delta_x^{\mathbf{E}}(\mathbf{E},\mathbf{H}) := \mathbf{E}(x)$, as well as their local averages. This is in stark contrast to quantum mechanics where the wave function itself cannot be observed. At the end of the day, electromagnetism, even if written in the language of quantum mechanics, is still a classical field theory. 

Secondly, just as not every quantum observable has a semiclassical limit, not every observable in electromagnetism has a ray optics limit -- at least not via Theorem~\ref{ray_optics:thm:ray_optics}. Our goal is to derive a ray optics limit for quadratic observables which in the simplest case ($W = \overline{W}$ and $f = \bar{f}$) are of the form 
\begin{align}
	\mathcal{F}[(\mathbf{E},\mathbf{H})] = \bscpro{(\mathbf{E},\mathbf{H}) \,}{\, \Op_{\lambda}^{S \Zak}(f) \, (\mathbf{E},\mathbf{H})}_{L^2_{W_{\lambda}}(\R^3,\C^6)}
	\label{ray_optics:eqn:quadratic_observables_simple}
\end{align}
where the pseudodifferential operator $\Op_{\lambda}^{S \Zak}(f)$ is defined via equation~\eqref{setup:eqn:M_lambda_PsiDO}. Put another way, we consider the ray topics limit in the “Heisenberg picture” where we compare $\mathcal{F} \bigl [ \bigl ( \mathbf{E}(t) , \mathbf{H}(t) \bigr ) \bigr ]$ to the expectation value where $f$ is replaced by a suitable time-evolved observable $f_{\lambda}(t)$. We will make this precise in what follows. 

This is in contrast to previous approaches which implement “semiclassical wave packet techniques”, \eg a multiscale WKB ansatz \cite{Allaire_Palombaro_Rauch:diffractive_Bloch_wave_packets_Maxwell:2012} or wave packet techniques \cite{Bliokh_Bliokh:Berry_curvature_optical_Magnus_effect:2004,Onoda_Murakami_Nagaosa:Hall_effect_light:2004,Bliokh_et_al:spin_orbit_light:2015}. We believe our approach gives additional insights, because we not only give an \emph{electromagnetic} observable, but also the relevant \emph{ray optics observable}. This establishes relations akin to that between the quantum angular momentum operator $\hat{L} = \hat{x} \times (- \ii \eps \nabla)$ and the classical angular momentum $L(x,p) = x \times p$. While it makes no sense to claim a quadratic electromagnetic observable $\mathcal{F}$ of the form~\eqref{ray_optics:eqn:quadratic_observables_simple} is the “quantization” of the ray optics observable $f$, the roles are analogous: $f$ is the ray optics limit of the electromagnetic observable $\mathcal{F}$. Note that $f$ need not be a scalar function.

\subsection{A class of observables with a ray optics limit} % (fold)
\label{ray_optics:observables}
The reality of electromagnetic fields places a consistency condition on electromagnetic observables $\mathcal{F}$. Consider quadratic observables 
\begin{align}
	\mathcal{F}[(\mathbf{E},\mathbf{H})] = \bscpro{\Psi_{(\mathbf{E},\mathbf{H})}}{F \, \Psi_{(\mathbf{E},\mathbf{H})}}_{\lambda}
	\label{ray_optics:eqn:generic_observable}
\end{align}
defined in terms of some generic bounded operator 
\begin{align*}
	F = \left (
	\begin{matrix}
		F_{++} & F_{+-} \\
		F_{-+} & F_{--} \\
	\end{matrix}
	\right ) \in \mathcal{B}(\Hil_{\lambda}) 
	. 
\end{align*}
Here, the splitting of $F$ corresponds to the positive/negative frequency splitting, \eg $F_{+-}$ maps $\ran P_{\lambda,-}$ to $\ran P_{+,\lambda}$. Many physically relevant observables are of this type (for details see \eg \cite[Section~3.3]{Bliokh_Bekshaev_Nori:dual_electromagnetism:2013} and Section~\ref{ray_optics:examples}). 

To make sure $\mathcal{F}$ preserves the reality of electromagnetic waves, we need to impose 
\begin{align}
	\mathcal{F} \bigl [ K (\mathbf{E},\mathbf{H}) \bigr ] &= \overline{\mathcal{F}[(\mathbf{E},\mathbf{H})]}
	. 
	\label{ray_optics:eqn:two_symmetry_conditions:K}
\end{align}
Additionally, quadratic observables with a ray optics limit must satisfy a second condition, namely
\begin{align}
	\mathcal{F} \bigl [ \Gamma (\mathbf{E},\mathbf{H}) \bigr ] &= \mathcal{F}[(\mathbf{E},\mathbf{H})]
	, 
	\label{ray_optics:eqn:two_symmetry_conditions:Gamma}
\end{align}
because this condition is necessary to be able to reduce a genuine multiband problem to a single-band problem. We will explain this point in more detail below after studying the consequences of the presence of these symmetries. 
\begin{lemma}\label{ray_optics:lem:form_of_F}
	Suppose $\mathcal{F}$ defined through a selfadjoint $F \in \mathcal{B}(\Hil_{\lambda})$ via equation~\eqref{ray_optics:eqn:generic_observable} satisfies \eqref{ray_optics:eqn:two_symmetry_conditions:K}–\eqref{ray_optics:eqn:two_symmetry_conditions:Gamma}. Then $F$ is of the form $F = F_+ \oplus F_- = F_{++} \oplus C \, F_{++} \, C$. 
\end{lemma}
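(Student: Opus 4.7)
The strategy is to turn the two quadratic-form conditions \eqref{ray_optics:eqn:two_symmetry_conditions:K}--\eqref{ray_optics:eqn:two_symmetry_conditions:Gamma} into operator identities for $F$ by polarization, and then to read off the block structure by writing out how $\Gamma$ and $K$ act in the positive/negative frequency splitting $\Hil_\lambda = \Hil_{+,\lambda} \oplus \Hil_{-,\lambda}$. Implicitly we interpret both conditions as holding for arbitrary $\Psi \in \Hil_\lambda$ (the natural extension of $\mathcal{F}$ by the same quadratic expression $\scpro{\Psi}{F\Psi}_\lambda$), so that polarization is available; this is the only non-algebraic input and should be uncontroversial.

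First, I would deal with the grading condition. Since $\Gamma = \sigma_3 \otimes \id$ is a selfadjoint unitary, one directly computes $\mathcal{F}[\Gamma \Psi] = \scpro{\Psi}{\Gamma F \Gamma \, \Psi}_\lambda$. Condition~\eqref{ray_optics:eqn:two_symmetry_conditions:Gamma} together with the complex polarization identity for sesquilinear forms forces $\Gamma F \Gamma = F$. In the $2\times 2$ block decomposition along $\Hil_{+,\lambda} \oplus \Hil_{-,\lambda}$, conjugation by $\Gamma$ flips the signs of the off-diagonal blocks, so $F_{+-} = 0 = F_{-+}$, and therefore $F = F_{++} \oplus F_{--}$.

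Next I would handle the $K$-symmetry. Using that $K$ is antiunitary with $K^2 = \id$, a short manipulation yields
\begin{align*}
\mathcal{F}[K\Psi] = \scpro{K\Psi}{F K \Psi}_\lambda = \overline{\scpro{\Psi}{K F K \, \Psi}_\lambda},
\end{align*}
so condition~\eqref{ray_optics:eqn:two_symmetry_conditions:K} reduces to $\scpro{\Psi}{K F K \, \Psi}_\lambda = \scpro{\Psi}{F \Psi}_\lambda$, and polarization again gives the operator identity $K F K = F$. Writing $K$ as the block matrix $\bigl(\begin{smallmatrix} 0 & C \\ C & 0 \end{smallmatrix}\bigr)$ and multiplying out, one finds
\begin{align*}
K F K = \begin{pmatrix} C F_{--} C & C F_{-+} C \\ C F_{+-} C & C F_{++} C \end{pmatrix}.
\end{align*}
Comparing the diagonal blocks with $F = F_{++} \oplus F_{--}$ from the first step yields $F_{--} = C F_{++} C$ (using $C^2 = \id$), while the off-diagonal identities are automatic. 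Setting $F_+ := F_{++}$ and $F_- := C F_{++} C$ gives the claimed form $F = F_+ \oplus F_-$.

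The argument is essentially linear algebra once polarization has been applied; no step strikes me as genuinely hard. The only place where care is needed is to justify that the conditions \eqref{ray_optics:eqn:two_symmetry_conditions:K}--\eqref{ray_optics:eqn:two_symmetry_conditions:Gamma} can be tested against \emph{all} $\Psi \in \Hil_\lambda$ rather than only against $\Psi_{(\mathbf{E},\mathbf{H})} \in \mathrm{Eig}(K,+1)$, since the block-structure conclusion requires arbitrary test vectors; this amounts to agreeing that the observable $\mathcal{F}$ is a bounded quadratic form on the full Hilbert space, which is already built into the hypothesis $F \in \mathcal{B}(\Hil_\lambda)$.
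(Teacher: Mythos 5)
Your proposal is correct and follows essentially the same route as the paper: the paper likewise translates the two conditions into the commutator identities $[\Gamma,F]=0$ and $[K,F]=0$, uses the first to kill the off-diagonal blocks and the second to obtain $F_{--}=C\,F_{++}\,C$. Your explicit remarks on polarization and on testing against all of $\Hil_{\lambda}$ merely spell out what the paper's ``quick computation'' leaves implicit.
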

\begin{proof}
	A quick computation reveals that equations~\eqref{ray_optics:eqn:two_symmetry_conditions:K} and \eqref{ray_optics:eqn:two_symmetry_conditions:Gamma} translate to 
	\begin{align*}
		[ K , F ] &= 0 
		, 
		\\
		[ \Gamma , F ] &= 0 
		. 
	\end{align*}
	Let us start with the second condition: Writing $\Gamma = \sigma_3 \otimes \id$ immediately yields that the offdiagonal elements $F_{+-} = 0 = F_{-+}$ must vanish. The first commutator condition then relates $F_{--} = C \, F_{++} \, C$ to $F_{++}$. 
\end{proof}
To appreciate the role symmetry~\eqref{ray_optics:eqn:two_symmetry_conditions:Gamma} plays, we have to go back to the reality of electromagnetic fields: As explained in Section~\ref{setup:reduction_omega_g_0}, transverse electromagnetic waves $(\mathbf{E},\mathbf{H})$ are always linear combinations of an \emph{even} number of bands. Hence, even in the simplest case, electromagnetic fields are associated to $\omega(k)$ and its symmetric twin $-\omega(-k)$ (\cf equations~\eqref{setup:eqn:band_pairing} and \eqref{setup:eqn:real_states_linear_combination_even_bands}). The presence of this symmetry condition eliminates $F_{+-}$ and $F_{-+}$ so that $\mathcal{F}$ can be reduced to an expectation value with respect to the positive frequency contribution $\Psi_+ = P_{+,\lambda} (\mathbf{E},\mathbf{H})$ only. 
\begin{proposition}\label{ray_optics:prop:rewriting_electromagnetic_observables}
	Suppose $\mathcal{F}$ is a quadratic observable of the form~\eqref{ray_optics:eqn:generic_observable} associated to a bounded operator $F = F_+ \oplus C \, F_+ \, C \in \mathcal{B}(\Hil_{\lambda})$. Then $\mathcal{F}$ can be expressed as 
	\begin{align}
		\mathcal{F}[(\mathbf{E},\mathbf{H})] &= 2 \, \Re \, \bscpro{P_{+,\lambda} \, (\mathbf{E},\mathbf{H}) \,}{\, F_+ \, P_{+,\lambda} \, (\mathbf{E},\mathbf{H})}_{L^2_{W_{+,\lambda}}(\R^3,\C^6)} 
		\label{ray_optics:eqn:rewriting_F_to_Fplus}
		\\
		&= 2 \, \Re \, \bscpro{P_{+,\lambda} \, (\mathbf{E},\mathbf{H}) \,}{\, P_{+,\lambda} \, F_+ \, P_{+,\lambda} \, (\mathbf{E},\mathbf{H})}_{\Hil_{+,\lambda}} 
		. 
		\notag
	\end{align}
\end{proposition}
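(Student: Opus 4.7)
The plan is to unpack the definitions, use the block structure of $F$ forced by Lemma~\ref{ray_optics:lem:form_of_F}, and exploit the interplay between the particle-hole symmetry $K$ and the gyrotropic identification $W_{-,\lambda} = \overline{W_{+,\lambda}}$ to collapse the two diagonal contributions into twice the real part of one of them.

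Concretely, I would first use the identification~\eqref{setup:eqn:canonical_identification_double} to write $\Psi_{(\mathbf{E},\mathbf{H})} = (\Psi_+, C \Psi_+)$ with $\Psi_+ := P_{+,\lambda} (\mathbf{E},\mathbf{H}) \in \Hil_{+,\lambda}$. Plugging this into \eqref{ray_optics:eqn:generic_observable} and using that $F = F_+ \oplus C F_+ C$ is block-diagonal with respect to the positive/negative frequency splitting, together with the fact that the scalar product on $\Hil_{\lambda} = \Hil_{+,\lambda} \oplus \Hil_{-,\lambda}$ is the direct sum of the weighted scalar products inherited from $L^2_{W_{\pm,\lambda}}(\R^3,\C^6)$, gives
\begin{align*}
	\mathcal{F}[(\mathbf{E},\mathbf{H})] = \bscpro{\Psi_+}{F_+ \Psi_+}_{L^2_{W_{+,\lambda}}} + \bscpro{C \Psi_+}{C F_+ \Psi_+}_{L^2_{W_{-,\lambda}}} .
\end{align*}

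The key step is then to show the second summand equals the complex conjugate of the first. Writing out the weighted scalar product componentwise and using $\overline{C \psi} = \psi$, $W_{-,\lambda}^{-1} = \overline{W_{+,\lambda}^{-1}}$ (a consequence of the gyrotropic extension described in Section~\ref{setup:complex}) and $C F_+ \Psi_+ = \overline{F_+ \Psi_+}$, the integrand in the negative-frequency term becomes the pointwise complex conjugate of the integrand in the positive-frequency term. Hence the second summand is $\overline{\scpro{\Psi_+}{F_+ \Psi_+}_{L^2_{W_{+,\lambda}}}}$, and adding the two yields $2\Re$ of a single scalar product, which is \eqref{ray_optics:eqn:rewriting_F_to_Fplus}.

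For the second equality I would use that $P_{+,\lambda}$ is an orthogonal projection on $L^2_{W_{+,\lambda}}(\R^3,\C^6)$ and that $\Psi_+ \in \ran P_{+,\lambda}$, so $\Psi_+ = P_{+,\lambda} \Psi_+$ and we may freely insert $P_{+,\lambda}$ on both sides of $F_+$ without changing the bracket; after doing so, the restriction of the $L^2_{W_{+,\lambda}}$ scalar product to $\ran P_{+,\lambda}$ coincides by definition with the $\Hil_{+,\lambda}$ scalar product. The main (and only real) obstacle is the bookkeeping around the antilinearity of $C$ combined with the fact that $W_{\pm,\lambda}$ are two different matrix-valued weights; once the identity $W_{-,\lambda}^{-1} = \overline{W_{+,\lambda}^{-1}}$ is used properly, the rest is mechanical.
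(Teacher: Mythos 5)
Your argument is correct and follows essentially the same route as the paper: decompose $\Psi_{(\mathbf{E},\mathbf{H})} = (\Psi_+, C\Psi_+)$, use the block-diagonal form $F = F_+ \oplus C F_+ C$ to split the expectation value over the direct-sum scalar product, and identify the negative-frequency summand as the complex conjugate of the positive-frequency one via the antilinearity of $C$ and $W_{-,\lambda} = \overline{W_{+,\lambda}}$. The paper compresses this into a "quick computation," and handles your final equality (inserting $P_{+,\lambda}$ and passing to the $\Hil_{+,\lambda}$ scalar product) in Remark~\ref{ray_optics:remark:definition_ray_optics_observable}, but the substance is identical.
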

\begin{proof}
	For brevity, let us define $\Psi_+ := P_{+,\lambda} (\mathbf{E},\mathbf{H})$ so that \emph{real} states are of the form $(\mathbf{E},\mathbf{H}) \simeq (\Psi_+ , C \Psi_+)$. From $C \, P_{+,\lambda} \, C = P_{\lambda,-}$ and a quick computation we obtain 
	\begin{align*}
		\mathcal{F}[(\mathbf{E},\mathbf{H})] &= \bscpro{(\mathbf{E},\mathbf{H}) \,}{\, \bigl ( F_+ \oplus C \, F_+ \, C \bigr ) (\mathbf{E},\mathbf{H})}_{\lambda} 
		\\
		% &= \bscpro{(\Psi_+ , C \Psi_+) \,}{\, \bigl ( F_+ \oplus C \, F_+ \, C \bigr ) \, (\Psi_+ , C \Psi_+)}_{\lambda}
		% \\
		&= \bscpro{\Psi_+}{F_+ \Psi_+}_{L^2_{W_{+,\lambda}}(\R^3,\C^6)} + \bscpro{C \Psi_+}{C \, F_+ \Psi_+}_{L^2_{W_{\lambda,-}}(\R^3,\C^6)}
		\\
		&= 2 \, \Re \, \bscpro{P_{+,\lambda} (\mathbf{E},\mathbf{H}) \,}{\, F_+ \, P_{+,\lambda} (\mathbf{E},\mathbf{H})}_{L^2_{W_{+,\lambda}}(\R^3,\C^6)}
		. 
	\end{align*}
\end{proof}
\begin{remark}
	If we only imposed \eqref{ray_optics:eqn:two_symmetry_conditions:K}, then another term 
	\begin{align*}
		2 \, \Re_K \, \bscpro{\Psi_+}{C \, F_{-+} \Psi_+}_{L^2_{W_{+,\lambda}}(\R^3,\C^6)} = 2 \, \Re_K \, \bscpro{\Psi_+}{F_{+-} \, C \Psi_+}_{L^2_{W_{+,\lambda}}(\R^3,\C^6)}
	\end{align*}
	would appear that mixed positive and negative frequency contributions. While this term is still an expectation value, it is taken with respect to an \emph{anti}linear operator $C \, F_{-+} = F_{+-} \, C$. Perhaps an Egorov-type theorem can still be established in this case, but because complex conjugation $C$ only appears to either the left or the right, we do not see an easy way to translate this to the level of symbols as in \cite[Lemma~5]{DeNittis_Lein:sapt_photonic_crystals:2013}. 
\end{remark}
This reduction to positive frequencies allows us to give a simple definition of the relevant observables that holds for both, the non-gyrotropic \emph{and} gyrotropic case. More importantly, it reduces ray optics from a \emph{bona fide} multiband problem to a single band problem. 
\begin{definition}[Quadratic observables]\label{ray_optics:defn:admissible_observables}
	Suppose the electromagnetic observable 
	\begin{align}
		\mathcal{F}[(\mathbf{E},\mathbf{H})] 
		:\negmedspace &= \mathbb{E}_f[(\mathbf{E},\mathbf{H})] 
		\notag \\
		:\negmedspace &= 2 \, \Re \, \bscpro{P_{+,\lambda} (\mathbf{E},\mathbf{H}) \,}{\, \Op_{\lambda}^{S \Zak}(f) \, P_{+,\lambda} (\mathbf{E},\mathbf{H})}_{L^2_{W_{+,\lambda}}(\R^3,\C^6)}
		\label{ray_optics:eqn:observable_as_expectation_value}
	\end{align}
	is defined in terms of a $\Psi$DO associated to a function $f = f^*$ (\cf equation~\eqref{setup:eqn:M_lambda_PsiDO}). 
	\begin{enumerate}[(i)]
		\item We call $\mathcal{F}$ \emph{scalar} if $f \equiv f \otimes \id_{\HperT}$ and $f \in \Cont^{\infty}_{\mathrm{b}}(\R^6,\C)$ are periodic in $k$. 
		\item We call $\mathcal{F}$ \emph{non-scalar} if $f \in \Cont^{\infty}_{\mathrm{b}} \bigl ( \R^6 , \mathcal{B}(\HperT) \bigr )$ is an operator-valued function satisfying the equivariance condition~\eqref{setup:eqn:equivariance}. 
	\end{enumerate}
\end{definition}
The assumptions on $f$ ensure that $\Op_{\lambda}^{S \Zak}(f)$ defines a bounded selfadjoint operator on $L^2_{W_{+,\lambda}}(\R^3,\C^6)$ (\cf Section~\ref{setup:real:Maxwell_operator_PsiDO} and Appendix~\ref{appendix:PsiDOs} for details). 
\begin{remark}\label{ray_optics:remark:definition_ray_optics_observable}
	Note that in the definition of quadratic observables~\eqref{ray_optics:eqn:observable_as_expectation_value} we have used the scalar product on $L^2_{W_{+,\lambda}}(\R^3,\C^6)$ rather than $\Hil_{+,\lambda}$, because $\Op_{\lambda}^{S \Zak}(f)$ defines a bounded operator on $L^2_{W_{+,\lambda}}(\R^3,\C^6)$ so that 
	\begin{align*}
		&\bscpro{P_{+,\lambda} (\mathbf{E},\mathbf{H}) \,}{\, P_{+,\lambda} \, \Op_{\lambda}^{S \Zak}(f) \, P_{+,\lambda} (\mathbf{E},\mathbf{H})}_{\Hil_{+,\lambda}} =
		\\
		&\qquad \qquad \qquad 
		= \bscpro{P_{+,\lambda} (\mathbf{E},\mathbf{H}) \,}{\, \Op_{\lambda}^{S \Zak}(f) \, P_{+,\lambda} (\mathbf{E},\mathbf{H})}_{L^2_{W_{+,\lambda}}(\R^3,\C^6)} 
	\end{align*}
	holds. This allows us to omit one projection and simplify many arguments. 
\end{remark}
%
% subsection a_class_of_observables_with_a_ray_optics_limit (end)

\subsection{A semiclassical approach to the ray optics limit} % (fold)
\label{ray_optics:main_theorem}
Now we come to the main course of the paper, a rigorous justification of ray optics via a semiclassical limit for observables of the form~\eqref{ray_optics:eqn:observable_as_expectation_value}. Roughly speaking, if the initial state is associated to a single, non-degenerate frequency band $\omega(k) \in \sigma \bigl ( \Mper(k) \bigr )$ which does not intersect or merge with other bands, the dispersion relation which enters in the ray optics equations is no longer $c \abs{k}$ but proportional to the frequency band function $\omega(k)$ to leading order. Let us be more precise and enumerate the conditions on the frequency band: 
\begin{assumption}\label{ray_optics:assumption:bands}
	Suppose $\omega(k)$ is a non-degenerate frequency band of $\Mper(k)$ with Bloch function $\varphi(k)$ that is isolated in the sense that 
	\begin{align}
		\inf_{k \in \BZ} \mathrm{dist} \Bigl ( \bigl \{ \omega(k) \bigr \} , \sigma \bigl ( \Mper(k) \bigr ) \setminus \bigl \{ \omega(k) \bigr \} \Bigr ) > 0 
		, 
		\label{ray_optics:eqn:gap_condition}
	\end{align}
	and that is \emph{not} a ground state band, \ie $\lim_{k \to 0} \omega_n(k) \neq 0$. 
\end{assumption}
Next, we need to clarify what we mean when we say “states associated to the frequency band $\omega$” in case the photonic crystal is perturbed. The perturbation deforms the subspace $\Zak^{-1} \ran \pi_0(\hat{k})$ where $\pi_0(k) := \sopro{\varphi(k)}{\varphi(k)}$, and the range of the \emph{superadiabatic projection}
\begin{align}
	\Pi_{\lambda} = \Op_{\lambda}^{S \Zak}(\pi_{\lambda}) + \order_{\norm{\cdot}}(\lambda^{\infty}) 
	= \Op_{\lambda}^{S \Zak}(\pi_0) + \order_{\norm{\cdot}}(\lambda)
	\label{ray_optics:eqn:projection}
\end{align}
takes its places. Note that even though $k \mapsto \varphi(k)$ need not be continuous (this is the case if the band is not topologically trivial), the gap condition ensures that $k \mapsto \pi_0(k)$ is necessarily analytic. Apart from being an orthogonal projection, its other defining property is 
\begin{align}
	\bigl [ M_{\lambda} , \Pi_{\lambda} \bigr ] = \order_{\norm{\cdot}}(\lambda^{\infty})
	. 
	\label{ray_optics:eqn:defining_relation_Pi_lambda}
\end{align}
The existence and explicit construction of $\Pi_{\lambda}$ relies on the gap condition~\eqref{ray_optics:eqn:gap_condition} and pseudo\-differential techniques (\cf \cite[Proposition~1]{DeNittis_Lein:sapt_photonic_crystals:2013}). Equation~\eqref{ray_optics:eqn:defining_relation_Pi_lambda} also implies that $\ran \Pi_{\lambda}$ is left invariant by the dynamics up to errors of arbitrarily small order in $\lambda$. 

For electromagnetic waves from the almost invariant subspace $\ran \Pi_{\lambda}$, we are going to rigorously justify the \emph{analog of the semiclassical limit for the Bloch electron} where the periodic structure of the ambient medium modifies the dispersion relation to 
\begin{align}
	\Omega &= \Omega_0 + \lambda \, \Omega_1 
	:= \tau^2 \, \omega - \lambda \, \tau^2 \, \mathcal{P} \cdot \nabla_r \ln \tfrac{\tau_{\eps}}{\tau_{\mu}}
	. 
	\label{ray_optics:eqn:ray_optics_Maxwellian}
\end{align}
To leading order, the frequency band function $\omega(k)$ is modulated by the perturbation $\tau(r)^2 = \tau_{\eps}(r) \, \tau_{\mu}(r)$, \ie the frequency depends on the change in the speed of light. The $\order(\lambda)$ term is sensitive to the details of the modulation, and only appears if $\eps$ and $\mu$ are modulated differently; moreover, it includes the imaginary part of the complex Poynting vector, 
% FIXME add reference: Im S --> discussed in \cite[Section 6.9, p. 265]{Jackson}
%
\begin{align*}
	\mathcal{P}(k) := \Im \int_{\T^3} \dd y \, \overline{\varphi^E(k,y)} \times \varphi^H(k,y) 
	. 
\end{align*}
Note that this expression works for both types of perturbations, in case $\chi \neq 0$ we take $\tau_{\eps} = \tau_{\mu}$ and the last term vanishes. In addition to $\Omega_1$, there are also other $\order(\lambda)$ contributions to the ray optics equations as we will see below. 

Then quadratic observables $\mathcal{F}[(\mathbf{E},\mathbf{H})] = \mathbb{E}_f[(\mathbf{E},\mathbf{H})]$ from Definition~\ref{ray_optics:defn:admissible_observables} come in \emph{pairs}, and it is natural to compare $\mathcal{F} \bigl [ \bigl ( \mathbf{E}(t) , \mathbf{H}(t) \bigr ) \bigr ]$ to 
\begin{align*}
	\mathbb{E}_{f_{\lambda} \circ \Phi^{\lambda}_t} \bigl [ \bigl ( \mathbf{E}(0) , \mathbf{H}(0) \bigr ) \bigr ]
\end{align*}
where $f_{\lambda} = f_0 + \lambda \, f_1$ is connected to $f$ and transported along the \emph{ray optics flow} $\Phi^{\lambda}$. 
\begin{theorem}[The ray optics limit]\label{ray_optics:thm:ray_optics}
	Suppose Assumptions~\ref{setup:assumption:modulated_weights} and \ref{ray_optics:assumption:bands} hold true, and $\mathcal{F}$ is a quadratic observable associated to $f$ as in Definition~\ref{ray_optics:defn:admissible_observables}. Then we have a ray optics limit in the following sense: 
	\begin{enumerate}[(i)]
		\item For \emph{scalar} observables where $f \in \Cont^{\infty}_{\mathrm{b}}(\R^6,\C)$, the ray optics flow $\Phi^{\lambda}$ associated to the hamiltonian equations 
		\begin{align}
			\left (
			\begin{matrix}
				\dot{r} \\
				\dot{k} \\
			\end{matrix}
			\right ) &= \left (
			\begin{matrix}
				- \lambda \, \Xi & + \id \\
				- \id & 0 \\
			\end{matrix}
			\right ) \left (
			\begin{matrix}
				\nabla_r \Omega \\
				\nabla_k \Omega \\
			\end{matrix}
			\right )
			, 
			\label{ray_optics:eqn:ray_optics_equations_scalar}
		\end{align}
		which include the Berry curvature tensor $\Xi := \bigl ( \nabla_k \times \ii \sscpro{\varphi}{\nabla_k \varphi}_{\HperT} \bigr )^{\times}$ as part of the sympletic form, approximates the full light dynamics for $P_{+,\lambda} (\mathbf{E},\mathbf{H}) \in \ran P_{+,\lambda} \,  \Pi_{\lambda}$ and bounded times in the sense 
		\begin{align}
			\mathcal{F} \bigl [ \bigl ( \mathbf{E}(t),\mathbf{H}(t) \bigr ) \bigr ] &= \mathcal{F} \bigl [ \bigl ( \e^{- \ii \frac{t}{\lambda} M_{\lambda}} \Psi_{(\mathbf{E},\mathbf{H})} \bigr ) \bigr ]
			\notag \\
			&
			= \mathbb{E}_{f \circ \Phi^{\lambda}_t}[(\mathbf{E},\mathbf{H})]
			% = \scpro{\Psi \, }{ \, \Op_{\lambda}^{S \Zak} \bigl ( f \circ \Phi^{\lambda}_t \bigr ) \Psi}_{\lambda}
			+ \order(\lambda^2)
			\label{ray_optics:eqn:ray_optics_limit_scalar}
			. 
		\end{align}
		\item For \emph{non-scalar} observables where $f \in \Cont^{\infty}_{\mathrm{b}} \bigl ( \R^6 , \mathcal{B}(\HperT) \bigr )$, the ray optics flow $\Phi^{\lambda}$ associated to the hamiltonian equations 
		\begin{align}
			\left (
			\begin{matrix}
				\dot{r} \\
				\dot{k} \\
			\end{matrix}
			\right ) &= \left (
			\begin{matrix}
				0 & + \id \\
				- \id & 0 \\
			\end{matrix}
			\right ) \left (
			\begin{matrix}
				\nabla_r \Omega \\
				\nabla_k \Omega \\
			\end{matrix}
			\right )
			\label{ray_optics:eqn:ray_optics_equations_non-scalar}
		\end{align}
		approximates the full light dynamics for $P_{+,\lambda} (\mathbf{E},\mathbf{H}) \in \ran P_{+,\lambda} \,  \Pi_{\lambda}$ and bounded times in the sense 
		\begin{align}
			\mathcal{F} \bigl [ \bigl ( \mathbf{E}(t),\mathbf{H}(t) \bigr ) \bigr ] &= \mathcal{F} \bigl [ \bigl ( \e^{- \ii \frac{t}{\lambda} M_{\lambda}} \Psi_{(\mathbf{E},\mathbf{H})} \bigr ) \bigr ]
			\notag \\
			&
			= \mathbb{E}_{f_{\mathrm{ro}} \circ \Phi^{\lambda}_t}[(\mathbf{E},\mathbf{H})]
			% \scpro{\Psi \, }{ \, \Op_{\lambda}^{S \Zak} \bigl ( f_{\mathrm{ro}} \circ \Phi^{\lambda}_t \bigr ) \Psi}_{\lambda}
			+ \order(\lambda^2)
			\label{ray_optics:eqn:ray_optics_limit_non-scalar}
		\end{align}
		where we have transported the modified non-scalar observable $f_{\mathrm{ro}} := \pi_{\lambda} \Weyl f \Weyl \pi_{\lambda} + \order(\lambda^2)$, defined in terms of $\pi_{\lambda}$ from equation~\eqref{ray_optics:eqn:projection}, along the flow $\Phi^{\lambda}$. Put another way, for non-scalar observables the effect of the projection does not modify the symplectic form of the ray optics equations but changes the function $f$ which defines the quadratic observable $\mathcal{F}$. 
	\end{enumerate}
\end{theorem}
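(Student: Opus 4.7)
The strategy is to combine the reduction to positive frequencies from Proposition~\ref{ray_optics:prop:rewriting_electromagnetic_observables} with the superadiabatic projection $\Pi_{\lambda}$ from \eqref{ray_optics:eqn:projection}, and then to apply an Egorov-type theorem for pseudodifferential operators with operator-valued symbols in the spirit of Stiepan and Teufel~\cite{Stiepan_Teufel:semiclassics_op_valued_symbols:2012}. Throughout, we work on the positive-frequency subspace where $M_{+,\lambda}=\Op_{\lambda}^{S\Zak}(\Msymb_\lambda)\,P_{+,\lambda}$ is accessible via $\Psi$DO calculus on $\Zak\Hil_0$.

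Using Definition~\ref{ray_optics:defn:admissible_observables}, unitarity of $\e^{-\ii(t/\lambda)M_{+,\lambda}}$ on $L^2_{W_{+,\lambda}}(\R^3,\C^6)$, and the assumption $P_{+,\lambda}(\mathbf{E},\mathbf{H})\in\ran P_{+,\lambda}\,\Pi_\lambda$, I would first pass to the Heisenberg picture and then insert copies of $\Pi_\lambda$ around the time-evolved observable. Because $[M_\lambda,\Pi_\lambda]=\order_{\norm{\cdot}}(\lambda^\infty)$ by \eqref{ray_optics:eqn:defining_relation_Pi_lambda} and $\Pi_\lambda\,P_{+,\lambda}=\Pi_\lambda+\order_{\norm{\cdot}}(\lambda^\infty)$ by \eqref{setup:eqn:relation_Pi_P}, the projections commute with the propagator up to super-polynomial errors on any bounded macroscopic time scale $t$. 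The problem is thereby reduced to approximating
\begin{align*}
  \Pi_\lambda\,\e^{+\ii(t/\lambda)M_\lambda}\,\Op_\lambda^{S\Zak}(f)\,\e^{-\ii(t/\lambda)M_\lambda}\,\Pi_\lambda
\end{align*}
by a $\Psi$DO whose symbol is the transport of (a modification of) $f$ along a classical flow yet to be identified.

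The next step is to compute the effective Hamiltonian on $\ran\Pi_\lambda$ through a two-term Moyal expansion of $h_\lambda:=\pi_0\Weyl\Msymb_\lambda\Weyl\pi_0$. Using the intertwining $\pi_0\,\Mper\,\pi_0=\omega\,\pi_0$ and the antisymmetric part $\tfrac{\ii\lambda}{2}\{\pi_0,\Msymb_0\}$ of the Moyal bracket, one identifies the effective dispersion $\Omega=\Omega_0+\lambda\,\Omega_1$ as in \eqref{ray_optics:eqn:ray_optics_Maxwellian}, where $\Omega_1$ arises by pairing the Moyal derivative of $\pi_0$ with $\Msymb_1$ and recognising the complex Poynting vector $\mathcal{P}(k)$ in the fibre inner product against the Bloch function $\varphi(k)$. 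The Hamiltonian flow of $\Omega$ with respect to the Berry-twisted symplectic form then defines $\Phi^\lambda$, and standard ODE theory ensures its existence and regularity on any bounded time interval. A Duhamel argument against the transported symbol, combined with the $\Psi$DO expansion of $[h_\lambda,\pi_\lambda\Weyl f\Weyl\pi_\lambda]_{\Weyl}$ up to $\order(\lambda^2)$, then yields the claim: in the \textbf{scalar} case $f$ commutes with $\pi_0$, so all Berry contributions are absorbed into the symplectic form and the transported symbol is simply $f\circ\Phi^\lambda_t$, giving \eqref{ray_optics:eqn:ray_optics_limit_scalar}; in the \textbf{non-scalar} case $\pi_\lambda\Weyl f\Weyl\pi_\lambda$ must be transported along the canonical flow and the Berry correction is hidden inside $f_{\mathrm{ro}}$ itself, giving \eqref{ray_optics:eqn:ray_optics_limit_non-scalar}.

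The principal technical obstacle will be extending the Stiepan--Teufel framework beyond the scalar-symbol setting to the equivariant operator-valued class $\Cont^{\infty}_{\mathrm{b}}\bigl(\R^6,\mathcal{B}(\HperT)\bigr)$ while preserving the uniform $\order(\lambda^2)$ remainder in the Egorov estimate. Closely related is the careful bookkeeping of the $\order(\lambda)$ corrections coming from the subleading symbol $\Msymb_1$ and from the non-trivial $k$-dependence of $\pi_0$, which is where the Berry curvature $\Xi$ and the Poynting-vector piece of $\Omega_1$ enter. A secondary concern is verifying that the gap condition~\eqref{ray_optics:eqn:gap_condition}, via Assumption~\ref{ray_optics:assumption:bands}, supplies enough analyticity for the Riesz-projection construction of $\pi_\lambda$ to produce an equivariant symbol of the required Hörmander regularity.
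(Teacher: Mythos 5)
Your proposal follows essentially the same route as the paper: insert the superadiabatic projection $\Pi_{+,\lambda}$ using \eqref{setup:eqn:relation_Pi_P} and \eqref{ray_optics:eqn:defining_relation_Pi_lambda}, replace $\Msymb_{\lambda}$ by the effective dispersion $\Omega$ inside the projected Moyal commutator, and close with a Duhamel argument — this is precisely the content of Propositions~\ref{ray_optics:prop:ray_optics_limit} and \ref{ray_optics:prop:ray_optics_limit_non-scalar}, which adapt the Stiepan–Teufel machinery, with the same scalar/non-scalar dichotomy (Berry curvature in the symplectic form versus the modified observable $f_{\mathrm{ro}}$). The one imprecision worth correcting is that the effective Hamiltonian is characterized by $\pi_{\lambda} \Weyl \bigl ( \Omega - \Msymb_{\lambda} \bigr ) \Weyl \pi_{\lambda} = \order(\lambda^2)$ with the \emph{full} superadiabatic symbol $\pi_{\lambda}$ as in \eqref{ray_optics:eqn:defining_relation_M_ro}, not via $\pi_0 \Weyl \Msymb_{\lambda} \Weyl \pi_0$, since only the former guarantees the $\order(\lambda^2)$ remainder.
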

The proof rests on an Egorov theorem; we will postpone this to Section~\ref{egorov}. 
\begin{remark}
	We can immediately extend Theorem~\ref{ray_optics:thm:ray_optics} to quadratic observables of the type 
	\begin{align}
		\mathcal{F}[\rho] = 2 \, \Re \, \trace_{L^2_{W_{+,\lambda}}(\R^3,\C^6)} \Bigl ( \rho \; \Op_{\lambda}^{S \Zak}(f) \Bigr )
		\label{ray_optics:eqn:observable_as_trace}
	\end{align}
	where $\rho = \rho^* \geq 0$ is a suitable trace-class operator that describes a mixture of different electromagnetic states. Although this generalization is physically relevant and meaningful, from a mathematical point of view the passage from \eqref{ray_optics:eqn:observable_as_expectation_value} to  \eqref{ray_optics:eqn:observable_as_trace} is totally trivial. 
\end{remark}
For scalar quadratic observables we can express \eqref{ray_optics:eqn:ray_optics_limit_scalar} as a phase space average of $f$ with respect to the Wigner transform. 
\begin{corollary}\label{ray_optics:cor:phase_space_average}
	Suppose we are in the setting of Theorem~\ref{ray_optics:thm:ray_optics}~(i) where $f$ is scalar. Then writing $\Psi_+ := P_{+,\lambda} (\mathbf{E},\mathbf{H})$ we can recast equation~\eqref{ray_optics:eqn:ray_optics_limit_scalar} as 
	\begin{align*}
		\mathcal{F} \bigl [ \bigl ( \mathbf{E}(t),\mathbf{H}(t) \bigr ) \bigr ] &= \int_{\R^3} \dd r \int_{\BZ} \dd k \, f \circ \Phi^{\lambda}_t(r,k) \; \mathrm{w}_{S(\lambda \hat{x}) \Psi_+}^{\mathrm{red}}(r,k) 
		\, + \\
		&\qquad \qquad
		+ \order \bigl ( \lambda^2 \, \snorm{(\mathbf{E},\mathbf{H})}_{\Hil_{\lambda}}^2 \bigr )
	\end{align*}
	where $\mathrm{w}_{S(\lambda \hat{x}) \Psi_+}^{\mathrm{red}}(r,k)$ denotes the reduced Wigner transform of $S(\lambda \hat{x}) \Psi \in L^2(\R^3,\C^6)$ given by 
	\begin{align}
		\mathrm{w}_{\Psi_+}^{\mathrm{red}}(r,k) :\negmedspace &= \frac{1}{(2 \pi \lambda)^3} \sum_{\gamma^* \in \Gamma^*} \int_{\R^3} \dd z \, \e^{+ \ii (k + \gamma^*) \cdot z} \, 
		\cdot \notag \\
		&\qquad \qquad \qquad \cdot 
		\Psi_+ \bigl ( \tfrac{r}{\lambda} + \tfrac{z}{2} \bigr ) \cdot W_+^{-1} \bigl ( \tfrac{r}{\lambda} + \tfrac{z}{2} \bigr ) \Psi_+ \bigl ( \tfrac{r}{\lambda} - \tfrac{z}{2} \bigr )  
		\label{ray_optics:eqn:Wigner_transform}
		. 
	\end{align}
\end{corollary}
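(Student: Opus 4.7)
The starting point is Theorem~\ref{ray_optics:thm:ray_optics}(i), which supplies
\begin{align*}
\mathcal{F}\bigl[\bigl(\mathbf{E}(t),\mathbf{H}(t)\bigr)\bigr] = 2\,\Re\,\bscpro{\Psi_+}{\Op_{\lambda}^{S\Zak}\bigl(f\circ\Phi_t^{\lambda}\bigr)\,\Psi_+}_{L^2_{W_{+,\lambda}}(\R^3,\C^6)} + \order(\lambda^2),
\end{align*}
with $\Psi_+ = P_{+,\lambda}(\mathbf{E},\mathbf{H})$. My plan is to unpack the pseudodifferential operator on the right-hand side, transport the unitary dressing $S(\lambda\hat{x})$ and $\Zak$ onto the state, and finally exploit the $\Gamma^*$-periodicity of the scalar symbol $f$ to fold the resulting phase-space integral over the Brillouin zone.

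The first step uses the second form of equation~\eqref{setup:eqn:M_lambda_PsiDO}, $\Op_{\lambda}^{S\Zak}(g) = S(\lambda\hat{x})^{-1}\,\Zak^{-1}\,\Op_{\lambda}(g)\,\Zak\,S(\lambda\hat{x})$. Since $S(\lambda\hat{x}):\Hil_{+,\lambda}\to\Hil_{+,0}$ and $\Zak$ are unitary intertwiners of the weighted scalar products, the expectation value rewrites as
\begin{align*}
\bscpro{\Zak\,S(\lambda\hat{x})\Psi_+}{\,\Op_{\lambda}\bigl(f\circ\Phi_t^{\lambda}\bigr)\,\Zak\,S(\lambda\hat{x})\Psi_+}_{\Zak\Hil_{+,0}},
\end{align*}
after which only the unperturbed periodic weight $W_+^{-1}$ enters the inner product.

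The second step is to apply the standard Weyl-quantization identity for expectation values: for any symbol $g$ that is trivial in the fiber coordinate (and for scalar $f$ this is the case because $f\equiv f\otimes\id_{\HperT}$), one has
\begin{align*}
\bscpro{\Phi}{\Op_{\lambda}(g)\,\Phi}_{\Zak\Hil_{+,0}} = \int_{\R^3}\!dr\int_{\R^3}\!dk\;g(r,k)\,\mathrm{w}_{\Phi}(r,k),
\end{align*}
where $\mathrm{w}_{\Phi}$ is the Wigner function in the $(r\leftrightarrow \ii\lambda\nabla_k,\,k\leftrightarrow\hat{k})$ convention used in \eqref{setup:eqn:definition_Op_lambda}. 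Inverting the Zak transform reintroduces the $W_+^{-1}$-weighted pairing at microscopic positions $r/\lambda\pm z/2$ and produces precisely the unfolded form of \eqref{ray_optics:eqn:Wigner_transform} with the sum over $\gamma^*\in\Gamma^*$ replaced by a single integral over $k\in\R^3$.

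The final step exploits the periodicity of $f\circ\Phi_t^{\lambda}$ in $k$: folding the $\R^3$-integral onto $\BZ$ absorbs the $\Gamma^*$-sum into the integrand, yielding the reduced Wigner transform $\mathrm{w}_{S(\lambda\hat{x})\Psi_+}^{\mathrm{red}}$ as defined in \eqref{ray_optics:eqn:Wigner_transform}. Since $\mathrm{w}^{\mathrm{red}}$ is real and $f$ is real-valued, taking $2\,\Re$ produces only a factor of $2$, which matches the normalization in the statement. The one point requiring genuine care is the simultaneous bookkeeping of (a) the rescaling $r\mapsto r/\lambda$ that arises from passing from the slow-variable Weyl calculus to ordinary position space, and (b) the correct placement of the weight $W_+^{-1}$ between the two copies of $S(\lambda\hat{x})\Psi_+$; up to this routine but delicate computation the rest of the argument is a direct translation of the standard “expectation value equals phase-space integral against the Wigner function” identity to the Zak setting, together with the Brillouin-zone folding.
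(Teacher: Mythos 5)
Your proposal follows essentially the same route as the paper's proof in Appendix~\ref{appendix:reduced_Wigner_transform}: transfer the dressing $S(\lambda\hat{x})$ and $\Zak$ onto the state via \eqref{setup:eqn:M_lambda_PsiDO}, use the intertwining of $\Op_{\lambda}$ with the ordinary Weyl quantization for scalar $\Gamma^*$-periodic symbols, write the expectation value as a phase-space integral against the $W_+^{-1}$-weighted Wigner function, and fold over the Brillouin zone by periodicity of $f$. The argument is correct and the points you flag as delicate (the $r/\lambda$ rescaling and the placement of $W_+^{-1}$) are precisely the ones the paper singles out as the purpose of its appendix.
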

We postpone a discussion of two observables with a ray optics limit, the local averages of the energy density and the Poynting vector, to Section~\ref{ray_optics:examples} below.

For the reader's convenience we have included a proof (\cf Appendix~\ref{appendix:reduced_Wigner_transform}) whose main purpose is to show how to correctly include the material weights $W_+^{-1}$ in the zone-folded Wigner transform. Note that even though equation~\ref{ray_optics:eqn:Wigner_transform} seems to be asymmetric, $W_+^{-1}$ is evaluated at $\nicefrac{r}{\lambda} + \nicefrac{z}{2}$, the equivalent expression~\eqref{appendix:reduced_Wigner_transform:eqn:symmetric_formula_reduced_Wigner_transform} for $\mathrm{w}_{\Psi_+}^{\mathrm{red}}$ is perfectly symmetric and allays those doubts. 
\begin{remark}
	Scalar observables have a somewhat simpler ray optics limit, because here a geometric correction, the Berry curvature, enters in the symplectic form. For non-scalar observables though, the Weyl commutator $\bigl [ f , \pi_{\lambda} \bigr ]_{\Weyl} = \order(1)$ is not small as 
	\begin{align}
		f \, \pi_0 &\neq \pi_0 \, f 
		\label{ray_optics:eqn:ray_optics_f_assumption} 
	\end{align}
	holds. Consequently, instead of getting an $\order(\lambda)$ correction in the symplectic form, we need to replace the function $f$ by 
	\begin{align}
		f_{\mathrm{ro}} &= \pi_{\lambda} \Weyl f \Weyl \pi_{\lambda} + \order(\lambda^2) 
		\notag \\
		&= \scpro{\varphi}{f \varphi}_{\HperT} \, \pi_0 + \lambda \, \Bigl ( \pi_1 \, f \, \pi_0 + \pi_0 \, f \, \pi_1 - \tfrac{\ii}{2} \bigl \{ \pi_0 , f \bigr \} \, \pi_0 - \tfrac{\ii}{2} \pi_0 \, \bigl \{ f , \pi_0 \bigr \} \Bigr ) 
		. 
		\label{ray_optics:eqn:f_ro_explicit}
	\end{align}
	Note that the term proportional to 
	\begin{align}
		\bigl \{ \pi_0 \vert f \vert \pi_0 \bigr \} := \sum_{j = 1}^3 \Bigl ( \partial_{k_j} \pi_0 \; f \; \partial_{r_j} \pi_0 - \partial_{r_j} \pi_0 \; f \; \partial_{k_j} \pi_0 \Bigr ) = 0 
		\label{ray_optics:eqn:Poisson_bracket_of_three}
	\end{align}
	vanishes identically as $\pi_0$ is a function of $k$ only. The crucial idea of Stiepan and Teufel \cite{Stiepan_Teufel:semiclassics_op_valued_symbols:2012} was to avoid including this $\order(\lambda)$ term by modifying the symplectic form. However, their derivation relies on $\bigl [ f , \pi_0 \bigr ] = 0$ \emph{and} 
	\begin{align*}
		\bigl \{ \pi_0 , f \bigr \} = - \bigl \{ f , \pi_0 \bigr \}
		. 
	\end{align*}
\end{remark}
\begin{lemma}\label{ray_optics:cor:explicit_expressions_pi_1_f_ro}
	The explicit expression for $\pi_{\lambda}$ in Theorem~\ref{ray_optics:thm:ray_optics} is $\pi_{\lambda} = \sopro{\varphi}{\varphi} + \lambda \, \pi_1 + \order(\lambda^2)$ with 
	\begin{align*}
		\pi_1 &=  \Bigl ( - \tfrac{\ii}{2} \nabla_{r} \ln \tfrac{\tau_{\eps}}{\tau_{\mu}} \cdot \sopro{\varphi}{ \Sigma \varphi} + \ii \, \nabla_{r} \ln \tau\cdot \sopro{\varphi}{\nabla_{k} \varphi} \, \bigl ( \Mper(\, \cdot \,) + \omega \bigr ) \Bigr ) \, R_{\omega}^{\perp}
		\; + 
		\\
		&\qquad +
		\, \mathrm{adjoint}
	\end{align*}
	where $R_{\omega}^{\perp}(k) := \pi_0^{\perp} \, \bigl ( \Mper(k) - \omega(k) \bigr )^{-1} \, \pi_0^{\perp}$ is the reduced resolvent of the periodic Maxwell operator. The modified ray optics observable $f_{\mathrm{ro}}$ computes to 
	\begin{align*}
		% CHANGED insert simplified case for generic f
		f_{\mathrm{ro}} = \bscpro{\varphi}{f \varphi}_{\HperT} \, \pi_0 &+ \lambda \, \Bigl ( \bscpro{\varphi \,}{\, \bigl [ f , \pi_1 \bigr ]_+ \varphi}_{\HperT} - \tfrac{\ii}{2} \, \bscpro{\varphi}{\bigl [\nabla_{k} \pi_0 \, , \nabla_{r} f] \varphi}_{\HperT} \Bigr ) \, \pi_0 
		\Bigr .  \\
		& + 
		\lambda \, \Bigl ( 
		\bscpro{\varphi}{f \varphi}_{\HperT} \, \pi_1 
		- \tfrac{\ii}{2} \bscpro{\varphi \,}{\nabla_r f \varphi}_{\HperT} \cdot \bigl [ \nabla_{k} \pi_0 \, , \, \pi_0 \bigr ]
		\Bigr ) 		
	\end{align*}
	where $\bigl [ f , \pi_1 \bigr ]_+ := f \, \pi_1 + \pi_1 \, f$ and $\bigl [ \nabla_k \pi_0 \, , \nabla_r f \bigr ] := \nabla_k \pi_0\cdot \nabla_r f \,  - \nabla_r f \cdot \nabla_k \pi_0$. We point out that 
	unlike the first two terms in the above for $f_{\mathrm{ro}}$ which are proportional to $\pi_0$, the third term is completely offdiagonal with respect to $\pi_0$.
\end{lemma}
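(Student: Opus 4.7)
The two formulas are obtained by a direct order-by-order expansion of the defining equations for $\pi_{\lambda}$, respectively of the Moyal triple product $\pi_{\lambda} \Weyl f \Weyl \pi_{\lambda}$. Throughout I would use the convention $a \Weyl b = a b - \tfrac{\ii \lambda}{2} \, \{a,b\} + \order(\lambda^2)$ with $\{a,b\} = \nabla_k a \cdot \nabla_r b - \nabla_r a \cdot \nabla_k b$ for operator-valued symbols, which is the convention already implicit in equation~\eqref{ray_optics:eqn:f_ro_explicit}.

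For $\pi_1$ my plan is to combine the two defining relations for the superadiabatic projection at first order in $\lambda$. From $\pi_{\lambda} \Weyl \pi_{\lambda} = \pi_{\lambda}$ I extract $\pi_0 \pi_1 + \pi_1 \pi_0 = \pi_1$; the Moyal correction $-\tfrac{\ii}{2} \, \{\pi_0,\pi_0\}$ vanishes because $\pi_0$ depends only on $k$. Sandwiching this identity with $\pi_0$ on both sides, resp.\ with $\pi_0^{\perp}$ on both sides, forces $\pi_0 \pi_1 \pi_0 = 0 = \pi_0^{\perp} \pi_1 \pi_0^{\perp}$, so $\pi_1$ is purely off-diagonal with respect to $\pi_0$ and, by selfadjointness, is determined by the single block $\pi_0^{\perp} \pi_1 \pi_0$. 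The order-$\lambda$ part of $[\Msymb_{\lambda},\pi_{\lambda}]_{\Weyl} = \order(\lambda^{\infty})$ reads $[\Msymb_0,\pi_1] = -[\Msymb_1,\pi_0] + \tfrac{\ii}{2} \bigl ( \{\Msymb_0,\pi_0\} - \{\pi_0,\Msymb_0\} \bigr )$. Sandwiching with $\pi_0^{\perp} \cdots \pi_0$ and using $\Msymb_0 = \tau^2 \, \Mper$ together with $\Mper \pi_0 = \omega \pi_0$ converts the left-hand side into $\tau^2 (\Mper - \omega) \, \pi_0^{\perp} \pi_1 \pi_0$, which I can invert via the reduced resolvent $R_{\omega}^{\perp}$. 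Substituting the explicit form of $\Msymb_1$ produces the $\Sigma$-contribution, while the two Poisson brackets $\{\Msymb_0,\pi_0\} = -2 \tau^2 \, (\nabla_r \ln \tau) \cdot \Mper (\nabla_k \pi_0)$ and $\{\pi_0,\Msymb_0\} = 2 \tau^2 \, (\nabla_r \ln \tau) \cdot (\nabla_k \pi_0) \, \Mper$ combine, after the $\pi_0^{\perp} \cdots \pi_0$ projection, into $\pi_0^{\perp} (\Mper + \omega)(\nabla_k \pi_0) \pi_0$. The identity $(\nabla_k \pi_0) \pi_0 = \pi_0^{\perp} \sopro{\nabla_k \varphi}{\varphi}$, immediate from $\pi_0 = \sopro{\varphi}{\varphi}$ and $\bscpro{\varphi}{\nabla_k \varphi}_{\HperT} = -\bscpro{\nabla_k \varphi}{\varphi}_{\HperT}$, then brings the right-hand side into the stated form; the other block $\pi_0 \pi_1 \pi_0^{\perp}$ is its adjoint.

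For $f_{\mathrm{ro}}$ I would take the expansion~\eqref{ray_optics:eqn:f_ro_explicit} as the starting point and simply decompose each order-$\lambda$ term into its $\pi_0$-diagonal and off-diagonal pieces. The zeroth-order term is handled by $\pi_0 \, X \, \pi_0 = \bscpro{\varphi}{X \varphi}_{\HperT} \, \pi_0$, which produces the leading $\bscpro{\varphi}{f \varphi}_{\HperT} \, \pi_0$. Inserting $\pi_1 = \pi_0^{\perp} \pi_1 \pi_0 + \pi_0 \pi_1 \pi_0^{\perp}$ into $\pi_1 f \pi_0 + \pi_0 f \pi_1$ separates it into the off-diagonal piece $\bscpro{\varphi}{f \varphi}_{\HperT} \, \pi_1$ and a diagonal piece $\pi_0 \, [\pi_1,f]_+ \, \pi_0 = \bscpro{\varphi}{[f,\pi_1]_+ \varphi}_{\HperT} \, \pi_0$. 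The Poisson-bracket contribution $-\tfrac{\ii}{2} \{\pi_0,f\} \pi_0 - \tfrac{\ii}{2} \pi_0 \{f,\pi_0\}$ splits the same way: its diagonal projection equals $-\tfrac{\ii}{2} \bscpro{\varphi}{[\nabla_k \pi_0,\nabla_r f] \varphi}_{\HperT} \, \pi_0$, while the off-diagonal remainder, after applying $\pi_0 (\nabla_k \pi_0) = (\nabla_k \pi_0) \pi_0^{\perp}$ to identify the two surviving blocks, collapses to $-\tfrac{\ii}{2} \bscpro{\varphi}{\nabla_r f \, \varphi}_{\HperT} \cdot \bigl ( \pi_0^{\perp} (\nabla_k \pi_0) \pi_0 - \pi_0 (\nabla_k \pi_0) \pi_0^{\perp} \bigr ) = -\tfrac{\ii}{2} \bscpro{\varphi}{\nabla_r f \, \varphi}_{\HperT} \cdot [\nabla_k \pi_0,\pi_0]$, as claimed.

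The main obstacle is purely the bookkeeping: the matrix-valued Moyal bracket is not symmetric in its arguments, so the order of factors relative to the projections $\pi_0$ and $\pi_0^{\perp}$ has to be tracked carefully throughout, and signs coming from the anti-selfadjointness of $\Sigma$ on $\HperT$ enter the interpretation of $\sopro{\varphi}{\Sigma \varphi}$.
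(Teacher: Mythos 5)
Your plan follows the paper's own computation in Appendix~\ref{appendix:computations} essentially step for step: the paper likewise obtains $\pi_1$ from the vanishing projection defect $\pi_0 \Weyl \pi_0 - \pi_0 = 0$ and from the first-order commutation defect inverted with the reduced resolvent $R_{\omega}^{\perp}$ (quoting the construction of \cite{PST:sapt:2002} where you re-derive the two defining relations $\pi_{\lambda} \Weyl \pi_{\lambda} = \pi_{\lambda}$ and $\bigl [ \Msymb_{\lambda} , \pi_{\lambda} \bigr ]_{\Weyl} = \order(\lambda^{\infty})$ by hand, which amounts to the same thing), and it then decomposes each first-order term of \eqref{ray_optics:eqn:f_ro_explicit} into its $\pi_0$-diagonal and off-diagonal blocks exactly as you describe, using $\pi_0 \, \partial_{k_j} \pi_0 \, \pi_0 = 0$ and $(\nabla_k \pi_0) \, \pi_0 = \pi_0^{\perp} \, (\nabla_k \pi_0) \, \pi_0$. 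The bookkeeping subtleties you flag are the same ones handled there — in particular the sign entering $\pi_0 \, \Sigma_j = \sopro{\varphi}{\Sigma_j^{*} \varphi}$, where you are right to note that $\Sigma_j$ is in fact \emph{anti}-selfadjoint with respect to $\scpro{\cdot}{\cdot}_{\HperT}$ (its expectation value $-2\ii\,\mathcal{P}_j$ is purely imaginary), a point on which the appendix's wording is looser than yours.
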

The interested reader may find the computation in Appendix~\ref{appendix:computations}.
\begin{remark}\label{ray_optics:rem:specialization_f_ro_selfadjoint}
	% CHANGED rewrite this remark (polish language) 
	Under certain circumstances, we can give more explicit expressions for $f_{\mathrm{ro}}$. Provided 
	\begin{align*}
		f &= \pi_0 \, f \, \pi_0 + \pi_0^{\perp} \, f \, \pi_0^{\perp}
		= \bscpro{\varphi}{f \varphi}_{\HperT} \, \pi_0 + \pi_0^{\perp} \, f \, \pi_0^{\perp}
		% \label{ray_optics:eqn:f_block_diagonal}
	\end{align*}
	is block-diagonal with respect to the decomposition induced by $\pi_0$, for example, we can simplify the term involving the anti-commutator $[f , \pi_1]_+$: Because we take the expectation value with respect to $\varphi$, only the block-diagonal part of the anticommutator actually contributes. So if $f$ is block-diagonal, the offdiagonal part of $\pi_1$ only contributes to the offdiagonal part of $[f , \pi_1]_+$, and therefore we can replace $\pi_1$ by $\pi_0 \, \pi_1 \, \pi_0$. 
	%
	%
	% \eqref{ray_optics:eqn:f_block_diagonal}
	% only the block-diagonal part of $\pi_1$ will contribute to $f_{\mathrm{ro}}$, because then the offdiagonal parts of the anti-commutator $[f , \pi_1]_+$ stem from the offdiagonal parts of $\pi_1$.
	In general, however, this is not true as $f$ need not be block-diagonal. 
	
	If $f$ takes values in the selfadjoint operators, the above expression for $f_{\mathrm{ro}}$ simplifies to 
	\begin{align*}
		% CHANGED insert simplified case for selfadjoint-operator-valued f
		f_{\mathrm{ro}} &= \bscpro{\varphi}{f \varphi}_{\HperT} \, \pi_0 
		+ \\
		&\qquad 
		+ \lambda \, \Bigl ( 
		2 \, \Re \bscpro{f \varphi \,}{\, \pi_1 \varphi}_{\HperT}
		- \bscpro{\varphi}{\nabla_r f \varphi}_{\HperT} \cdot \mathcal{A} 
		\Bigr . \\
		&\qquad \qquad \quad \Bigl . 
		- \, \Im \bscpro{\varphi}{\nabla_r f \cdot \nabla_k \varphi}_{\HperT}
		\Bigr ) \, \pi_0 
		+ \\
		&\qquad 
		+ \lambda \, \Bigl ( \bscpro{\varphi}{f \varphi}_{\HperT} \, \pi_1 
		+ \bscpro{\varphi}{\nabla_r f \varphi}_{\HperT} \cdot \bigl [ \nabla_k \pi_0 \, , \, \pi_0 \bigr ]
		\Bigr ) 
	\end{align*}
	where $\mathcal{A} := \ii \bscpro{\varphi}{\nabla_k \varphi}_{\HperT}$ is the vector associated with the Berry connection. Typically, the selfadjoint observables of interest are of the form $f = \rho \, W I$ where $\rho$ is a scalar function which localizes on a domain $\Lambda \subset \R^3$ and $I$ is a suitable $6 \times 6$ hermitian matrix. If we think of $\rho$ as a smoothened version of the characteristic function $1_{\Lambda}$, then $\nabla_r f(r) \approx n(r) \, W I$ where $n(r) \in \R^3$ is the external normal to $r \in \partial \Lambda$, and $n(r)=0$ whenever $r \not\in \partial \Lambda$. With this in mind, we can distinguish “bulk”-type contributions to $f_{\mathrm{ro}}$ that are proportional to $\rho$, 
	\begin{align*}
		\bscpro{\varphi}{I \varphi}_{L^2(\T^3,\C^6)} \, \pi_0 + \lambda \, 2 \, \Re \bscpro{ \varphi \,}{\,I \pi_1 \varphi}_{L^2(\T^3,\C^6)} \,\pi_0 +  \bscpro{\varphi}{I \varphi}_{L^2(\T^3,\C^6)} \, \pi_1
		, 
	\end{align*}
	and an $\order(\lambda)$ part of “boundary” type which is localized around $\partial \Lambda$, 
	\begin{align*}
		&- \bscpro{\varphi}{I \varphi}_{L^2(\T^3,\C^6)} \, \bigl( n \cdot \mathcal{A} \bigr ) \, \pi_0 - \bigl ( n \cdot \mathcal{A}^I \bigr )\, \pi_0 
		\, + \\
		&\qquad \qquad \qquad 
		+ \bscpro{\varphi}{I \varphi}_{L^2(\T^3,\C^6)} \, \bigl ( n \cdot \bigl [ \nabla_k \pi_0 \, , \, \pi_0 \bigr ] \bigr )
		,
	\end{align*}
	where $ \mathcal{A}^I := \Im \bscpro{\varphi}{I \nabla_k \varphi}_{L^2(\T^3,\C^6)}$. 
\end{remark}
%
% subsection the_ray_optics_limit_via_semiclassical_techniques (end)

\subsection{The ray optics limit for certain observables} % (fold)
\label{ray_optics:examples}
Our main result, Theorem~\ref{ray_optics:thm:ray_optics}, applies directly to a number of physical observables, and we will discuss the local field energy as well as the local average of the Poynting vector in detail. Other examples include local averages of the quadratic components of the fields, the components of the Maxwell-Minkowski stress tensor and Minkowski's electromagnetic momentum (see also \cite[Section~3.3]{Bliokh_Bekshaev_Nori:dual_electromagnetism:2013} for other \emph{in vacuo} observables). 

Throughout this subsection, we abbreviate $\Psi_+ := P_{+,\lambda} (\mathbf{E},\mathbf{H})$ and use $\Psi_+(t) = \e^{- \ii t M_{+,\lambda}} \, \Psi_+$.

\subsubsection{The local field energy} % (fold)
\label{ray_optics:examples:energy}
The local field energy is an example of a \emph{scalar} quadratic observable: while Egorov-type theorems do not allow one to infer information on the \emph{pointwise} behavior of the local energy density 
\begin{align*}
	e_x[(\mathbf{E},\mathbf{H})] := \tfrac{1}{2} \Psi_+(x) \cdot W_{+,\lambda}^{-1}(x) \Psi_+(x)
	, 
\end{align*}
it does apply to local averages. Pick any closed set $\Lambda \subset \R^3$ of positive Lebesgue measure. Next, we choose a smoothened characteristic function $\rho \in \Cont^{\infty}_{\mathrm{b}}(\R^3,\R)$, meaning $\rho \vert_{\Lambda} = 1$ and $\rho$ vanishes on $\R^3 \setminus \Lambda^{\delta}$ for some $\delta > 0$ where 
\begin{align*}
	\Lambda^{\delta} := \Bigl \{ r \in \R^3 \; \; \big \vert \; \; \mathrm{dist} (r,\Lambda) < \delta \Bigr \} 
\end{align*}
is a “thickened” version of the set $\Lambda$. Then 
\begin{align*}
	\mathcal{E}_{\rho}[(\mathbf{E},\mathbf{H})] :\negmedspace &= 2 \, \Re \int_{\R^3} \dd x \, \rho(\lambda x) \, e_x[(\mathbf{E},\mathbf{H})] 
	= \Re \, \bscpro{\Psi_+}{\rho(\lambda \hat{x}) \Psi_+}_{L^2_{W_{+,\lambda}}(\R^3,\C^6)} 
	\\
	&
	= \Re \, \bscpro{\Psi_+}{\Op_{\lambda}^{S \Zak}(\rho) \Psi_+}_{L^2_{W_{+,\lambda}}(\R^3,\C^6)} 
\end{align*}
is in good approximation the field energy contained inside of the stretched domain 
\begin{align*}
	\Lambda_{\lambda} := \bigl \{ x \in \R^3 \; \; \vert \; \; \lambda x \in \Lambda \bigr \} 
\end{align*}
provided the “thickness” $\delta$ of the transition layer where $\rho \rightarrow 0$ is small. With this proviso, we will call $\mathcal{E}_{\rho}$ the field energy localized in the volume $\Lambda_{\lambda}$. 

Clearly, $\rho$ defines the scalar, quadratic observable $\mathcal{E}_{\rho}$, and thus, Theorem~\ref{ray_optics:thm:ray_optics} and Corollary~\ref{ray_optics:cor:phase_space_average} apply: for $\Psi_+ \in \ran \Pi_{\lambda}$ we can approximate  
\begin{align}
	\mathcal{E}_{\rho} \bigl [ \bigl ( \mathbf{E}(t) , \mathbf{H}(t) \bigr ) \bigr ] &= \Re \, \bscpro{\Psi_+}{\Op_{\lambda}^{S \Zak} \bigl ( \rho \circ \Phi^{\lambda}_t \bigr ) \Psi_+}_{L^2_{W_{+,\lambda}}(\R^3,\C^6)} + \order(\lambda^2) 
	\label{ray_optics:eqn:ray_optics_limit_energy}
	\\
	&= \Re \int_{\R^3} \dd r \int_{\R^3} \dd k \, \rho \circ \Phi^{\lambda}_t(r,k) \; \mathrm{w}_{\Psi_+}(r,k) + \order(\lambda^2) 
	\notag 
	. 
\end{align}
with the help of the ray optics flow associated to \eqref{ray_optics:eqn:ray_optics_equations_scalar}. 
% subsubsection the_local_field_energy (end)

\subsubsection{The Poynting vector} % (fold)
\label{ray_optics:examples:poynting}
In the theory of electromagnetism the Poynting vector 
\begin{align*}
	\mathcal{S}_x[(\mathbf{E},\mathbf{H})] := \tfrac{1}{2} \, \overline{\psi_+^E(x)} \times \psi_+^H(x) 
\end{align*}
is proportional (up to a factor $c^2$) to the \emph{Abraham momentum density}. Indeed, it appears in the local energy conservation law (\cf \cite[equation~(38)]{Bergmann:gyrotropic_Maxwell:1982})
\begin{align}
	\partial_t e_x \bigl [ \bigl ( \mathbf{E}(t) , \mathbf{H}(t) \bigr ) \bigr ] + \nabla_x \cdot \mathcal{S}_x \bigl [ \bigl ( \mathbf{E}(t) , \mathbf{H}(t) \bigr ) \bigr ] = 0 
	\label{ray_optics:eqn:energy_momentum_conservation_law}
\end{align}
as the balancing term to the energy flux. Surprisingly, the three components of $\mathcal{S}$ are linked to what would be called the “current operator” in quantum mechanics, 
\begin{align}
	j_n :\negmedspace &= \tfrac{\ii}{\lambda} \bigl [ M_{\lambda} , \lambda \hat{x}_n \bigr ] 
	= S^{-2}(\lambda \hat{x}) \, W \, \left (
	\begin{matrix}
		0 & - e_n^{\times} \\
		+ e_n^{\times} & 0 \\
	\end{matrix}
	\right ) 
	% \notag \\
	% &= \Op_{\lambda}^{S \Zak} \left ( S^{-3} \, W \, \left (
	% %
	% \begin{matrix}
	% 	0 & - e_n^{\times} \\
	% 	+ e_n^{\times} & 0 \\
	% \end{matrix}
	% %
	% \right ) \, S \right )
	=: \Op_{\lambda}^{S \Zak}(s_n)
	\label{ray_optics:eqn:current_operator}
	, 
\end{align}
with 
\begin{align*}
	s_n := S^{-1} \, W \, \left (
	\begin{matrix}
		0 & - e_n^{\times} \\
		+ e_n^{\times} & 0 \\
	\end{matrix}
	\right ) \, S^{-1}
	, 
\end{align*}
and a quick computation reveals 
\begin{align*}
	2 \, \Re \, \int_{\R^3} \dd x \, \mathcal{S}_{x,n}[(\mathbf{E},\mathbf{H})] &= 2 \, \Re \, \bscpro{\Psi_+ \, }{ \, j_n \Psi_+}_{L^2_{W_{+,\lambda}}(\R^3,\C^6)} 
	\\
	&= \Re \, \bscpro{\Psi_+}{\Op_{\lambda}^{S \Zak}(s_n) \Psi_+}_{L^2_{W_{+,\lambda}}(\R^3,\C^6)} 
	. 
\end{align*}
As one can see right away, even when the perturbation is scalar $s_n$ defines a \emph{non-scalar observable}. 

There are in fact \emph{two} interesting quantities connected to the Poynting vector, the \emph{net flux} across $\partial \Lambda_{\lambda}$ as well as the local average of $\mathcal{S}$ across $\Lambda_{\lambda}$. The first can be accessed via $\mathcal{E}_{\rho}$ with the help of local energy conservation~\eqref{ray_optics:eqn:energy_momentum_conservation_law}: Taking the time-derivative of $\mathcal{E}_{\rho} \bigl [ \bigl ( \Psi_+(t) \bigr ) \bigr ]$ approximately yields the net momentum flux over the “surface” $\partial \Lambda_{\lambda}$, 
\begin{align*}
	\frac{\dd}{\dd t} \mathcal{E}_{\rho} \bigl [ \bigl ( \mathbf{E}(t) , \mathbf{H}(t) \bigr ) \bigr ] &= \Re \, \scpro{\Psi_+ \, }{ \, \frac{\dd}{\dd t} \Bigl ( \e^{+ \ii \frac{t}{\lambda} M_{\lambda}} \, \rho(\lambda \hat{x}) \, \e^{- \ii \frac{t}{\lambda} M_{\lambda}} \Bigr ) \Psi_+}_{L^2_{W_{+,\lambda}}(\R^3,\C^6)} 
	\\
	&= \Re \, \Bscpro{\Psi_+ \, }{ \, \nabla \rho \bigl ( \lambda \hat{x}(t) \bigr ) \cdot j(t) \, \Psi_+}_{L^2_{W_{+,\lambda}}(\R^3,\C^6)}
	, 
\end{align*}
because the support of the derivative $\mathrm{supp} \, \nabla \rho \subseteq \overline{\Lambda_{\lambda}^{\delta} \setminus \Lambda_{\lambda}}$ is contained in the “boundary layer” of $\Lambda_{\lambda}^{\delta}$ for $\delta$ sufficiently small which is a thickened version of the boundary $\partial \Lambda_{\lambda}$. 

Now the ray optics limit for \emph{scalar} observables applies to the energy contained in $\Lambda_{\lambda}$. If we assume that the time derivative of the error term in equation~\eqref{ray_optics:eqn:ray_optics_limit_energy} is \emph{still} of $\order(\lambda^2)$, then we can find a semiclassical expression for the net energy flow, 
\begin{align}
	\frac{\dd}{\dd t} \mathcal{E}_{\rho} \bigl [ \bigl ( \mathbf{E}(t) , \mathbf{H}(t) \bigr ) \bigr ] &= \Re \, \Bscpro{\Psi_+ \, }{ \, \nabla \rho \bigl ( \lambda \hat{x}(t) \bigr ) \cdot j(t) \, \Psi_+}_{L^2_{W_{+,\lambda}}(\R^3,\C^6)}
	\notag \\
	&= \Re \, \Bscpro{\Psi_+ \, }{ \, \Op_{\lambda}^{S \Zak} \Bigl ( \tfrac{\dd}{\dd t} \rho \circ \Phi^{\lambda}_t \Bigr ) \Psi_+}_{L^2_{W_{+,\lambda}}(\R^3,\C^6)} + \order(\lambda^2) 
	\notag \\
	&= \Re \, \Bscpro{\Psi_+ \, }{ \, \Op_{\lambda}^{S \Zak} \Bigl ( \nabla_r \rho \bigl ( r(t) \bigr ) \cdot \dot{r}(t) \Bigr ) \Psi_+}_{L^2_{W_{+,\lambda}}(\R^3,\C^6)} 
	\, + \notag \\
	&\qquad 
	+ \order(\lambda^2) 
	. 
	\label{ray_optics:eqn:time_derivative_local_energy_ray_optics_limit}
\end{align}
Using formal arguments, we see that these results are consistent with the local energy conservation law~\eqref{ray_optics:eqn:energy_momentum_conservation_law}: 
\begin{align*}
	\frac{\dd}{\dd t} \mathcal{E}_{\rho} &\bigl [ \bigl ( \mathbf{E}(t) , \mathbf{H}(t) \bigr ) \bigr ] \approx 
	\\
	&\approx
	2 \, \Re \, \int_{\Lambda} \dd x \, \partial_t e \bigl ( x , \Psi_+(t) \bigr ) 
	= - 2 \, \Re \, \int_{\Lambda} \dd x \, \nabla_x \cdot \mathcal{S} \bigl ( x , \Psi_+(t) \bigr ) 
	\\
	&= - 2 \, \Re \, \int_{\partial \Lambda} \dd \eta(x) \cdot \mathcal{S} \bigl ( x , \Psi_+(t) \bigr ) 
	% \\
	% &
	\approx 2 \, \Re \, \int_{\R^3} \dd x \, \nabla_x \rho(\lambda x) \cdot \mathcal{S} \bigl ( x , \Psi_+(t) \bigr ) 
	\\
	&
	= \Re \, \Bscpro{\Psi_+ \, }{ \, \nabla \rho \bigl ( \lambda \hat{x}(t) \bigr ) \cdot j(t) \, \Psi_+}_{L^2_{W_{+,\lambda}}(\R^3,\C^6)}
\end{align*}
where $\dd \eta(x)$ is the measure on $\partial \Lambda$ with surface normal pointing outwards. Let us point out that to make this “heuristic” argument rigorous, a more in-depth analysis of the error term is necessary; But this is beyond the scope of this paper. 

The field momentum inside of $\Lambda$ is accessible via Theorem~\ref{ray_optics:thm:ray_optics}~(ii), 
\begin{align*}
	\mathcal{S}_{\rho,n} \bigl [ \bigl ( \mathbf{E}(t) , \mathbf{H}(t) \bigr ) \bigr ] :& \negmedspace = \Re \, \Bscpro{\Psi_+(t) \,}{\, \Op_{\lambda}^{S \Zak}(\rho \, s_n) \, \Psi_+(t)}_{L^2_{W_{+,\lambda}}(\R^3,\C^6)} 
	\\
	&= \Re \, \Bscpro{\Psi_+ \,}{\, \Op_{\lambda}^{S \Zak} \bigl ( f_{\mathrm{ro}} \circ \Phi_t^{\lambda} ) \, \Psi_+}_{L^2_{W_{+,\lambda}}(\R^3,\C^6)} + \order(\lambda^2)
	, 
\end{align*}
although we need to replace $f := \rho \, s_n$ with $f_{\mathrm{ro}} = \pi_{\lambda} \Weyl f \Weyl \pi_{\lambda} + \order(\lambda^2)$ and the flow $\Phi^{\lambda}$ of \eqref{ray_optics:eqn:ray_optics_equations_scalar} by that associated to the ray optics equations~\eqref{ray_optics:eqn:ray_optics_equations_non-scalar} which omit the Berry curvature in the symplectic form. Instead, several terms that are linked to the geometry of the Bloch bundle appear at $\order(\lambda)$ in $f_{\mathrm{ro}}$. 
% ssububsection the_poynting_vector (end)

\subsubsection{Other quadratic observables relevant in electrodynamics} % (fold)
\label{ray_optics:examples:others}
At least four more observables, all of them non-scalar, fit into the category of quadratic observables once they are localized by a smoothened characteristic function $\rho$. We leave the details such as finding the appropriate operator-valued function to the reader. 

The \emph{averaged quadratic component of the electric field}
\begin{align*}
	\abs{E_{\rho,n}}^2 := \int_{\R^3} \dd x \, \rho(\lambda x) \, \sabs{E_n(x)}^2
\end{align*}
and a similar expression for the magnetic field falls into the category set forth by Definition~\ref{ray_optics:defn:admissible_observables}. 

Apart from the local averages of the Poynting vector $\mathcal{S}(x,\Psi_+)$ and of the related \emph{Abraham} momentum density $\mathcal{G}^{\mathrm{A}}(x,\Psi_+) := c^{-2} \, \mathcal{S}(x,\Psi_+)$, for the case $\chi = 0$ there is a second momentum observable in electromagnetism, the \emph{Minkowski} momentum density
\begin{align*}
	\mathcal{G}^{\mathrm{M}}_x[(\mathbf{E},\mathbf{H})] := \frac{1}{2} \, \tau^{-4}(\lambda x) \, \bigl ( \overline{\eps(x) \psi_+^E(x)} \bigr ) \times \bigl ( \mu(x) \psi_+^H(x) \bigr ) 
	.
\end{align*}
The relation between the Abraham and the  Minkowski momentum densities as well as their physical interpretation are delicate topics in classical electrodynamics known as the \emph{Abraham-Minkowski controversy} (see \eg \cite{Pfeifer_Nieminen_et_al:momentum_electromagnetic_wave:2007}). 

Similarly, local averages of the components of 
\emph{angular momentum} (defined with respect to either Abraham or Minkowski momentum density)
\begin{align*}
	\mathcal{L}_{\rho}^{\mathrm{A}/\mathrm{M}}[(\mathbf{E},\mathbf{H})] := 2 \, \Re \, \int_{\R^3} \dd x \, \rho(\lambda x) \; \Bigl ( x \times \mathcal{G}^{\mathrm{A}/\mathrm{M}}(x,\Psi_+) \Bigr )
\end{align*}
as well as the components of the components of the \emph{Maxwell stress tensor} (for $\chi = 0$)
\begin{align*}
	\mathcal{T}_{\rho}^{j,n}[(\mathbf{E},\mathbf{H})] :\negmedspace &= 2 \, \Re \, \int_{\R^3} \dd x \, \rho(\lambda x) \, \Bigl ( \tau_{\eps}^{-2}(\lambda x) \; \overline{\psi_{+,j}^E(x)} \; \bigl ( \eps(x) \psi_+^E(x) \bigr )_n 
	+ \Bigr . \\
	&\qquad \qquad \quad \Bigl . 
	+ \, \tau_{\mu}^{-2}(\lambda x) \; \overline{\psi_{+,j}^H(x)} \; \bigl ( \mu(x) \psi_+^H(x) \bigr )_n - \delta_{j,n} \, e(x,\Psi_+) \Bigr ) 
\end{align*}
are other examples of quadratic observables covered by Theorem~\ref{ray_optics:thm:ray_optics}. To each one of those quadratic observables one can associate a symbol similar to the form considered in Remark~\ref{ray_optics:rem:specialization_f_ro_selfadjoint} as the reader can easily verify.
% subsubsection other_quadratic_observables (end)
% subsection ray_optics_limit_for_certain_observables (end)
% section the_meaning_of_ray_optics (end)

\section{An Egorov-type theorem} % (fold)
\label{egorov}
The main ingredients in the proof of the ray optics limit, Theorem~\ref{ray_optics:thm:ray_optics}, are two Egorov theorems, one for scalar and one for non-scalar observables. We first treat the scalar case in detail, and then proceed to the non-scalar case where we only discuss the necessary modifications.

\subsection{Simplifying the notation} % (fold)
\label{egorov:simplifying}
To make the formulae easier on the eyes, we will take some steps to unburden to notation: 
\begin{enumerate}[(i)]
	\item We will systematically drop the index “$+$”, \eg $W_+$ becomes $W$. 
	\item Instead of $M_{+,\lambda}$ on $\Hil_{+,\lambda}$ we will consider $M_{\lambda} := S(\lambda \hat{x}) \, W_+ \, \Rot$ on \emph{all} of $\Hil_{\lambda} := L^2_{W_{+,\lambda}}(\R^3,\C^6)$; the restriction to the positive frequency subspace $\Hil_{+,\lambda}$ will be implemented by sandwiching operators in between the projection $\Pi_{\lambda}$ constructed in \cite[Proposition~1]{DeNittis_Lein:sapt_photonic_crystals:2013} associated to the chosen positive frequency band. $\Pi_{\lambda}$ automatically satisfies equation~\eqref{setup:eqn:relation_Pi_P} by construction, and hence, up to $\order(\lambda^{\infty})$ consist only of $\omega > 0$ states. 
	\item For any $F \in \mathcal{B} \bigl ( L^2_{W_{+,\lambda}}(\R^3,\C^6) \bigr )$ we can view $P_{+,\lambda} \, F \, P_{+,\lambda}$ as a bounded operator on $\Hil_{+,\lambda}$, and the simple estimate 
	\begin{align*}
		\bnorm{P_{+,\lambda} \, F \, P_{+,\lambda}}_{\mathcal{B}(\Hil_{+,\lambda})}
		% =
		% \bnorm{P_{+,\lambda} \, F \, P_{+,\lambda}}_{\mathcal{B}(L^2_{W_{+,\lambda}}(\R^3,\C^6))}
		\leq \snorm{F}_{\mathcal{B}(L^2_{W_{+,\lambda}}(\R^3,\C^6))}
		,
	\end{align*}
	allows us to push operator norm estimates from $\mathcal{B} \bigl ( L^2_{W_{+,\lambda}}(\R^3,\C^6) \bigr )$ to $\mathcal{B}(\Hil_{+,\lambda})$. 
\end{enumerate}
%
% subsection simplifying_the_notation (end)

\subsection{An Egorov theorem for scalar observables} % (fold)
\label{ray_optics:scalar}
For this simpler class of \emph{scalar} observables, we can directly apply the results of Stiepan and Teufel \cite{Stiepan_Teufel:semiclassics_op_valued_symbols:2012}. The main technical advantage of their technique compared to earlier works such as \cite{PST:effective_dynamics_Bloch:2003,DeNittis_Lein:Bloch_electron:2009} is that they do \emph{not} need to assume the triviality of the Bloch bundle. 
\begin{proposition}[Egorov theorem for scalar observables]\label{ray_optics:prop:ray_optics_limit}
	Suppose we are in the setting of Theorem~\ref{ray_optics:thm:ray_optics}~(i). Then for any scalar observable associated to $f \in \Cont^{\infty}_{\mathrm{b}}(\R^6,\C)$ which is periodic in $k$, the full light dynamics can be approximated by ray optics for bounded times, \ie for all $T > 0$ we have 
	\begin{align}
		&\sup_{t \in [-T,+T]} \Bnorm{\Pi_{\lambda} \Bigl ( \e^{+ \ii \frac{t}{\lambda} M_{\lambda}} \, \Op_{\lambda}^{S \Zak}(f) \, \e^{- \ii \frac{t}{\lambda} M_{\lambda}} - \Op_{\lambda}^{S \Zak} \bigl ( f \circ \Phi^{\lambda}_t \bigr ) \Bigr ) \Pi_{\lambda}}_{\mathcal{B}(\Hil_{\lambda})} 
		= \notag \\
		&\qquad \qquad 
		= \order(\lambda^2) 
		. 
		\label{ray_optics:eqn:ray_optics_Egorov}
	\end{align}
\end{proposition}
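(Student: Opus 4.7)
The plan is to reduce the operator-norm estimate to a symbolic identity and then invoke Stiepan and Teufel's single-band calculus \cite{Stiepan_Teufel:semiclassics_op_valued_symbols:2012}. Setting $B(s) := \Op_{\lambda}^{S\Zak}(f \circ \Phi^{\lambda}_s)$, Duhamel's formula gives
\begin{align*}
	\e^{+\ii\frac{t}{\lambda}M_{\lambda}} \, \Op_{\lambda}^{S\Zak}(f) \, \e^{-\ii\frac{t}{\lambda}M_{\lambda}} - B(t) = \int_0^t \e^{+\ii\frac{s}{\lambda}M_{\lambda}} \, \Bigl ( \tfrac{\ii}{\lambda} \bigl [ M_{\lambda} , B(t-s) \bigr ] - \dot{B}(t-s) \Bigr ) \, \e^{-\ii\frac{s}{\lambda}M_{\lambda}} \, \dd s .
\end{align*}
The Maxwell propagator is unitary on $\Hil_{\lambda}$ and $[M_{\lambda} , \Pi_{\lambda}] = \order_{\snorm{\cdot}}(\lambda^{\infty})$ by \eqref{ray_optics:eqn:defining_relation_Pi_lambda}, so I may slip $\Pi_{\lambda}$ inside the propagators at negligible cost. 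It therefore suffices to show that, sandwiched between two copies of $\Pi_{\lambda}$, the integrand is $\order(\lambda^2)$ in operator norm uniformly for $s \in [-T,+T]$, after which an integration over $s$ yields the claim.

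\textbf{Reduction to a symbolic identity.} Passing to symbols via the Moyal calculus, I must establish
\begin{align*}
	\pi_{\lambda} \, \Weyl \, \Bigl ( \tfrac{\ii}{\lambda} \bigl [ \Msymb_{\lambda} \, , \, f \circ \Phi^{\lambda}_s \bigr ]_{\Weyl} \; - \; \tfrac{\dd}{\dd s} \bigl ( f \circ \Phi^{\lambda}_s \bigr ) \Bigr ) \, \Weyl \, \pi_{\lambda} = \order(\lambda^2)
\end{align*}
as an identity in the equivariant Hörmander class $\SemiHoereq{0}$. Because $f$ is scalar, the zeroth-order Moyal term $\ii \, [\Msymb_0, f]$ vanishes, so the leading contribution is the classical Poisson bracket $\{\Msymb_0, f\}$. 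Expanding $\pi_{\lambda} = \pi_0 + \lambda \, \pi_1 + \order(\lambda^2)$ with $\pi_1$ as in Lemma~\ref{ray_optics:cor:explicit_expressions_pi_1_f_ro}, and using $\Msymb_0 \, \pi_0 = \tau^2 \omega \, \pi_0$ together with $\Msymb_1$ from \eqref{setup:eqn:symbol_Maxwellian}, I obtain
\begin{align*}
	\pi_{\lambda} \, \Weyl \, \Msymb_{\lambda} \, \Weyl \, \pi_{\lambda} = \Omega \, \pi_{\lambda} + \order(\lambda^2) ,
\end{align*}
with $\Omega = \tau^2 \omega + \lambda \, \Omega_1$ the effective scalar dispersion from \eqref{ray_optics:eqn:ray_optics_Maxwellian}. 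Inserting this into the sandwiched Moyal bracket, the $\order(\lambda)$ terms generated by $\pi_1$ — more precisely the offdiagonal parts $\pi_0^{\perp} \, \pi_1 \, \pi_0$ together with the Moyal cross-term $-\tfrac{\ii}{2} \bigl \{ \pi_0 , f \bigr \}$ — assemble into the Berry connection $\ii \sscpro{\varphi}{\nabla_k \varphi}_{\HperT}$, whose curl is precisely the Berry curvature tensor $\Xi$ entering the symplectic form of \eqref{ray_optics:eqn:ray_optics_equations_scalar}.

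\textbf{Main obstacle and closure.} The genuine difficulty is not the Duhamel setup but the explicit identification of the $\order(\lambda)$ symbol on the left-hand side with a Poisson bracket $\{\Omega, f \circ \Phi^{\lambda}_s\}_{\Xi}$ taken with respect to the \emph{$\Xi$-modified} symplectic form, as opposed to the unmodified bracket plus an uncontrolled first-order correction. This reorganisation is the key technical contribution of Stiepan and Teufel, and it is what allows them to dispense with the triviality of the Bloch bundle required in earlier approaches such as \cite{PST:effective_dynamics_Bloch:2003,DeNittis_Lein:Bloch_electron:2009}. Once this identification is granted, the ray optics flow $\Phi^{\lambda}$ is defined precisely so that $\tfrac{\dd}{\dd s}(f \circ \Phi^{\lambda}_s) = \bigl \{ \Omega \, , \, f \circ \Phi^{\lambda}_s \bigr \}_{\Xi} + \order(\lambda^2)$, cancelling the symbolic Moyal bracket to the required order. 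The remaining technicalities — preservation of $\SemiHoereq{0}$ under composition with $\Phi^{\lambda}_s$ on the time interval $[-T,+T]$, and uniform control of the Moyal remainders — are standard given the pseudodifferential calculus developed in \cite[Theorem~1.3]{DeNittis_Lein:adiabatic_periodic_Maxwell_PsiDO:2013}, so that the Stiepan–Teufel framework applies directly under Assumptions~\ref{setup:assumption:modulated_weights} and \ref{ray_optics:assumption:bands}.
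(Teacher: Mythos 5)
Your overall route --- Duhamel's formula, reduction to a symbolic identity sandwiched between two copies of $\pi_{\lambda}$, and an appeal to the Stiepan--Teufel single-band calculus --- is the same as the paper's, and that reduction is sound. Two points in the middle of your argument do not close as written, though.

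First, the order counting around the replacement $\Msymb_{\lambda} \to \Omega$ is short by one power of $\lambda$. You establish $\pi_{\lambda} \Weyl \Msymb_{\lambda} \Weyl \pi_{\lambda} = \Omega \, \pi_{\lambda} + \order(\lambda^2)$ and then insert this into the sandwiched Moyal bracket; but that bracket carries the prefactor $\tfrac{\ii}{\lambda}$, so a naive substitution turns an $\order(\lambda^2)$ replacement error into an $\order(\lambda)$ contribution to the Duhamel integrand --- not enough for \eqref{ray_optics:eqn:ray_optics_Egorov}. What is needed (and what \cite[Proposition~2]{Stiepan_Teufel:semiclassics_op_valued_symbols:2012} actually provides) is the additional structural fact that the discrepancy is block-diagonal, $\pi_{\lambda} \Weyl \bigl ( \Msymb_{\lambda} - \Omega \bigr ) \Weyl \pi_{\lambda} = \pi_0 \, ( \cdots ) \, \pi_0 + \order(\lambda^3)$, so that its Moyal commutator with $\pi_{\lambda} \Weyl f \Weyl \pi_{\lambda}$ gains an extra factor of $\lambda$ and is $\order(\lambda^3)$, surviving the division by $\lambda$. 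Without this observation your estimate degrades to $\order(\lambda)$.

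Second, you attribute the Berry curvature $\Xi$ to the offdiagonal parts $\pi_0^{\perp} \, \pi_1 \, \pi_0$ together with the Moyal cross-term $-\tfrac{\ii}{2} \bigl \{ \pi_0 , f \bigr \}$. That is not its origin in this scheme: those terms are precisely what is absorbed into the effective dispersion via \eqref{ray_optics:eqn:defining_relation_M_ro} (producing $\Omega_1$; the geometric piece $\Omega_{\partial \mathcal{M}}$ vanishes because $\pi_0$ depends on $k$ alone). The curvature correction to the symplectic form instead comes from the exact double-commutator term $- \lambda^2 \, \ii \, \pi_{\lambda} \Weyl \bigl [ \tfrac{\ii}{\lambda} [ \Omega , \pi_{\lambda} ]_{\Weyl} \, , \, \tfrac{\ii}{\lambda} [ f , \pi_{\lambda} ]_{\Weyl} \bigr ]_{\Weyl} \Weyl \pi_{\lambda}$ that is generated when the projections are pulled back \emph{out} of the commutator; its leading term is $- \lambda \, \ii \, \bigl [ \{ \Omega , \pi_0 \} , \{ f , \pi_0 \} \bigr ]$ and, since $\pi_0$ is $k$-dependent only, it reduces to the single block $\Xi^{kk} = \Xi$. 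Deferring the identification to Stiepan--Teufel is legitimate, but the mechanism you sketch would not reproduce it.
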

To help separate computations from technical arguments, we start with the following 
\begin{lemma}\label{ray_optics:eqn:computation_ray_optics_Maxwellian}
	Suppose we are in the setting of Proposition~\ref{ray_optics:prop:ray_optics_limit}. Then in both cases ($\chi = 0$ or $\chi \neq 0$ and $\tau_{\eps} = \tau_{\mu}$) the dispersion relation $\Omega$ characterized by  
	\begin{align}
		\pi_{\lambda} \Weyl \bigl ( \Omega - \Msymb_{\lambda} \bigr ) \Weyl \pi_{\lambda} = \order(\lambda^2) 
		\label{ray_optics:eqn:defining_relation_M_ro}
	\end{align}
	computes to be \eqref{ray_optics:eqn:ray_optics_Maxwellian}. 
\end{lemma}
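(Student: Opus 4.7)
The plan is to expand the defining equation~\eqref{ray_optics:eqn:defining_relation_M_ro} order by order in $\lambda$ using the Moyal expansion $a \Weyl b = a \, b + \tfrac{\ii \lambda}{2} \{a,b\} + \order(\lambda^2)$, writing $\Omega = \Omega_0 + \lambda \, \Omega_1$ and $\pi_\lambda = \pi_0 + \lambda \, \pi_1 + \order(\lambda^2)$ with $\pi_0(k) = \sopro{\varphi(k)}{\varphi(k)}$, and then match coefficients. At $\order(\lambda^0)$, using $\Mper(k) \, \varphi(k) = \omega(k) \, \varphi(k)$ one reads off
\begin{align*}
	\pi_0 \, \Msymb_0 \, \pi_0 = \tau^2 \, \omega \, \pi_0 = \Omega_0 \, \pi_0,
\end{align*}
which immediately identifies $\Omega_0 = \tau^2 \, \omega$.

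At $\order(\lambda)$ I would collect on each side the ``naked'' products ($\pi_0 \, \Msymb_1 \, \pi_0$, $\pi_1 \, \Msymb_0 \, \pi_0$, $\pi_0 \, \Msymb_0 \, \pi_1$ versus $\Omega_1 \, \pi_0$, $\Omega_0 \, \pi_0 \, \pi_1$, $\Omega_0 \, \pi_1 \, \pi_0$) together with the Poisson-bracket corrections $\tfrac{\ii}{2} \bigl ( \{\pi_0 \, \Msymb_0, \pi_0\} + \{\pi_0, \Msymb_0\} \, \pi_0 \bigr )$ coming from the triple Moyal product, and the analogous brackets on the $\Omega$-side. Two structural facts then drive the argument: first, because $\pi_0$ depends only on $k$ one has $\{\pi_0, \pi_0\} = 0$, so the projector identity $\pi_\lambda \Weyl \pi_\lambda = \pi_\lambda$ reduces at first order to $\pi_0 \, \pi_1 + \pi_1 \, \pi_0 = \pi_1$, and the $\tau^2 \, \omega \, \pi_1$ contributions on the two sides cancel against each other; second, because $\Omega_0 = \tau^2 \, \omega$ and $\Msymb_0 = \tau^2 \, \Mper$ share the same $r$-dependence and because $\Mper \, \pi_0 = \omega \, \pi_0$, the remaining Poisson-bracket terms on both sides collapse to the \emph{same} geometric commutator expression proportional to $\ii \, \tau^2 \, \omega \, (\nabla_r \ln \tau) \cdot [\nabla_k \pi_0, \pi_0]$, and hence cancel identically. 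What survives all cancellations is the clean identity
\begin{align*}
	\Omega_1 \, \pi_0 = \pi_0 \, \Msymb_1 \, \pi_0.
\end{align*}

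Since $\pi_0$ is a rank-one projection, this forces $\Omega_1 = \bscpro{\varphi}{\Msymb_1 \, \varphi}_{\HperT}$, and the last step is to evaluate this scalar. Writing $\varphi = (\varphi^E, \varphi^H)$ and unpacking $\Sigma_j = W \, \bigl ( \begin{smallmatrix} 0 & e_j^\times \\ e_j^\times & 0 \end{smallmatrix} \bigr )$, the $W^{-1}$-weight of the $\HperT$ inner product cancels the leading $W$, and the cyclic identity $a \cdot (b \times c) = b \cdot (c \times a)$ combined with $\overline{\varphi^E \times \overline{\varphi^H}} = \overline{\varphi^E} \times \varphi^H$ yields
\begin{align*}
	\bscpro{\varphi}{\Sigma_j \, \varphi}_{\HperT} = -2 \ii \, e_j \cdot \Im \int_{\T^3} \dd y \; \overline{\varphi^E(y)} \times \varphi^H(y) = -2 \ii \, e_j \cdot \mathcal{P},
\end{align*}
the factor $\ii$ reflecting that $\Sigma_j$ is anti-selfadjoint on $\HperT$. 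Substituting into $\Msymb_1 = -\tfrac{\ii}{2} \tau^2 \, \nabla_r \ln (\tau_\eps/\tau_\mu) \cdot \Sigma$ then produces $\Omega_1 = - \tau^2 \, \mathcal{P} \cdot \nabla_r \ln (\tau_\eps/\tau_\mu)$, matching~\eqref{ray_optics:eqn:ray_optics_Maxwellian}. The case $\chi \neq 0$ needs no separate argument since Assumption~\ref{setup:assumption:modulated_weights} forces $\tau_\eps = \tau_\mu$ there, so $\Msymb_1$ and $\Omega_1$ vanish identically.

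The main obstacle is the bookkeeping at $\order(\lambda)$: one must carefully track the several commutator- and Poisson-bracket-type expressions involving $\nabla_k \pi_0$ and $\nabla_r \tau$, and verify that the geometric ``drift'' terms on the $\Omega$-side and the $\Msymb_\lambda$-side genuinely coincide and cancel. This cancellation is a priori non-obvious and hinges on the observation that the $r$-dependence enters both $\Msymb_0$ and $\Omega_0$ only through the common scalar factor $\tau^2$, which in turn relies on the eigenequation $\Mper \, \pi_0 = \omega \, \pi_0$; without it, a genuine $\order(\lambda)$ geometric contribution to $\Omega_1$ would remain and the formula would acquire extra terms beyond the Poynting-vector piece.
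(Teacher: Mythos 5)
Your proposal is correct and arrives at the same formula, but by a more self-contained route than the paper. The paper does not expand the defining relation~\eqref{ray_optics:eqn:defining_relation_M_ro} itself: it imports the general formula $\Omega = \trace_{\HperT}(\Msymb_{\lambda}\,\pi_0) + \lambda\,\Omega_{\partial\mathcal{M}}$ from equations~(17)--(18) of \cite{Stiepan_Teufel:semiclassics_op_valued_symbols:2012}, observes that the correction term $\Omega_{\partial\mathcal{M}} = \trace_{\HperT}\bigl(\{\pi_0\vert\Msymb_0\vert\pi_0\}\bigr)$ vanishes because $\pi_0$ depends only on $k$, and then evaluates the two expectation values --- which is exactly your final step, including the identity $\bscpro{\varphi}{\Sigma_j\varphi}_{\HperT} = -2\ii\,\mathcal{P}_j$. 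What your order-by-order Moyal expansion buys is a verification of \eqref{ray_optics:eqn:defining_relation_M_ro} from scratch rather than by citation; and in fact your bookkeeping can be streamlined further by working directly with the difference $A_0 := \Omega_0 - \Msymb_0 = \tau^2(\omega - \Mper)$, for which $\pi_0\,A_0 = 0 = A_0\,\pi_0$ kills the $\pi_1$-terms and both first-order Poisson-bracket terms in one stroke, so no delicate cancellation between the two sides is needed (your observation that everything hinges on $\Mper\,\pi_0 = \omega\,\pi_0$ and on the common scalar factor $\tau^2(r)$ is precisely what makes this work). The treatment of the case $\chi \neq 0$, $\tau_{\eps} = \tau_{\mu}$ is the same in both arguments.
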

\begin{proof}
	While the final result holds true for both cases, $\chi = 0$ and $\chi \neq 0$, we detail the computations for $\chi = 0$ where electric permittivity and magnetic permeability may be scaled separately. In case $\chi \neq 0$ we set $\tau_{\eps} = \tau_{\mu}$, and all terms which contain gradients of the ratio $\nicefrac{\tau_{\eps}}{\tau_{\mu}} = 1$ vanish. The explicit expression for the dispersion relation 
	% CHANGED change notation: replace M_{TS} with something else
	%
	\begin{align}
		\Omega(r,k) = \trace_{\HperT} \bigl ( \Msymb_{\lambda}(r,k) \, \pi_0(k) \bigr ) + \lambda \, \Omega_{\partial \mathcal{M}}
		\label{ray_optics:eqn:M_ro}
	\end{align}
	%
	% CHANGED change notation: replace M_{TS} with something else
	is determined by equations~(17) and (18) in \cite{Stiepan_Teufel:semiclassics_op_valued_symbols:2012}, and consists of two parts. The first contribution is the expectation value of the symbol. The second, $\Omega_{\partial \mathcal{M}} := \trace_{\HperT} \bigl ( \bigl \{ \pi_0 \vert \Msymb_0 \vert \pi_0 \bigr \} \bigr ) = 0$, vanishes in our case for the same reason as in equation~\eqref{ray_optics:eqn:Poisson_bracket_of_three} -- $\pi_0(k)$ depends only on crystal momentum. 
	
	The trace terms are merely a fancy way to write the expectation value with respect to $\varphi$. Clearly, the leading-order term 
	\begin{align*}
		\Omega_0(r,k) &= \trace_{\HperT} \bigl ( \Msymb_0(r,k) \, \pi_0(k) \bigr ) 
		\\
		&
		= \scpro{\varphi(k)}{\tau^2(r) \, \Mper(k) \varphi(k)}_{\HperT} 
		= \tau^2(r) \, \omega(k)
	\end{align*}
	is just the band function scaled by $\tau$. For the sub-leading term we first compute 
	\begin{align*}
		&\scpro{\left (
		\begin{matrix}
			\varphi^E(k) \\
			\varphi^H(k) \\
		\end{matrix}
		\right ) \; }{ \; W \, \left (
		\begin{matrix}
			0 & e_j^{\times} \\
			e_j^{\times} & 0 \\
		\end{matrix}
		\right ) \left (
		\begin{matrix}
			\varphi^E(k) \\
			\varphi^H(k) \\
		\end{matrix}
		\right )}_{\HperT}
		= \\
		&\qquad \qquad \qquad 
		= \scpro{\left (
		\begin{matrix}
			\varphi^E(k) \\
			\varphi^H(k) \\
		\end{matrix}
		\right ) \; }{ \; \left (
		\begin{matrix}
			e_j \times \varphi^H(k) \\
			e_j \times \varphi^E(k) \\
		\end{matrix}
		\right )}_{L^2(\T^3,\C^6)}
		\\
		% &\qquad \qquad \qquad = \scpro{\varphi^E(k) \, }{ \, e_j \times \varphi^H(k)}_{L^2(\T^3,\C^3)} + \scpro{\varphi^H(k) \, }{ \, e_j \times \varphi^E(k)}_{L^2(\T^3,\C^3)}
		% \\
		% &\qquad \qquad \qquad = \int_{\T^3} \dd y \, \Bigl ( \varphi^E(k,y) \cdot \bigl ( e_j \times \varphi^H(k,y) \bigr ) + \varphi^H(k,y) \cdot \bigl ( e_j \times \varphi^E(k,y) \bigr ) \Bigr )
		% \\
		% &\qquad \qquad \qquad = \int_{\T^3} \dd y \, e_j \cdot \Bigl ( \varphi^H(k,y) \times \overline{\varphi^E(k,y)} + \varphi^E(k,y) \times \overline{\varphi^H(k,y)} \Bigr )
		% \\
		&\qquad \qquad \qquad = - \int_{\T^3} \dd y \, e_j \cdot \Bigl ( \overline{\varphi^E(k,y)} \times \varphi^H(k,y) - \varphi^E(k,y) \times \overline{\varphi^H(k,y)} \Bigr ) 
		\\
		&\qquad \qquad \qquad = - \ii \, 2 \, \mathcal{P}_j(k) 
		. 
	\end{align*}
	This now yields 
	\begin{align*}
		\Omega_1 &= \trace \bigl ( \Msymb_1 \, \pi_0 \bigr ) 
		% + \tfrac{\ii}{2} \, \trace_{\HperT} \bigl ( \bigl \{ \pi_0 \vert \Msymb_0 \vert \pi_0 \bigr \} \pi_0 \bigr )
		% = \scpro{\varphi}{\Msymb_1 \varphi}_{\HperT}
		% \\
		% &
		= - \tau^2 \, \frac{\ii}{2} \, \sum_{j = 1}^3 \partial_{r_j} \ln \tfrac{\tau_{\eps}}{\tau_{\mu}} \; 
		\scpro{\left (
		\begin{matrix}
			\varphi^E \\
			\varphi^H \\
		\end{matrix}
		\right )}{W \, \left (
		\begin{matrix}
			0 & e_j^{\times} \\
			e_j^{\times} & 0 \\
		\end{matrix}
		\right ) \left (
		\begin{matrix}
			\varphi^E \\
			\varphi^H \\
		\end{matrix}
		\right )}_{\HperT}
		\\
		&= - \tau^2 \, \mathcal{P} \cdot \nabla_r \ln \tfrac{\tau_{\eps}}{\tau_{\mu}}
		. 
	\end{align*}
	When $\chi \neq 0$ the perturbation $S(r) = \tau^{-1}(r)$ is scalar. That means the last term in $\Msymb_{\lambda}(r,k) = \tau^2(r) \, \Mper(k) + 0$ vanishes. Seeing as $\Msymb_1$ does not enter in the computation of the term given by \cite[equation~(18)]{Stiepan_Teufel:semiclassics_op_valued_symbols:2012}, we immediately deduce $\Omega_1 = 0$. 
\end{proof}
\begin{proof}[Proposition~\ref{ray_optics:prop:ray_optics_limit}]
	The modifications to the proofs in \cite{Stiepan_Teufel:semiclassics_op_valued_symbols:2012} are of purely technical nature. Nevertheless, for the benefit of the reader we will sketch the general strategy of Stiepan and Teufel's work, and explain the necessary modifications.

	\paragraph{Notation} % (fold)
	\label{par:notation}
	Given the quantum mechanical context their notation is different and clashes with ours: Stiepan and Teufel consider a hamiltonian (operator) $\hat{H}$ with symbol $H = H_0 + \eps \, H_1$ which corresponds to the Maxwell operator $M_{\lambda}$ and its symbol $\Msymb_{\lambda} = \Msymb_0 + \lambda \, \Msymb_1$. The relevant symbol classes such as $\SemiHoermeq{m}{0} \bigl ( \mathcal{B}(\mathfrak{h}_1,\mathfrak{h}_2) \bigr )$ are defined in Definition~\ref{appendix:PsiDOs:defn:semiclassical_symbols}. The analog of the semiclassical hamiltonian $h = h_0 + \eps \, h_1$ is the dispersion relation $\Omega$, and to avoid a notational clash we have renamed the components of the  extended Berry curvature as given by \cite[equation~(23)]{Stiepan_Teufel:semiclassics_op_valued_symbols:2012} to $\Xi^{kk}$, $\Xi^{rk}$, $\Xi^{kr}$ and $\Xi^{rr}$. At this point we have already obtained the explicit expressions of the dispersion relation in Lemma~\ref{ray_optics:eqn:computation_ray_optics_Maxwellian}. We need to verify that Proposition~2, Proposition~3 and Theorem~2 in \cite{Stiepan_Teufel:semiclassics_op_valued_symbols:2012} can be extended to the case of the slowly modulated periodic Maxwell operator. 
	% paragraph notation (end)

	\paragraph{Facts on the Maxwell operator and the superadiabatic projection} % (fold)
	\label{par:facts_on_the_maxwell_operator_and_the_superadiabatic_projection}
	First, the Maxwell operator is unbounded and defined in terms of an equivariant symbol 
	\begin{align*}
		\Msymb_{\lambda} \in \SemiHoermeq{1}{1} \bigl ( \mathcal{B} \bigl ( \domainT,L^2(\T^3,\C^6) \bigr ) \bigr )
	\end{align*}
	where $\domainT$ defined in \cite[equation~(32)]{DeNittis_Lein:adiabatic_periodic_Maxwell_PsiDO:2013} is the domain of the periodic Maxwell operator $\Mper(k)$ and $\Msymb_{\lambda}$ is given by equation~\eqref{setup:eqn:symbol_Maxwellian} (\cf \cite[Corollary~4.3]{DeNittis_Lein:adiabatic_periodic_Maxwell_PsiDO:2013}). Moreover, from \cite[Proposition~1]{DeNittis_Lein:sapt_photonic_crystals:2013} we know the superadiabatic projection $\Pi_{\lambda} = \Op_{\lambda}^{S \Zak}(\pi_{\lambda}) + \order_{\norm{\cdot}}(\lambda^{\infty})$ associated to an isolated band exists and is $\order(\lambda^{\infty})$-close in norm to a $\Psi$DO with symbol 
	\begin{align*}
		\pi_{\lambda} \in \SemiHoermeq{0}{0} \bigl ( \mathcal{B} \bigl ( L^2(\T^3,\C^6) \bigr ) \bigr ) \cap \SemiHoermeq{1}{0} \bigl ( \mathcal{B} \bigl (L^2(\T^3,\C^6) , \domainT \bigr ) \bigr ) 
		. 
	\end{align*}
	As explained in Appendix~\ref{appendix:PsiDOs} equivariance is preserved by the Weyl product. Moreover, all of the error terms below are in $\SemiHoermeq{0}{0} \bigl ( \mathcal{B} \bigl ( L^2(\T^3,\C^6) \bigr ) \bigr )$. 
	% paragraph facts_on_the_maxwell_operator_and_the_superadiabatic_projection (end)

	\paragraph{Step 1: Pull the projection into the commutator} % (fold)
	\label{par:step_1_pull_the_projection_into_the_commutator}
	A simple computation yields
	\begin{align*}
		\pi_{\lambda} \Weyl \bigl [ \Msymb_{\lambda} , f \bigr ]_{\Weyl} \Weyl \pi_{\lambda} = \Bigl [ \pi_{\lambda} \Weyl \Msymb_{\lambda} \Weyl \pi_{\lambda} \; , \; \pi_{\lambda} \Weyl f \Weyl \pi_{\lambda} \Bigr ]_{\Weyl} + \order(\lambda^{\infty})
		, 
	\end{align*}
	and all we need to check is that all the terms are in $\SemiHoermeq{0}{0} \bigl ( \mathcal{B} \bigl ( L^2(\T^3,\C^6) \bigr ) \bigr )$ which then quantize to bounded operators by a variant of the Caldéron-Vaillancourt theorem (\cf the discussion in \cite[Section~4.1]{DeNittis_Lein:adiabatic_periodic_Maxwell_PsiDO:2013} and \cite[Proposition~B.5]{Teufel:adiabatic_perturbation_theory:2003}): \emph{a priori} the left-hand side is an element of the space $\SemiHoermeq{2}{0} \bigl ( \mathcal{B} \bigl ( L^2(\T^3,\C^6) \bigr ) \bigr )$ by the composition properties of symbols, but the equivariance condition implies that for any $m > 0$ we have in fact 
	\begin{align*}
		\SemiHoermeq{m}{0} \bigl ( \mathcal{B} \bigl ( L^2(\T^3,\C^6) \bigr ) \bigr ) = \SemiHoermeq{0}{0} \bigl ( \mathcal{B} \bigl ( L^2(\T^3,\C^6) \bigr ) \bigr )
		. 
	\end{align*}
	%
	% paragraph step_1_pull_the_projection_into_the_commutator (end)

	\paragraph{Step 2: Replace $\Msymb_{\lambda}$ by $\Omega$} % (fold)
	\label{par:step_2_replace_msymb__lambda_by_msymb__mathrm_ro}
	Adapting the arguments from \cite[Proposition~2]{Stiepan_Teufel:semiclassics_op_valued_symbols:2012} readily yields 
	\begin{align}
		\pi_{\lambda} \Weyl \bigl ( \Msymb_{\lambda} - \Omega \bigr ) \Weyl \pi_{\lambda} &=
		\pi_0 \, \Bigl ( \pi_{\lambda} \Weyl \bigl ( \Msymb_{\lambda} - \Omega \bigr ) \Weyl \pi_{\lambda} \Bigr ) \, \pi_0 + \order(\lambda^3) = \order(\lambda^2) 
		. 
		\label{ray_optics:eqn:replace_M_lambda_M_ro}
	\end{align}
	As argued in Step~1 above, left- and right-hand side are in the symbol space $\SemiHoermeq{0}{0} \bigl ( \mathcal{B} \bigl ( L^2(\T^3,\C^6) \bigr ) \bigr )$. The computation (after making the necessary changes in notation) is identical, and one gets \eqref{ray_optics:eqn:replace_M_lambda_M_ro}. Consequently, we obtain 
	\begin{align*}
		\Bigl [ \pi_{\lambda} \Weyl \Msymb_{\lambda} \Weyl \pi_{\lambda} \; , \; \pi_{\lambda} \Weyl f \Weyl \pi_{\lambda} \Bigr ]_{\Weyl} &= \Bigl [ \pi_{\lambda} \Weyl \Omega \Weyl \pi_{\lambda} \; , \; \pi_{\lambda} \Weyl f \Weyl \pi_{\lambda} \Bigr ]_{\Weyl} + 
		\\
		&\qquad + 
		\Bigl [ \pi_{\lambda} \Weyl \bigl ( \Msymb_{\lambda} - \Omega \bigr ) \Weyl \pi_{\lambda} \; , \; \pi_{\lambda} \Weyl f \Weyl \pi_{\lambda} \Bigr ]_{\Weyl}
		\\
		&= \Bigl [ \pi_{\lambda} \Weyl \Omega \Weyl \pi_{\lambda} \; , \; \pi_{\lambda} \Weyl f \Weyl \pi_{\lambda} \Bigr ]_{\Weyl} + \order(\lambda^3)
		. 
	\end{align*}
	%
	% paragraph step_2_replace_msymb__lambda_by_msymb__mathrm_ro (end)

	\paragraph{Step 3: Pull the projection out of the commutator} % (fold)
	\label{par:step_3_pull_the_projection_out_of_the_commutator}
	Then after replacing $\Msymb_{\lambda}$ with the dispersion relation $\Omega$ we pull the projection back out of the commutator, 
	\begin{align}
		\Bigl [ \pi_{\lambda} \Weyl \Omega \Weyl \pi_{\lambda} \; , \; \pi_{\lambda} \Weyl f \Weyl \pi_{\lambda} \Bigr ]_{\Weyl} &= \pi_{\lambda} \Weyl \bigl [ \Omega \, , \, f \bigr ]_{\Weyl} \Weyl \pi_{\lambda} 
		\label{ray_optics:eqn:proof_ray_optics_step_3}
		+ \\
		&\qquad
		- \lambda^2 \, \ii \, \pi_{\lambda} \Weyl \Bigl [ \tfrac{\ii}{\lambda} \bigl [ \Omega \, , \, \pi_{\lambda} \bigr ]_{\Weyl} \; , \; \tfrac{\ii}{\lambda} \bigl [ f \, , \, \pi_{\lambda} \bigr ]_{\Weyl} \Bigr ]_{\Weyl} \Weyl \pi_{\lambda} 
		, 
		\notag
	\end{align}
	although at the expense of an extra $\order(\lambda^2)$ term. Note that the equality is exact. 
	% paragraph step_3_pull_the_projection_out_of_the_commutator (end)

	\paragraph{Step 4: Approximate commutator with $\lambda$-corrected Poisson bracket} % (fold)
	\label{par:step_4_approximate_commutator_with_eps_corrected_poisson_bracket}
	Now we develop all Moyal commutators in $\lambda$, keeping only terms up to $\order(\lambda^2)$: since $\Omega$ and $f$ are scalar, the even powers in the Moyal commutator 
	\begin{align*}
		\bigl [ \Omega \, , \, f \bigr ]_{\Weyl} = - \lambda \, \ii \, \bigl \{ \Omega \, , \, f \bigr \} + \order(\lambda^3)
	\end{align*}
	vanish. For the other two commutators, it suffices to keep only the leading-order term. Thus, after replacing $\Msymb_{\lambda}$ by $\Omega$ in $\pi_{\lambda} \Weyl \bigl [ \Msymb_{\lambda} , f \bigr ]_{\Weyl} \Weyl \pi_{\lambda}$, and replacing the Moyal commutators with Poisson brackets at the expense of an $\order(\lambda^2)$ error, we can write
	\begin{align}
		\tfrac{\ii}{\lambda} \pi_{\lambda} \Weyl \bigl [ \Msymb_{\lambda} \, , \, f \bigr ]_{\Weyl} \Weyl \pi_{\lambda} &= \pi_{\lambda} \Weyl \Bigl ( \bigl \{ \Omega \, , \, f \bigr \} - \lambda \, \ii \Bigl [ \bigl \{ \Omega \, , \, \pi_0 \bigr \} , \bigl \{ f \, , \, \pi_0 \bigr \} \Bigr ] \Bigr ) \Weyl \pi_{\lambda} + \order(\lambda^2) 
		\notag \\
		% &= \pi_{\lambda} \Weyl \bigl ( X_{\Omega} \cdot \nabla f \bigr ) \Weyl \pi_{\lambda} + \order(\lambda^2)
		% \\
		&
		= \pi_{\lambda} \Weyl \bigl \{ \Omega \, , \, f \bigr \}_{\lambda} \Weyl \pi_{\lambda} + \order(\lambda^2)
		\label{ray_optics:eqn:Moyal_commutator_equal_lambda_Poisson_bracket}
	\end{align}
	in terms of a $\lambda$-corrected Poisson bracket 
	\begin{align}
		\bigl \{ \Omega \, , \, f \bigr \}_{\lambda} := X_{\Omega} \cdot \nabla f
		:= \left (
		\begin{matrix}
			- \lambda \, \Xi^{kk} & + \id + \lambda \, \Xi^{kr} \\
			- \id + \lambda \, \Xi^{rk} & - \lambda \, \Xi^{rr} \\
		\end{matrix}
		\right ) \left (
		\begin{matrix}
			\nabla_r \Omega \\
			\nabla_k \Omega \\
		\end{matrix}
		\right ) \cdot \left (
		\begin{matrix}
			\nabla_r f \\
			\nabla_k f \\
		\end{matrix}
		\right )
		. 
		\label{ray_optics:eqn:lambda_Poisson_bracket}
	\end{align}
	The explicit formula for the modified symplectic form (whose $\order(\lambda)$ contribution is also called \emph{extended Berry curvature}), \cite[equation~(23)]{Stiepan_Teufel:semiclassics_op_valued_symbols:2012}, simplifies tremendously since $\pi_0$ depends only on $k$: the terms which involve derivatives of $\pi_0$ with respect to $r$ vanish, \ie $\Xi^{rr} = 0$ and $\Xi^{rk} = 0 = \Xi^{kr}$. Thus, only the ordinary Berry curvature survives and we obtain the usual Berry curvature for the remaining contribution, $\Xi^{kk} = \Xi$. 
	% paragraph step_4_approximate_commutator_with_eps_corrected_poisson_bracket (end)

	\paragraph{Step 5: A Duhamel argument} % (fold)
	\label{par:step_5_duhamel_argument}
	The ray optics equations~\eqref{ray_optics:eqn:ray_optics_equations_scalar} which define the flow $\Phi^{\lambda}$ can alternatively be written as
	\begin{align*}
		\dot{r}_j &= \bigl \{ \Omega , r_j \bigr \}_{\lambda}
		\\
		\dot{k}_j &= \bigl \{ \Omega , k_j \bigr \}_{\lambda}
	\end{align*}
	where $\{ \, \cdot \, , \, \cdot \, \}_{\lambda}$ is the Poisson bracket defined in equation~\eqref{ray_optics:eqn:lambda_Poisson_bracket} above. Thus, observables evolve according to $\frac{\dd}{\dd t} f \circ \Phi^{\lambda}_t = \bigl \{ \Omega , f \circ \Phi^{\lambda}_t \bigr \}_{\lambda}$. 
	Since the components of the hamiltonian vector field $X_{\Omega}$ are bounded functions with bounded derivatives to any order, the Picard-Lindelöf theorem tells us that the associated ray optics flow $\Phi^{\lambda}$ exists globally in time, and has bounded derivatives to any order (see \eg \cite[Lemma~IV.9]{Robert:tour_semiclassique:1987}). Consequently, also the time-evolved observable $f \circ \Phi^{\lambda}_t$ is a symbol in $\SemiHoermper{0}{0}(\C)$. 
	
	Now the claim follows from a standard Duhamel argument: the difference in time evolutions can be related to the Moyal commutator on the left-hand side of \eqref{ray_optics:eqn:Moyal_commutator_equal_lambda_Poisson_bracket}, 
	\begin{align*}
		\Pi_{\lambda} \, \Bigl ( &\e^{+ \ii \frac{t}{\lambda} M_{\lambda}} \, \Op_{\lambda}^{\Zak}(f) \, \e^{- \ii \frac{t}{\lambda} M_{\lambda}} - \Op_{\lambda}^{\Zak} \bigl ( f \circ \Phi^{\lambda}_t \bigr ) \Bigr ) \, \Pi_{\lambda}
		= \\
		&= \int_0^t \dd s \, \frac{\dd}{\dd s} \Pi_{\lambda} \, \Bigl ( \e^{+ \ii \frac{s}{\lambda} M_{\lambda}} \, \Op_{\lambda}^{\Zak} \bigl ( f \circ \Phi^{\lambda}_{t-s} \bigr ) \, \e^{- \ii \frac{s}{\lambda} M_{\lambda}} \Bigr ) \, \Pi_{\lambda}
		\\
		&= \int_0^t \dd s \, \e^{+ \ii \frac{s}{\lambda} M_{\lambda}} \, \Pi_{\lambda} \, \Op_{\lambda}^{\Zak} \Bigl ( \tfrac{\ii}{\lambda} \bigl \{ \Omega \, , \, f \circ \Phi^{\lambda}_{t-s} \bigr \}_{\lambda} - \tfrac{\dd}{\dd t} f \circ \Phi^{\lambda}_{t-s} \Bigr ) \, \Pi_{\lambda} \, \e^{- \ii \frac{s}{\lambda} M_{\lambda}} 
		\, + \\
		&\qquad 
		+ \order_{\norm{\cdot}}(\lambda^2)
		. 
	\end{align*}
	This concludes the proof. 
	% paragraph step_5_duhamel_argument (end)
\end{proof}
%
% subsection the_case_of_scalar_observables (end)

\subsection{The case of non-scalar observables} % (fold)
\label{ray_optics:non-scalar}
Even if observables are not scalar, one can still derive an Egorov theorem by slightly modifying the proof of Proposition~\ref{ray_optics:prop:ray_optics_limit}. Here, the main idea is to evolve $f_{\mathrm{ro}}$ which is obtained by truncating the expansion of $\pi_{\lambda} \Weyl f \Weyl \pi_{\lambda}$ after the first order. 
\begin{proposition}[Egorov theorem for non-scalar observables]\label{ray_optics:prop:ray_optics_limit_non-scalar}
	Suppose we are in the setting of Theorem~\ref{ray_optics:thm:ray_optics}~(ii). Then for all $f \in \Cont^{\infty}_{\mathrm{b}} \bigl ( \R^6 , \mathcal{B}(\HperT) \bigr )$ satisfying the equivariance condition~\eqref{setup:eqn:equivariance} the full light dynamics can be approximated by ray optics for bounded times, \ie for all $T > 0$ we have 
	\begin{align}
		&\sup_{t \in [-T,+T]} \Bnorm{\Pi_{\lambda} \Bigl ( \e^{+ \ii \frac{t}{\lambda} M_{\lambda}} \, \Op_{\lambda}^{S \Zak}(f) \, \e^{- \ii \frac{t}{\lambda} M_{\lambda}} - \Op_{\lambda}^{S \Zak} \bigl ( f_{\mathrm{ro}} \circ \Phi^{\lambda}_t \bigr ) \Bigr ) \, \Pi_{\lambda}}_{\mathcal{B}(\Hil_{\lambda})} 
		= \notag \\
		&\qquad \qquad 
		= \order(\lambda^2)
		. 
		\label{ray_optics:eqn:ray_optics_Egorov}
	\end{align}
\end{proposition}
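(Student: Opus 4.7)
The strategy is to mirror the five-step proof of Proposition~\ref{ray_optics:prop:ray_optics_limit}, but with $f$ replaced throughout by the \emph{dressed} symbol $F_{\lambda} := \pi_{\lambda} \Weyl f \Weyl \pi_{\lambda}$, of which $f_{\mathrm{ro}}$ is the first-order truncation. The decisive structural fact is that $F_{\lambda}$ \emph{almost commutes} with the superadiabatic projection: because $\pi_{\lambda} \Weyl \pi_{\lambda} = \pi_{\lambda} + \order(\lambda^{\infty})$ in the relevant equivariant symbol class, we have
\begin{align*}
    \pi_{\lambda} \Weyl F_{\lambda} = F_{\lambda} = F_{\lambda} \Weyl \pi_{\lambda} + \order(\lambda^{\infty}),
\end{align*}
equivalently $\bigl[F_{\lambda}, \pi_{\lambda}\bigr]_{\Weyl} = \order(\lambda^{\infty})$. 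This is exactly the feature of scalar observables (where $[f,\pi_0]=0$ held by hypothesis) that drove the scalar argument, so the Stiepan--Teufel machinery applies verbatim once $f$ has been exchanged for $F_{\lambda}$.

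Steps~1 and 2 of the scalar proof carry over unchanged: one pulls the projections into the commutator to reduce $\pi_{\lambda} \Weyl \bigl[\Msymb_{\lambda}, f\bigr]_{\Weyl} \Weyl \pi_{\lambda}$ to $\bigl[\pi_{\lambda} \Weyl \Msymb_{\lambda} \Weyl \pi_{\lambda}, F_{\lambda}\bigr]_{\Weyl}$ modulo $\order(\lambda^{\infty})$, and then substitutes the scalar dispersion $\Omega$ from Lemma~\ref{ray_optics:eqn:computation_ray_optics_Maxwellian} for $\Msymb_{\lambda}$ at an $\order(\lambda^2)$ cost -- here one uses not only that $\pi_{\lambda} \Weyl (\Msymb_{\lambda}-\Omega) \Weyl \pi_{\lambda}$ is $\order(\lambda^2)$ but also $\pi_0$-block-diagonal, so that its Moyal commutator against $F_{\lambda}$ remains under control after dividing by $\lambda$. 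The crucial departure from the scalar case occurs in Step~3: the nested Moyal-commutator correction arising when one pulls the projections back out is a double commutator involving $\bigl[F_{\lambda}, \pi_{\lambda}\bigr]_{\Weyl}$, which -- unlike the analogous $\bigl[f, \pi_{\lambda}\bigr]_{\Weyl} = \order(1)$ for a bare non-scalar $f$ -- is of order $\order(\lambda^{\infty})$ for the dressed $F_{\lambda}$. Consequently \emph{no} extended Berry curvature enters the symplectic form; the ordinary Poisson bracket and the ordinary Hamilton equations~\eqref{ray_optics:eqn:ray_optics_equations_non-scalar} suffice. Step~4 is then immediate: because $\Omega$ is scalar-valued (times $\id_{\HperT}$), pointwise commutativity gives $\tfrac{\ii}{\lambda}\bigl[\Omega, F_{\lambda}\bigr]_{\Weyl} = \bigl\{\Omega, F_{\lambda}\bigr\} + \order(\lambda^2)$, and truncating $F_{\lambda}$ to $f_{\mathrm{ro}}$ introduces only a further $\order(\lambda^2)$ error.

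Step~5 is a Duhamel argument identical to the scalar case: since the hamiltonian vector field of $\Omega$ on $\R^3 \times \BZ$ is smooth with uniformly bounded derivatives, Picard--Lindelöf yields global existence of $\Phi^{\lambda}$ with derivatives bounded on any compact time interval, so $f_{\mathrm{ro}} \circ \Phi^{\lambda}_t$ remains in $\SemiHoermper{0}{0}\bigl(\mathcal{B}(\HperT)\bigr)$ and satisfies $\tfrac{\dd}{\dd t}(f_{\mathrm{ro}} \circ \Phi^{\lambda}_t) = \bigl\{\Omega, f_{\mathrm{ro}} \circ \Phi^{\lambda}_t\bigr\}$ uniformly in $t \in [-T,T]$; the standard interpolation between the two evolutions gives \eqref{ray_optics:eqn:ray_optics_Egorov}. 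The main obstacle is the careful bookkeeping in Steps~2 and 3, where one must verify that every geometric contribution which in the scalar case was absorbed into the extended symplectic form (notably the Berry curvature tensor $\Xi$) is now precisely accounted for by the offdiagonal $\order(\lambda)$ part of $f_{\mathrm{ro}}$ listed in Lemma~\ref{ray_optics:cor:explicit_expressions_pi_1_f_ro}, rather than leaking into the ray optics flow.
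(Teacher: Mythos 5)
Your proposal is correct and follows essentially the same route as the paper: the paper also runs Steps 1–3 of the scalar proof on the dressed symbol $\pi_{\lambda}\Weyl f\Weyl\pi_{\lambda}$, kills the double-commutator (and hence the Berry-curvature correction to the symplectic form) via $\bigl[\pi_{\lambda}\Weyl f\Weyl\pi_{\lambda}\,,\,\pi_{\lambda}\bigr]_{\Weyl}=\order(\lambda^{\infty})$, truncates to $f_{\mathrm{ro}}$ at an $\order(\lambda^{3})$ cost in the commutator because the discrepancy is $\pi_0$-block-diagonal, and closes with the same Duhamel argument. The only cosmetic difference is that you phrase the whole argument as a wholesale substitution $f\mapsto F_{\lambda}$, whereas the paper keeps Steps 1–3 verbatim and only modifies Step 4; the substance is identical.
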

\begin{proof}
	Up until Step 3 the proof can be taken verbatim from that of Proposition~\ref{ray_optics:prop:ray_optics_limit}. Instead of proceeding as in equation~\eqref{ray_optics:eqn:proof_ray_optics_step_3} in Step~4, we replace $\pi_{\lambda} \Weyl f \Weyl \pi_{\lambda}$ with $\pi_{\lambda} \Weyl f_{\mathrm{ro}} \Weyl \pi_{\lambda}$. While the two agree up to $\order(\lambda^2)$, just like in equation~\eqref{ray_optics:eqn:replace_M_lambda_M_ro} the $\order(\lambda^2)$ term commutes with $\pi_0$, and thus, the error we introduce in 
	\begin{align*}
		\Bigl [ \pi_{\lambda} \Weyl \Omega \Weyl \pi_{\lambda} \; , \; \pi_{\lambda} \Weyl f \Weyl \pi_{\lambda} \Bigr ]_{\Weyl} &= \pi_{\lambda} \Weyl \Bigl [ \Omega \, , \, \pi_{\lambda} \Weyl f \Weyl \pi_{\lambda} \Bigr ]_{\Weyl} \Weyl \pi_{\lambda} + \order(\lambda^{\infty})
		\\
		&= \pi_{\lambda} \Weyl \Bigl [ \Omega \, , \, f_{\mathrm{ro}} \Bigr ]_{\Weyl} \Weyl \pi_{\lambda} + \order(\lambda^3)
	\end{align*}
	is in fact $\order(\lambda^3)$. The double commutator term in \eqref{ray_optics:eqn:proof_ray_optics_step_3} is zero as 
	\begin{align*}
		\Bigl [ \pi_{\lambda} \Weyl f \Weyl \pi_{\lambda} \; , \; \pi_{\lambda} \Bigr ]_{\Weyl} = \order(\lambda^{\infty})
	\end{align*}
	vanishes to any order. That means there are no $\order(\lambda)$ which modify the symplectic form either, and we have to replace $\{ \, \cdot \, , \, \cdot \, \}_{\lambda}$ with the usual Poisson bracket in equation~\eqref{ray_optics:eqn:Moyal_commutator_equal_lambda_Poisson_bracket} and Step~5 of the proof. Consequently, the resulting ray optics equations are \eqref{ray_optics:eqn:ray_optics_equations_non-scalar} which compared to \eqref{ray_optics:eqn:ray_optics_equations_scalar} are missing the Berry curvature in the symplectic form. This finishes the proof. 
\end{proof}
%
% subsection non-scalar_observables (end)

\subsection{Proof of Theorem~\ref{ray_optics:thm:ray_optics}} % (fold)
\label{ray_optics:aux:proof_ray_optics}
With these intermediate results in hand, the proof of the ray optics limit is straightforward.%
\begin{proof}[Theorem~\ref{ray_optics:thm:ray_optics}]
	We revert to the notation of Section~\ref{setup:complex} and add the index “$+$” back to the notation. Given that $\Psi_+ := P_{+,\lambda} (\mathbf{E},\mathbf{H}) \in \ran P_{+,\lambda} \, \Pi_{+,\lambda}$ holds and that $\Pi_{\lambda}$ satisfies equation~\eqref{setup:eqn:relation_Pi_P}, we can insert $\Pi_{+,\lambda}$ free of charge, 
	\begin{align}
		\mathcal{F} &\bigl [ \bigl ( \mathbf{E}(t) , \mathbf{H}(t) \bigr ) \bigr ] = 
		\notag \\
		&= 2 \, \Re \, \Bscpro{\Psi_+ \, }{ \, \e^{+ \ii \frac{t}{\lambda} M_{+,\lambda}} \, \Op_{\lambda}^{S \Zak}(f) \, \e^{- \ii \frac{t}{\lambda} M_{+,\lambda}} \Psi_+}_{L^2_{W_{+,\lambda}}(\R^3,\C^6)}
		\notag \\
		&= 2 \, \Re \, \Bscpro{\Psi_+ \, }{ \, \Pi_{+,\lambda} \, \e^{+ \ii \frac{t}{\lambda} M_{+,\lambda}} \, \Op_{\lambda}^{S \Zak}(f) \, \e^{- \ii \frac{t}{\lambda} M_{+,\lambda}} \, \Pi_{+,\lambda} \Psi_+}_{L^2_{W_{+,\lambda}}(\R^3,\C^6)} 
		\, + \notag \\
		&\qquad
		+ \order(\lambda^{\infty}) 
		. 
		\label{ray_optics:eqn:scalar_product_pi_Op_f_pi}
	\end{align}
	Suppose $f$ is scalar, then the claim follows from Proposition~\ref{ray_optics:prop:ray_optics_limit}. Similarly, Proposition~\ref{ray_optics:prop:ray_optics_limit_non-scalar} implies part~(ii) for non-scalar $f$. 
\end{proof}
%
% subsection proof_of_theorem_ref_ray_optics_thm_ray_optics_observables (end)
% section an_egorov_type_theorem (end)

\section{Quantum-light analogies} % (fold)
\label{discussion}
The premise of this article was to rigorously establish the quantum-light analogy between semiclassics for the Bloch electron and ray optics in photonic crystals. However, we need to clearly distinguish between analogies in the mathematical structures and similarities in the physics of crystalline solids and photonic crystals.

\subsection{Comparison of semiclassics and ray optics} % (fold)
\label{discussion:quantum_light_analogy}
From the perspective of mathematics it is not at all surprising that the semiclassical equations 
\begin{subequations}
	\label{discussion:eqn:semiclassical_equations_Bloch_electron}
	\begin{align}
		\dot{r} &= + \nabla_k h - \lambda \, \Xi \, \dot{k} 
		\\
		\dot{k} &= - \nabla_r h + \dot{r} \times \mathbf{B} 
	\end{align}
\end{subequations}
for a Bloch electron subjected to an external electromagnetic field $\bigl ( - \nabla_r \phi , \mathbf{B} \bigr )$ indeed resemble equation~\eqref{ray_optics:eqn:ray_optics_equations_scalar} where the semiclassical hamiltonian 
\begin{align*}
	h(r,k) &= \bigl ( E_n(k) + \phi(r) \bigr ) + \order(\lambda)
	% - \eps \, \mathbf{B}(r) \cdot \frac{\ii}{2} \Bscpro{\nabla_k \varphi_n(k) \,}{\, \times \bigl ( H_{\mathrm{per}}(k) - E(k) \bigr ) \nabla_k \varphi_n(k)}_{L^2(\T^3)}
\end{align*}
takes the place of the dispersion relation~\eqref{ray_optics:eqn:ray_optics_Maxwellian} (see \cite{PST:effective_dynamics_Bloch:2003} and references therein for details). The presence of the \emph{anomalous velocity term} $\Xi \, \dot{k}$ in the ray optics equations was key in the early works \cite{Onoda_Murakami_Nagaosa:Hall_effect_light:2004,Raghu_Haldane:quantum_Hall_effect_photonic_crystals:2008} to anticipate topologically protected edge modes in photonic crystals. In fact, \cite{Onoda_Murakami_Nagaosa:geometrics_optical_wave-packets:2006,Raghu_Haldane:quantum_Hall_effect_photonic_crystals:2008,Esposito_Gerace:photonic_crystals_broken_TR_symmetry:2013} all contain the same semiclassical argument showing the quantization of the transverse conductivity for the \emph{quantum} system: in case the Bloch electron is subjected to a constant electromagnetic field and the magnetic flux through the unit cell is rational, the effect of $\mathbf{B}$ can be subsumed by using \emph{magnetic} Bloch bands, and the average current carried by a \emph{filled band}
\begin{align}
	j &= \int_{\BZ} \dd k \, \dot{r} 
	= \int_{\BZ} \dd k \, \bigl ( \nabla_k E_n(k) - \eps \, \Xi(k) \, \mathbf{E} \bigr ) 
	= \eps \, c \times \mathbf{E} 
	\label{discussion:eqn:transverse_conductivity}
\end{align}
is proportional to the antisymmetric matrix $c = \frac{1}{2\pi} \int_{\BZ} \dd k \, \nabla_k \times \mathcal{A}(k)$ made up of the first Chern numbers and the electric field $\mathbf{E}$. While suggestive the argument does \emph{not} work for photonic crystals for reasons that are important and independent of finding a photonic analog of the transverse conductivity.

\paragraph{Typical states} % (fold)
\label{par:typical_states}
The leading-order term in \eqref{discussion:eqn:transverse_conductivity} vanishes because the band is completely filled. Such states are typical for semiconductors and isolators where the Fermi energy $E_{\mathrm{F}}$ lies in a gap. Even when one includes finite-temperature effects, these are typically seen as perturbations of the (zero temperature) \emph{Fermi projection} 
\begin{align*}
	P_{\mathrm{F}} &= 1_{(-\infty,E_{\mathrm{F}}]}(H)
	. 
\end{align*}
However, the Maxwell equations describe classical waves, and there is no exclusion principle which forbids us to populate the same frequency band more than once. 

Experiments usually rely on a laser to selectively populate a frequency band. Thus, states are typically peaked around some $k_0 \in \BZ$ and a frequency $\omega_0$, one may think of a laser beam which impinges on the surface of a photonic crystal: the frequency of the laser light fixes the spectral region, and the angle with respect to the surface normal determines $k_0$. A fully filled band would correspond to a carefully concocted cocktail of light moving in all different directions at specific frequencies, something that seems to be much harder to achieve if at all possible. Hence, we have to take the Brillouin zone average with respect to the \emph{reduced} Wigner transform 
\begin{align*}
	\mathrm{w}_{\Psi_+}^{\mathrm{red}}(r,k) := \sum_{\gamma^* \in \Gamma^*} \mathrm{w}_{\Psi_+} \bigl ( r , k + \gamma^* \bigr ) 
	% = \frac{1}{(2\pi)^3} \int_{\R^3} \sum_{\gamma^* \in \Gamma^*} \dd y \, \e^{- \ii (k + \gamma^*) \cdot y} \, \Psi \bigl ( r + \tfrac{\lambda}{2} y \bigr ) \cdot \Psi \bigl ( r - \tfrac{\lambda}{2} y \bigr )
\end{align*}
obtained by zone folding the usual Wigner transform $\mathrm{w}_{\Psi_+}$, and $\mathrm{w}_{\Psi_+}^{\mathrm{red}}(r,k)$ is now peaked around $k_0$ rather than constant in $k$. 
% paragraph typical_states (end)

\paragraph{Observables} % (fold)
\label{par:observables}
We have consciously avoided to call $\Op_{\lambda}(f)$ the (Weyl) \emph{quantization} of the classical observable $f$, as the operator $\Op_{\lambda}(f)$ is not an observable in classical electromagnetism -- those are functionals of the fields. While this distinction may seem pedantic and unnecessary, it is crucial if one wants to imbue expressions such as 
\begin{align*}
	\int_{\R^3} \dd r \int_{\BZ} \dd k \, f(t,r,k) \; w_{\Psi_+}^{\mathrm{red}}(r,k) 
\end{align*}
with physical meaning. In fact, depending on the type of observable, scalar or non-scalar, we have \emph{two different} ray optics equations to choose from. For instance, our discussion in Section~\ref{ray_optics:examples:poynting} explains that only the net energy flux across a surface uses $\dot{r} = + \nabla_k \Omega - \lambda \, \Xi \, \dot{k}$, local averages of the Poynting vector require one to use \emph{simpler} ray optics equations which \emph{omit the anomalous velocity term} at the expense of having to insert a more complicated function $f_{\mathrm{ro}} = \bscpro{\varphi}{f \varphi}_{\HperT} + \order(\lambda)$ into the integral over phase space. 
% paragraph observables (end)
\medskip

\noindent
All in all, while the hamiltonian equations \eqref{discussion:eqn:semiclassical_equations_Bloch_electron} and \eqref{ray_optics:eqn:ray_optics_equations_scalar} look very similar on the surface, the physics they describe is very different. The presence of the anomalous velocity term incorrectly suggests one is able to repeat the arguments of \eqref{discussion:eqn:transverse_conductivity}: ignoring that completely filled frequency bands are hard to come by \emph{and} that it is unclear what physical quantity the Brillouin zone average of 
\begin{align*}
	\dot{r} = + \nabla_k \Omega + \lambda \, \omega \, \Xi \, \nabla_r (\tau^2) + \order(\lambda^2) 
\end{align*}
corresponds to, it still would not lead to an expression proportional to $c$.

\paragraph{Designing an experiment to probe the $\order(\lambda)$ effects} % (fold)
\label{par:designing_an_experiment_to_probe_order_lambda_effects}
Nevertheless, the $\order(\lambda)$ contributions to the ray optics equations contain interesting physics, and the question comes to mind whether it is possible to engineer an experiment where these effects are particularly strong. The reason the leading-order term in \eqref{discussion:eqn:transverse_conductivity} is identically $0$ is the complete filling of the energy band. In photonics, we can turn this premise on its head, instead of indiscriminately exciting a whole band, we can populate states with pin point accuracy. We propose to use states in the \emph{slow} or \emph{frozen mode regime} (see \eg \cite{Figotin_Vitebskiy:slow_light_cubic_quartic:2006}): Here, we are interested in critical points of the frequency band function where in addition to $\nabla_k \omega_n(k_0) = 0$ \emph{at least} also the second-order derivatives vanish, $\mathrm{Hess} \, \omega_n(k_0) = 0$. To see why, one needs to consider the density of states (DOS) $D(\omega)$ -- a quantity which is well-defined because away from $0$, the spectrum of periodic Maxwell operators is believed to be absolutely continuous (proven under additional regularity assumptions on the material weights in \cite{Morame:spectrum_purely_ac_periodic_Maxwell:2000,Suslina:ac_spectra_periodic_operators:2000,Kuchment_Levendorskii:spectrum_periodic_elliptic_operators:2001}). For simplicity, let us assume that in the vicinity of $k_0$, the frequency band behaves as 
\begin{align*}
	\omega_n(k) = \omega_0 + a \, \bigl ( k - k_0 \bigr )^p + \order \bigl ( (k - k_0)^{p+1} \bigr )
\end{align*}
for some integer $p \geq 2$. Then a simple scaling argument yields that the contribution of the band $\omega_n$ near $\omega_0$ to the DOS is 
\begin{align*}
	D(\omega) \approx b \, \bigl ( \omega - \omega_0 \bigr )^{\frac{3}{p} - 1}
\end{align*}
where the factor $3$ stems from the dimension of the ambient space $\R^3$. For generic critical points $p = 2$ and $D(\omega)$ vanishes at $\omega_0$ -- there are \emph{no states to populate}. The additional condition $\mathrm{Hess} \, \omega(k_0) = 0$ implies $p \geq 3$, and the DOS either remains non-zero and finite at $\omega_0$ ($p = 3$) or diverges ($p \geq 4$). These heuristic considerations allow us to conclude that for $p = 3$ the leading-order term 
\begin{align*}
	\int_{\BZ} \dd k \, \nabla_k \omega(k) \, w_{\Psi_+}^{\mathrm{red}}(r,k) \approx 0 
	, 
\end{align*}
vanishes, but there are sufficiently many states to excite because $D(\omega_0) \neq 0$. 
% paragraph designing_an_experiment_to_probe_order_lambda_effects (end)
% subsection discussion_of_ray_optics_equations (end)

\subsection{Comparison to previous results} % (fold)
\label{discussion:previous_results}
Let us close with a comparison of our ray optics equations to previous results; for simplicity, we will adapt the notation used in these papers to make it consistent with ours. The equations Raghu and Haldane arrived at by analogy in \cite{Raghu_Haldane:quantum_Hall_effect_photonic_crystals:2008}, 
\begin{align*}
	\dot{r} &= + \nabla_k \bigl ( \tau_{\eps} \, \tau_{\mu} \, \omega \bigr ) - \lambda \, \Xi \, \dot{k} 
	, 
	\\
	\dot{k} &= - \nabla_r \bigl ( \tau_{\eps} \, \tau_{\mu} \, \omega \bigr ) 
	, 
\end{align*}
are missing the Rammal-Wilkinson-type term $\tau_{\eps} \, \tau_{\mu} \, \mathcal{P} \cdot \nabla_r \ln \frac{\tau_{\eps}}{\tau_{\mu}}$ which contributes to the dispersion to subleading order. Hence, their result is accurate if the perturbation acts on $\eps$ and $\mu$ in exactly the same way, \ie $\tau_{\eps} = \tau_{\mu}$. This is, however, atypical as $\mu = \mu_{\mathrm{vac}}$ usually does not appreciably vary from its vacuum value in many materials. Esposito and Gerace have been able to derive only the equation for $\dot{r}$ via standard perturbation theory \cite{Esposito_Gerace:photonic_crystals_broken_TR_symmetry:2013}. 

The equations of motion in the third work of note by Onoda, Murakami and Nagaosa \cite{Onoda_Murakami_Nagaosa:geometrics_optical_wave-packets:2006} include an additional spin degree of freedom $z$ to cover the case of a single, $n$-fold degenerate band. Their dispersion relation 
\begin{align*}
	\widetilde{\Omega} &= \tau_{\eps} \, \tau_{\mu} \, \omega + \bscpro{z \,}{\, \widetilde{\Omega}_1 z}_{\C^n} 
\end{align*}
\emph{does} include an extra term that is the expectation value of 
\begin{align*}
	\widetilde{\Omega}_1 = - \tfrac{1}{2} \, \tau_{\eps} \, \tau_{\mu} \, \omega \, \bigl ( \widetilde{\mathcal{A}}^E - \widetilde{\mathcal{A}}^H \bigr ) \cdot \nabla_r \ln \tfrac{\tau_{\eps}}{\tau_{\mu}}
\end{align*}
with respect to spin; this extra term is defined in terms of the difference of electric and magnetic “Berry” connections, 
\begin{align*}
	\widetilde{\mathcal{A}}_{jn}^{E,H}(k) := \ii \, \bscpro{\widetilde{\varphi}_j^{E,H}(k) \,}{\nabla_k \widetilde{\varphi}_l^{E,H}(k)}_{\HperT}
	, 
\end{align*}
where $\widetilde{\varphi_j}^{E,H}$ are normalized such that $\bnorm{\widetilde{\varphi}_j^E}_{L^2_{\eps}(\T^3,\C^3)} = 1$ and similarly for the magnetic component. In addition Onoda et al define the “Berry” connection $\widetilde{\mathcal{A}} = \tfrac{1}{2} \bigl ( \widetilde{\mathcal{A}}^E + \widetilde{\mathcal{A}}^H \bigr )$ that is the average of electric and magnetic contributions. While up to the factor of $\nicefrac{1}{2}$ this seems to coincide with the usual Berry connection
\begin{align*}
	\mathcal{A}_{jn} = \ii \, \bscpro{\varphi_j}{\nabla_k \varphi_l}_{\HperT}
	= \ii \, \bscpro{\varphi_j^E}{\nabla_k \varphi_l^E}_{L^2_{\eps}(\T^3,\C^3)} + \ii \, \bscpro{\varphi_j^H}{\nabla_k \varphi_l^H}_{L^2_{\mu}(\T^3,\C^3)}
	, 
\end{align*}
their similarities are deceiving: in general the field energy stored in the electric and magnetic components of a Bloch mode need not be the same, and thus, the normalization factors $\bnorm{\varphi_j^E(k)}_{L^2_{\eps}(\T^3,\C^3)} \neq \bnorm{\varphi_j^H(k)}_{L^2_{\mu}(\T^3,\C^3)}$ of both contributions are different. In that situation the vector bundle associated to the projection
\begin{align}
	\widetilde{\pi}_0(k) &= \sum_{j = 1}^n \frac{1}{4} \ket{\left (
	\begin{matrix}
		\snorm{\varphi_j^E(k)}_{\eps}^{-1} \, \varphi_j^E \\
		\snorm{\varphi_j^H(k)}_{\mu}^{-1} \, \varphi_j^H \\
	\end{matrix}
	\right )} \negmedspace \bra{\left (
	\begin{matrix}
		\snorm{\varphi_j^E(k)}_{\eps}^{-1} \, \varphi_j^E \\
		\snorm{\varphi_j^H(k)}_{\mu}^{-1} \, \varphi_j^H \\
	\end{matrix}
	\right )}
	\label{discussion:eqn:projection_alternative_Bloch_bundle}
\end{align}
whose connection and curvature tensor are $\widetilde{\mathcal{A}}$ and $\widetilde{\Xi} = \dd \widetilde{\mathcal{A}} + \ii \, [\widetilde{\mathcal{A}} , \widetilde{\mathcal{A}}]$, respectively, is distinct from the standard Bloch vector bundle associated to $\pi_0(k) = \sum_{j = 1}^n \sopro{\varphi_j(k)}{\varphi_j(k)} \neq \widetilde{\pi}_0(k)$ endowed with the standard Berry connection $\mathcal{A}$. 

Consequently, it is not possible to relate Onoda et al's equations of motion 
\begin{subequations}\label{discussion:eqn:Onoda_Murakami_Nagaosa_ray_optics_equations}
	\begin{align}
		\dot{r} &= + \nabla_k \widetilde{\Omega} + \bscpro{z \,}{\, \widetilde{\Xi} \, z}_{\C^n} \, \dot{k} - \ii \, \bscpro{z \,}{\, \bigl [ \, \widetilde{\Omega}_1 \; , \, \tfrac{1}{2} \bigl ( \widetilde{\mathcal{A}}^E - \widetilde{\mathcal{A}}^H \bigr ) \bigr ] \, z}_{\C^n} 
		\\
		\dot{k} &= - \nabla_r \widetilde{\Omega} 
		\\
		\dot{z} &= - \ii \, \Bigl ( \mbox{$\sum_{j = 1}^3$} \dot{k}_j \, \widetilde{\mathcal{A}}_j + \widetilde{\Omega}_1 \Bigr ) \, z
	\end{align}
\end{subequations}
to the topology of the standard Bloch bundle associated to $\pi_0$, but instead to the bundle defined through \eqref{discussion:eqn:projection_alternative_Bloch_bundle}. Even just on the level of ODEs, if $\Xi$ and $\widetilde{\Xi}$ are different from one another, they and their respective flows differ on $\order(\lambda)$. What is more, some of the terms in the equations are gauge-dependent. Hence, even though \eqref{discussion:eqn:Onoda_Murakami_Nagaosa_ray_optics_equations} somewhat resemble the other ray optics equations at first glance, it is actually difficult to compare Onoda et al's ray optics equations. 

Lastly, the only other rigorous result \cite{Allaire_Palombaro_Rauch:diffractive_Bloch_wave_packets_Maxwell:2012} considers perturbations of order $\order(\lambda^2)$. For such perturbations, the $\order(\lambda)$ corrections to the leading-order ray optics equations necessarily vanish, and their work does not offer any insight into what the correct subleading terms are. 
\medskip

\noindent
In summary, \emph{none} of the previous works agree beyond leading order with each other as well as our result. (With leading order we mean the contributions beyond those involving $\tau_{\eps} \, \tau_{\mu} \, \omega$ in case the perturbation parameter is not made explicit.) Therefore, if we replaced the flow $\Phi^{\lambda}$ in, say, equation~\eqref{ray_optics:eqn:ray_optics_limit_scalar} with the flow associated to one of the equations above (which differ by $\order(\lambda)$), a Grönwall argument tells us that the magnitude of the error in \eqref{ray_optics:eqn:ray_optics_limit_scalar} will no longer be $\order(\lambda^2)$ but $\order(\lambda)$. 

The second important difference between this work and previous results is that we do \emph{not} assume to the initial states to be a wave packet — even though in optics states that are well-localized in real and momentum space are much more ubiquitous than in condensed matter physics. Hence, our work contains \emph{the only rigorous results} in this direction that include $\order(\lambda)$ corrections, and therefore we have settled the question of what the correct form of the ray optics equations is conclusively. 
% subsection comparison_to_previous_results (end)
% section quantum_light_analogies (end)
%
\begin{appendix}
	\section{Pseudodifferential calculus for equivariant operator-valued symbols} % (fold)
	\label{appendix:PsiDOs}
	The main point of \cite{DeNittis_Lein:adiabatic_periodic_Maxwell_PsiDO:2013} was to explain how $M_{\lambda} = \Op_{\lambda}^{S \Zak}(\Msymb_{\lambda})$ can be understood as the pseudodifferential operator associated to the semiclassical symbol~\eqref{setup:eqn:symbol_Maxwellian}. We content ourselves giving only the necessary definitions and refer the interested reader to \cite[Section~4]{DeNittis_Lein:adiabatic_periodic_Maxwell_PsiDO:2013} and references therein. Simply put, $\Op_{\lambda}$ maps $r$ onto $\ii \lambda \nabla_k$ and $k$ onto the multiplication operator $\hat{k}$. The formal expression~\eqref{setup:eqn:definition_Op_lambda} for $\Op_{\lambda}(f)$ needs to be interpreted properly: Assume $\mathfrak{h}_1$ and $\mathfrak{h}_2$ are Banach or Hilbert spaces; in our applications, they stand for $L^2(\T^3,\C^6)$, $\HperT$ and $\domainT$. 
	We recall that $\HperT$ is $L^2(\T^3,\C^6)$ with scalar product weighted by $W^{-1}$ and $\domainT \subseteq \HperT$ is the domain of the unperturbed fibered Maxwell operator endowed with the graph norm. On all these spaces the actions of the multiplication operators $\e^{\pm \ii \gamma^* \cdot \hat{y}}$ are well-defined.
	A function $f \in \Cont^{\infty} \bigl ( \R^6 , \mathcal{B}(\mathfrak{h}_1,\mathfrak{h}_2) \bigr )$ is called \emph{equivariant} if and only if 
	\begin{align}
		f(r,k - \gamma^*) = \e^{+ \ii \gamma^* \cdot \hat{y}} \, f(r,k) \, \e^{- \ii \gamma^* \cdot \hat{y}} 
		\label{appendix:PsiDOs:eqn:equivariance}
	\end{align}
	holds for all $(r,k) \in \R^6$ and $\gamma^* \in \Gamma^*$. Operator-valued Hörmander symbols 
	\begin{align*}
		\Hoer{m} \bigl ( \mathcal{B}(\mathfrak{h}_1,\mathfrak{h}_2) \bigr ) :\negmedspace &= \left \{ 
			f \in \Cont^{\infty} \bigl ( \R^6,\mathcal{B}(\mathfrak{h}_1,\mathfrak{h}_2) \bigr )
			\; \; \big \vert \; \; 
			\forall \alpha , \beta \in \N_0^3 : \snorm{f}_{m , \alpha , \beta} < \infty
			\right \}
	\end{align*}
	of order $m \in \R$ and type $\rho \in [0,1]$ are defined through the usual seminorms 
	\begin{align*}
		\snorm{f}_{m , \alpha , \beta} := \sup_{(r,k) \in \R^6} \left ( \sqrt{1 + k^2}^{\; -m + \sabs{\beta} \rho} \, \bnorm{\partial_r^{\alpha} \partial_k^{\beta} f(r,k)}_{\mathcal{B}(\mathfrak{h}_1,\mathfrak{h}_2)} \right )
		, 
		&&
		\alpha , \beta \in \N_0^3 
		, 
	\end{align*}
	where $\N_0 := \N \cup \{ 0 \}$. The class of symbols $\Hoerm{m}{\rho} \bigl ( \mathcal{B}(\mathfrak{h}_1,\mathfrak{h}_2) \bigr )$ which satisfy the equivariance condition~\eqref{setup:eqn:equivariance} are denoted with $\Hoermeq{m}{\rho} \bigl ( \mathcal{B}(\mathfrak{h}_1,\mathfrak{h}_2) \bigr )$; similarly, $\Hoermper{m}{\rho} \bigl ( \mathcal{B}(\mathfrak{h}_1,\mathfrak{h}_2) \bigr )$ is the class of $\Gamma^*$-periodic symbols, $f(r,k - \gamma^*) = f(r,k)$. Lastly, we introduce the notion of 
	\begin{definition}[Semiclassical symbols]\label{appendix:PsiDOs:defn:semiclassical_symbols}
		Assume $\mathfrak{h}_j$, $j = 1 , 2$, are Banach spaces as above.	A map $f : [0,\lambda_0) \longrightarrow \Hoereq{m} \bigl ( \mathcal{B}(\mathfrak{h}_1,\mathfrak{h}_2) \bigr )$, $\lambda \mapsto f$, is called an \emph{equivariant semiclassical symbol} of order $m \in \R$ and weight $\rho \in [0,1]$, that is $f \in \SemiHoereq{m} \bigl ( \mathcal{B}(\mathfrak{h}_1,\mathfrak{h}_2) \bigr )$, iff there exists a sequence $\{ f_n \}_{n \in \N_0}$, $f_n \in \Hoereq{m - n \rho} \bigl ( \mathcal{B}(\mathfrak{h}_1,\mathfrak{h}_2) \bigr )$, such that for all $N \in \N_0$, one has 
		\begin{align*}
			\lambda^{-N} \left ( f - \sum_{n = 0}^{N - 1} \lambda^n \, f_n \right ) \in \Hoereq{m - N \rho} \bigl ( \mathcal{B}(\mathfrak{h}_1,\mathfrak{h}_2) \bigr ) 
		\end{align*}
		uniformly in $\lambda$ in the sense that for any $a , b \in \N_0^3$, there exist constants $C_{a b} > 0$ so that 
		\begin{align*}
			\norm{f - \sum_{n = 0}^{N - 1} \lambda^n \, f_n}_{m , \alpha , \beta} \leq C_{\alpha , \beta} \, \lambda^N 
		\end{align*}
		holds for all $\lambda \in [0,\lambda_0)$. 
	
		The Fréchet space of \emph{periodic semiclassical symbols} $\SemiHoerper{m} \bigl ( \mathcal{B}(\mathfrak{h}_1,\mathfrak{h}_2) \bigr )$ is defined analogously. 
	\end{definition}
	For such symbols $\Op_{\lambda}(f) : \mathcal{S}'_{\mathrm{eq}}(\R^3,\mathfrak{h}_1) \longrightarrow \mathcal{S}'_{\mathrm{eq}}(\R^3,\mathfrak{h}_2)$ makes sense as a linear, continuous map between equivariant distributions (\cf \cite[p.~90]{DeNittis_Lein:adiabatic_periodic_Maxwell_PsiDO:2013}), and under certain conditions on $f$ the restriction of $\Op_{\lambda}(f)$ to $L^2_{\mathrm{eq}}(\R^3,\mathfrak{h}_1) \subset \mathcal{S}'_{\mathrm{eq}}(\R^3,\mathfrak{h}_1)$ maps into 
	\begin{align*}
		L^2_{\mathrm{eq}}(\R^3,\mathfrak{h}_2) :\negmedspace &= \Bigl \{ \Psi \in L^2_{\mathrm{loc}}(\R^3,\mathfrak{h}_2) \; \; \big \vert \; \; 
		\Bigr . \\
		&\qquad \qquad \Bigl . 
		\Psi(k - \gamma^*) = \e^{+ \ii \gamma^* \cdot \hat{y}} \, \Psi(k) \; \mbox{a.~e.~$\forall \gamma^* \in \Gamma^*$} \Bigr \} 
		. 
	\end{align*}
	Another building block of pseudodifferential calculus is the Moyal product $\Weyl$ implicitly defined through $\Op_{\lambda}(f \Weyl g) := \Op_{\lambda}(f) \, \Op_{\lambda}(g)$. It defines a bilinear continuous map 
	\begin{align*}
		\sharp : \Hoereq{m_1} \bigl ( \mathcal{B}(\mathfrak{h}_1,\mathfrak{h}_2) \bigr ) \times \Hoereq{m_2} \bigl ( \mathcal{B}(\mathfrak{h}_2,\mathfrak{h}_3) \bigr ) \longrightarrow \Hoereq{m_1 + m_2} \bigl ( \mathcal{B}(\mathfrak{h}_1,\mathfrak{h}_3) \bigr ) 
	\end{align*}
	which has an asymptotic expansion 
	\begin{align}
		% \sharp : \; &\SemiHoereq{m_1} \bigl ( \mathcal{B}(\mathfrak{h}_1,\mathfrak{h}_2) \bigr ) \times \SemiHoereq{m_2} \bigl ( \mathcal{B}(\mathfrak{h}_2,\mathfrak{h}_3) \bigr ) \longrightarrow \SemiHoereq{m_1 + m_2} \bigl ( \mathcal{B}(\mathfrak{h}_1,\mathfrak{h}_3) \bigr )
		% \notag \\
		&f \Weyl g\, \asymp \sum_{n = 0}^{\infty} \lambda^n \, (f \Weyl g)_{(n)} 
		= f \, g - \lambda \, \tfrac{\ii}{2} \{ f , g \} + \order(\lambda^2) 
		\label{setup:eqn:Moyal_asymp_expansion}
	\end{align}
	where $\{ f , g \} := \sum_{j = 1}^3 \bigl ( \partial_{k_j} f \; \partial_{r_j} g - \partial_{r_j} f \; \partial_{k_j} g \bigr )$ is the usual Poisson bracket. Each term $(f \Weyl g)_{(n)}(r,k)$ is a sum of products of derivatives of $f$ and $g$ evaluated at $(r,k)$. 

	For technical reasons, we need to distinguish between the oscillatory integral $f \Weyl g$ and the formal sum $\sum_{n = 0}^{\infty} \lambda^n \, (f \Weyl g)_{(n)}$ when constructing the local Moyal resolvent. To simplify notation though, we will denote the formal sum on the right-hand side with the same symbol $f \Weyl g$. Given that the terms $(f \Weyl g)_{(n)}$ are purely local, the formal sum $\sum_{n = 0}^{\infty} \lambda^n \, (f \Weyl g)_{(n)}$ also makes sense if $f$ and $g$ are defined only on some common, open subset of $\R^6$.
	% section pseudodifferential_calculus_for_equivariant_operator_valued_symbols (end)

	\section{Derivation of reduced Wigner transform} % (fold)
	\label{appendix:reduced_Wigner_transform}
	\begin{proof}[Corollary \ref{ray_optics:cor:phase_space_average}]
		The main purpose here is to correctly include the material weights $W^{-1}$ in the expression for the reduced Wigner transform, and since the technical details justifying all expressions are covered in the proof of \cite[Corollary~2]{Panati_Teufel:propagation_Wigner_functions_Bloch_electron:2004}, we will omit them. 
	
		From the definition of $\Op_{\lambda}^{S \Zak}$, we deduce that the expectation value can also be computed with respect to the $\Hil_0$ scalar product, 
		\begin{align*}
			&\Bscpro{\Psi_+ \,}{\, \Op_{\lambda}^{S \Zak}(f) \Psi_+}_{L^2_{W_{+,\lambda}}(\R^3,\C^6)} 
			= \\
			&\qquad \qquad 
			= \Bscpro{\Psi_+ \,}{\, S(\lambda \hat{x})^{-1} \, \Zak^{-1} \, \Op_{\lambda}(f) \, \Zak \, S(\lambda \hat{x}) \Psi_+}_{L^2_{W_{+,\lambda}}(\R^3,\C^6)} 
			\\
			&\qquad \qquad 
			= \Bscpro{S(\lambda \hat{x}) \Psi_+ \,}{\, \Zak^{-1} \, \Op_{\lambda}(f) \, \Zak \, S(\lambda \hat{x}) \Psi_+}_{L^2_{W_+}(\R^3,\C^6)} 
			. 
		\end{align*}
		The straight-forward equality 
		\begin{align*}
			\Zak^{-1} \, f(\hat{k}) \, \Zak = \Zak^{-1} \, f(- \ii \nabla_y + \hat{k}) \, \Zak 
			= f(-\ii \nabla_x)
		\end{align*}
		for $r$-independent, scalar, $\Gamma^*$-periodic functions extends to pseudodifferential operators defined by scalar, $\Gamma^*$-periodic $\Cont^{\infty}_{\mathrm{b}}$-functions \cite[Proposition~5]{PST:effective_dynamics_Bloch:2003}, 
		\begin{align*}
			\Zak^{-1} \, f \bigl ( \ii \lambda \nabla_k , \hat{k} \bigr ) \, \Zak &= f \bigl ( \lambda \hat{x} , - \ii \nabla_x \bigr ) 
			. 
		\end{align*}
		Here, $f \bigl ( \ii \lambda \nabla_k , \hat{k} \bigr ) := \Op_{\lambda}(f)$ and $f \bigl ( \lambda \hat{x} , - \ii \nabla_x \bigr )$ is the ordinary Weyl quantization of $f$ obtained after replacing $\ii \lambda \nabla_k$ with $\lambda \hat{x}$ and $\hat{k}$ with $- \ii \nabla_x$ in equation~\eqref{setup:eqn:definition_Op_lambda}. Consequently, the expectation value reduces to 
		\begin{align}
			\ldots &= \Bscpro{S(\lambda \hat{x}) \Psi_+ \,}{\, \Zak^{-1} \, \Op_{\lambda}(f) \, \Zak \, S(\lambda \hat{x}) \Psi_+}_{L^2_{W_+}(\R^3,\C^6)} 
			\notag \\
			&= \Bscpro{S(\lambda \hat{x}) \Psi_+ \,}{\, f \bigl ( \lambda \hat{x} , - \ii \nabla_x \bigr ) \, S(\lambda \hat{x}) \Psi_+}_{L^2_{W_+}(\R^3,\C^6)} 
			. 
			\label{ray_optics:eqn:exp_val_Hil_lambda_Hil_0}
		\end{align}
		Next, we will derive the explicit expression for the reduced Wigner transform: we first perform a simple change of variables, 
		\begin{align*}
			&\Bscpro{\Psi_+}{f \bigl ( \lambda \hat{x} , - \ii \nabla_x \bigr ) \Psi_+}_{L^2_{W_+}(\R^3,\C^6)} 
			= \\
			&\qquad = \int_{\R^3} \dd x \, \Psi_+(x) \cdot W_+^{-1}(x) \bigl ( f \bigl ( \lambda \hat{x} , - \ii \nabla_x \bigr ) \Psi_+ \bigr )(x) 
			\\
			&\qquad = \int_{\R^3} \dd x \frac{1}{(2\pi)^3} \int_{\R^3} \dd y \int_{\R^3} \dd \eta \; \e^{- \ii \eta \cdot (y - x)} \, 
			\cdot \\
			&\qquad \qquad \qquad \qquad \qquad \qquad \qquad \quad \cdot \, 
			\Psi_+(x) \cdot W_+^{-1}(x) \, f \bigl ( \tfrac{\lambda}{2}(x + y) , \eta \bigr ) \Psi_+(y) 
			\\
			&\qquad = \frac{1}{(2 \pi \lambda)^3} \int_{\R^3} \dd r \int_{\R^3} \dd z \int_{\R^3} \dd \eta \; \e^{+ \ii \eta \cdot z} \, f(r,\eta) \, 
			\cdot \\
			&\qquad \qquad \qquad \qquad \qquad \qquad \qquad \quad \cdot \, 
			\Psi_+ \bigl ( \tfrac{r}{\lambda} + \tfrac{z}{2} \bigr ) \cdot W_+^{-1} \bigl ( \tfrac{r}{\lambda} + \tfrac{z}{2} \bigr ) \, \Psi_+ \bigl ( \tfrac{r}{\lambda} - \tfrac{z}{2} \bigr ) 
			,
		\end{align*}
		and then use the $\Gamma^*$-periodicity of $f$, 
		\begin{align}
			\ldots &= \frac{1}{(2 \pi \lambda)^3} \int_{\R^3} \dd r \int_{\R^3} \dd z \, \sum_{\gamma^* \in \Gamma^*} \int_{\BZ} \dd k \; \e^{+ \ii (k + \gamma^*) \cdot z} \, f \bigl ( r , k + \gamma^* \bigr ) 
			% \, 
			\cdot \notag \\
			&\qquad \qquad \qquad \qquad \qquad \qquad \qquad \cdot 
			\Psi_+ \bigl ( \tfrac{r}{\lambda} + \tfrac{z}{2} \bigr ) \cdot W_+^{-1} \bigl ( \tfrac{r}{\lambda} + \tfrac{z}{2} \bigr ) \, \Psi_+ \bigl ( \tfrac{r}{\lambda} - \tfrac{z}{2} \bigr ) 
			\notag \\
			&= \lambda^{-3} \, \sum_{\gamma \in \Gamma} \int_{\R^3} \dd r \int_{\BZ} \dd k \; \e^{+ \ii k \cdot \gamma} \, f(r,k) \, 
			\cdot \notag \\
			&\qquad \qquad \qquad \qquad \qquad \qquad \qquad \cdot 
			\Psi_+ \bigl ( \tfrac{r}{\lambda} + \tfrac{\gamma}{2} \bigr ) \cdot W_+^{-1} \bigl ( \tfrac{r}{\lambda} + \tfrac{\gamma}{2} \bigr ) \, \Psi_+ \bigl ( \tfrac{r}{\lambda} - \tfrac{\gamma}{2} \bigr ) 
			. 
			\label{appendix:reduced_Wigner_transform:eqn:symmetric_formula_reduced_Wigner_transform}
		\end{align}
		Factoring out $\int_{\R^3} \dd r \, \int_{\BZ} \dd k \, f(r,k)$ and tracing the arguments on the bottom of p.~9 in \cite{Panati_Teufel:propagation_Wigner_functions_Bloch_electron:2004} yields that the remaining expression coincides with $\mathrm{w}_{\Psi_+}^{\mathrm{red}}$, 
		\begin{align*}
			\Bscpro{\Psi}{f \bigl ( \lambda \hat{x} , - \ii \nabla_x \bigr ) \Psi}_{L^2_{W_+}(\R^3,\C^6)} &= \int_{\R^3} \dd r \int_{\BZ} \dd k \, f(r,k) \; \mathrm{w}_{\Psi_+}^{\mathrm{red}}(r,k)
			. 
		\end{align*}
		Replacing $f$ with $f \circ \Phi_t^{\lambda}$ and $\Psi_+$ with $S(\lambda \hat{x}) \Psi_+$ in \eqref{ray_optics:eqn:exp_val_Hil_lambda_Hil_0} then yields the claim. 
	\end{proof}
	%
	% section derivation_of_reduced_wigner_transform (end)

	\section{Computation of $\pi_1$ and $f_{\mathrm{ro}}$} % (fold)
	\label{appendix:computations}
	\begin{proof}[Lemma~\ref{ray_optics:cor:explicit_expressions_pi_1_f_ro}]
		\textsf{\bfseries Moyal projection} % (fold)
		\label{par:moyal_projection}
		The terms of the projection are computed order-by-order from the \emph{projection} and \emph{commutation defects} which are responsible for the block-diagonal and block-offdiagonal contributions, respectively (\cf \cite[equations~(4)--(8)]{PST:sapt:2002}). As $\pi_0$ is a function of $k$ only, the projection defect
		\begin{align*}
			\lambda \, G_1 + \order(\lambda^2) &= \pi_0 \Weyl \pi_0 - \pi_0 
			= 0 
		\end{align*}
		vanishes, and thus, also $\pi_1^{\mathrm{d}} = 0$. 
		
		The offidagonal term is derived from the commutation defect
		\begin{align*}
			\lambda \, &F_1 + \order(\lambda^2) = \bigl [ \Msymb_{\lambda} , \pi_0 \bigr ]_{\Weyl} 
			\\
			&= [ \Msymb_0 , \pi_0 ] + \lambda \, \Bigl ( \bigl [ \Msymb_1 , \pi_0 \bigr ] - \tfrac{\ii}{2} \bigl \{ \Msymb_0 , \pi_0 \bigr \} + \tfrac{\ii}{2} \bigl \{ \pi_0 , \Msymb_0 \bigr \} \Bigr ) + \order(\lambda^2)
			\\
			&= \lambda \, \Bigl ( [ \Msymb_1 , \pi_0 ] + \tfrac{\ii}{2} \nabla_r \bigl ( \tau^2 \, \Mper(\, \cdot \,) \bigr ) \, \cdot \nabla_k \pi_0 + \tfrac{\ii}{2} \nabla_k \pi_0 \cdot \nabla_r \bigl ( \tau^2 \, \Mper(\, \cdot \,) \bigr ) \Bigr ) 
			\, + \\
			&\qquad 
			+ \order(\lambda^2) 
			\\
			&= \lambda \, \, \tau^2 \, \frac{\ii}{2} \sum_{j = 1}^3 \Bigl ( - \partial_{r_j} \ln \tfrac{\tau_{\eps}}{\tau_{\mu}} \; \bigl [ \Sigma_j \, , \, \pi_0 \bigr ] + 2 \, \partial_{r_j} \ln \tau \, \bigl [ \Mper(\, \cdot \,) \, , \, \partial_{k_j} \pi_0 \bigr ]_+ \Bigr ) + \order(\lambda^2) 
		\end{align*}
		where we have used the abbreviation 
		\begin{align*}
			\Sigma_j := W \, \left (
			\begin{matrix}
				0 & e_j^{\times} \\
				e_j^{\times} & 0 \\
			\end{matrix}
			\right )
			. 
		\end{align*}
		The offdiagonal part is now the sum of two terms, 
		\begin{align*}
			\pi_1^{\mathrm{od}} &= \pi_0 \, F_1 \, \pi_0^{\perp} \, \bigl ( \Msymb_0 - \tau^2 \, \omega \bigr )^{-1} \, \pi_0^{\perp} + \pi_0^{\perp} \, \bigl ( \Msymb_0 - \tau^2 \, \omega \bigr )^{-1} \, \pi_0^{\perp} \, F_1 \, \pi_0 
			\\
			&= \tau^{-2} \, \pi_0 \, F_1 \, \pi_0^{\perp} \, \bigl ( \Mper(\, \cdot \,) - \omega \bigr )^{-1} \, \pi_0^{\perp} + \tau^{-2} \, \pi_0^{\perp} \, \bigl ( \Mper(\, \cdot \,) - 
			\omega \bigr )^{-1} \, \pi_0^{\perp} \, F_1 \, \pi_0 
			, 
		\end{align*}
		and because the second term is the adjoint of the first, it suffices to look at only one of them. We first need to figure out the offdiagonal parts of $F_1$, and because it is purely offdiagonal, $F_1 = \pi_0 \, F_1 \, \pi_0^{\perp} + \pi_0^{\perp} \, F_1 \, \pi_0$, we can leave out one of the projections: 
		\begin{align*}
			\pi_0 \, F_1 &= \tau^2 \, \frac{\ii}{2} \sum_{j = 1}^3 \Bigl ( - \partial_{r_j} \ln \tfrac{\tau_{\eps}}{\tau_{\mu}} \; \pi_0 \, \Sigma_j + 2 \, \partial_{r_j} \ln \tau \, \bigl [ \Mper(\, \cdot \,) \, , \, \pi_0 \, \partial_{k_j} \pi_0 \bigr ]_+ \Bigr )
		\end{align*}
		Let us compute each bit in turn: the $\Sigma_j$ define selfadjoint operators on $\HperT$, and hence,  
		\begin{align*}
			\pi_0 \, \Sigma_j &= \sopro{\varphi}{\Sigma_j \varphi}
			. 
		\end{align*}
		while the term involving the anticommutator 
		\begin{align*}
			\bigl [ \Mper(\, \cdot \,) \, , \, \pi_0 \, \partial_{k_j} \pi_0 \bigr ]_+ &= \pi_0 \, \partial_{k_j} \pi_0 \, \bigl ( \Mper(\, \cdot \,) + \omega \bigr ) 
			\\
			&=  \sopro{\varphi}{\varphi} \, \bigl ( \sopro{\partial_{k_j} \varphi}{\varphi} + \sopro{\varphi}{\partial_{k_j} \varphi} \bigr ) \, \bigl ( \Mper(\, \cdot \,) + \omega \bigr )
			\\
			&= \bigl ( \sopro{\varphi}{\partial_{k_j} \varphi} + \bscpro{\varphi}{\partial_{k_j} \varphi}_{\HperT} \, \pi_0 \bigr ) \, \bigl ( \Mper(\, \cdot \,) + \omega \bigr ) 
		\end{align*}
		Putting everything together, we obtain 
		\begin{align*}
			\pi_0 \, \pi_1 \, \pi_0^{\perp} &= \sum_{j = 1}^3 \Bigl ( - \tfrac{\ii}{2} \, \partial_{r_j} \ln \tfrac{\tau_{\eps}}{\tau_{\mu}} \; \sopro{\varphi}{\Sigma_j \varphi} + \ii \, \partial_{r_j} \ln \tau \, \bigl ( \sopro{\varphi}{\partial_{k_j} \varphi} 
			+ \bigr . \Bigr . \\
			&\qquad \qquad \Bigl . \bigl . 
			+ \bscpro{\varphi}{\partial_{k_j} \varphi}_{\HperT} \, \pi_0 \bigr ) \, \bigl ( \Mper(\, \cdot \,) + \omega \bigr )  \Bigr ) \, \pi_0^{\perp} \, \bigl ( \Mper(\, \cdot \,) - \omega \bigr )^{-1} \, \pi_0^{\perp}
			\\
			&= \sum_{j = 1}^3 \Bigl ( - \tfrac{\ii}{2} \, \partial_{r_j} \ln \tfrac{\tau_{\eps}}{\tau_{\mu}} \; \sopro{\varphi}{\Sigma_j \varphi} + \ii \, \partial_{r_j} \ln \tau \, \sopro{\varphi}{\partial_{k_j} \varphi} \, \bigl ( \Mper(\, \cdot \,) + \omega \bigr )  \Bigr ) 
			\cdot \\
			&\qquad \qquad \cdot 
			\pi_0^{\perp} \, \bigl ( \Mper(\, \cdot \,) - \omega \bigr )^{-1} \, \pi_0^{\perp}
		\end{align*}
		for one of the two contributions to $\pi_1 = \pi_0 \, \pi_1 \, \pi_0^{\perp} + \bigl ( \pi_0 \, \pi_1 \, \pi_0^{\perp} \bigr )^*$. 
		% paragraph moyal_projection (end)
		\medskip
		
		\noindent
		\textsf{\bfseries Ray optics observable} % (fold)
		\label{par:ray_optics_observable}
		There are two types of terms in \eqref{ray_optics:eqn:f_ro_explicit}, two terms involving $\pi_1$ and two with Poisson brackets. Let us start with the former: since $\pi_1$ is completely offdiagonal, we can compute the sum of the first two terms as 
		\begin{align*}
			\pi_1 \, f \, \pi_0 + \pi_0 \, f \, \pi_1 &= 
			  \pi_0^{\perp} \, \pi_1 \, \pi_0 \, f \, \pi_0 
			+ \pi_0 \, f \, \pi_0^{\perp} \, \pi_1 \, \pi_0 \, 
			+ \\
			&\qquad 
			+ \pi_0 \, \pi_1 \, \pi_0^{\perp} \, \, f \, \pi_0 
			+ \pi_0 \, f \, \pi_0 \, \pi_1 \, \pi_0^{\perp}
			\\
			&= \Bigl ( \bscpro{\varphi \,}{\, \pi_1 \, \pi_0^{\perp} \, f \varphi}_{\HperT} + \bscpro{\varphi \,}{\, f \, \pi_0^{\perp} \, \pi_1 \varphi}_{\HperT} \Bigr ) \, \pi_0 
			+ \\
			&\qquad 
			+ \bscpro{\varphi}{f \varphi}_{\HperT} \, \bigl ( \pi_0 \, \pi_1 \, \pi_0^{\perp} + \pi_0^{\perp} \, \pi_1 \, \pi_0 \bigr ) 
			\\
			&= \Bigl ( \bscpro{\pi_0^{\perp} \, \pi_1 \varphi \,}{\, f \varphi}_{\HperT} + \bscpro{\varphi \,}{\, f \, \pi_0^{\perp} \, \pi_1 \varphi}_{\HperT} \Bigr ) \, \pi_0 + \bscpro{\varphi}{f \varphi}_{\HperT} \, \pi_1 
			\\
			&= \bscpro{\varphi \,}{\bigl [ f , \pi_1 \bigr ]_+ \, \varphi}_{\HperT} \, \pi_0 + \bscpro{\varphi}{f \varphi}_{\HperT} \, \pi_1 
			. 
		\end{align*}
		The only terms that remain are the two Poisson brackets, 
		\begin{align*}
			\bigl \{ \pi_0 , f \bigr \} \, \pi_0 + \pi_0 \, \bigl \{ f , \pi_0 \bigr \} = \sum_{j = 1}^3 \Bigl ( \partial_{k_j} \pi_0 \, \partial_{r_j} f \, \pi_0 - \pi_0 \, \partial_{r_j} f \, \partial_{k_j} \pi_0 \Bigr ) 
			. 
		\end{align*}
		We insert 
		\begin{align*}
			\partial_{k_j} \pi_0 &= \pi_0 \, \partial_{k_j} \pi_0 \, \pi_0^{\perp} + \pi_0^{\perp} \, \partial_{k_j} \pi_0 \, \pi_0
			= \pi_0 \, \partial_{k_j} \pi_0 + \partial_{k_j} \pi_0 \, \pi_0
		\end{align*}
		into the above and compute 
		\begin{align*}
			\ldots &= \pi_0 \, \partial_{k_j} \pi_0  \, \partial_{r_j} f \, \pi_0 + \partial_{k_j} \pi_0 \, \pi_0 \, \partial_{r_j} f \, \pi_0 - \pi_0 \, \partial_{r_j} f \, \pi_0 \, \partial_{k_j} \pi_0 
			\, + \\
			&\qquad 
			- \, \pi_0 \, \partial_{r_j} f \, \partial_{k_j} \pi_0 \, \pi_0 
			\\
			&= \Bigl ( \bscpro{\varphi}{\partial_{k_j} \pi_0 \, \partial_{r_j} f \varphi}_{\HperT} - \bscpro{\varphi}{\partial_{r_j} f \, \partial_{k_j} \pi_0 \varphi}_{\HperT} \Bigr ) \, \pi_0 + 
			\\
			&\qquad + 
			\bscpro{\varphi}{\partial_{r_j} f \varphi}_{\HperT} \, \partial_{k_j} \pi_0 \, \pi_0 - \bscpro{\varphi}{\partial_{r_j} f \varphi}_{\HperT} \, \pi_0 \, \partial_{k_j} \pi_0 
			\\
			&= \Bscpro{\varphi}{\bigl [ \partial_{k_j} \pi_0 \,  , \, \partial_{r_j} f \bigr ] \varphi}_{\HperT} \, \pi_0 + 
			\bscpro{\varphi}{\partial_{r_j} f \varphi}_{\HperT} \, \bigl [ \partial_{k_j} \pi_0 \, , \, \pi_0 \bigr ]
		\end{align*}
		where we have omitted the sum for brevity. Thus, the ray optics observable computes to 
		\begin{align*}
			f_{\mathrm{ro}} &= \bscpro{\varphi}{f \varphi}_{\HperT} \, \pi_0 + \lambda \, \Bigl ( \bscpro{\varphi \,}{\, \bigl [ f , \pi_1 \bigr ]_+ \varphi}_{\HperT} - \tfrac{\ii}{2} \, \bscpro{\varphi}{\bigl [ \nabla_k \pi_0 \, , \nabla_r f \bigr ] \varphi}_{\HperT} \Bigr ) \, \pi_0 
			\Bigr . + \\
			&\qquad + 
			\lambda \, \Bigl ( 
			\bscpro{\varphi}{f \varphi}_{\HperT} \, \pi_1 
			- \tfrac{\ii}{2} \bscpro{\varphi \,}{\nabla_r f \varphi}_{\HperT} \cdot \bigl [ \nabla_k \pi_0 \, , \, \pi_0 \bigr ]
			\Bigr ) 
		\end{align*}
		where by definition $\bigl [ \nabla_k \pi_0 \, , \nabla_r f \bigr ] := \nabla_k \pi_0 \cdot \nabla_r f - \nabla_r f \cdot \nabla_k \pi_0$. To obtain a simplified expression in case $f = f^*$ takes values in the selfadjoint operators, we note that $f = f^*$ implies $\bigl ( \partial_{r_j} f \bigr )^* = \partial_{r_j} f$, and consequently, we obtain 
		\begin{align*}
			&\Bscpro{\varphi}{\bigl [ \partial_{k_j} \pi_0 \,  , \, \partial_{r_j} f \bigr ] \varphi}_{\HperT} 
			= \\
			&\qquad \qquad 
			= \Bscpro{\varphi}{\bigl [ \sopro{\partial_{k_j} \varphi}{\varphi} \,  , \, \partial_{r_j} f \bigr ] \varphi}_{\HperT} + \Bscpro{\varphi}{\bigl [ \sopro{\varphi}{\partial_{k_j} \varphi} \,  , \, \partial_{r_j} f \bigr ] \varphi}_{\HperT}
			\\
			&\qquad \qquad 
			= \bscpro{\varphi}{\partial_{k_j} \varphi}_{\HperT} \, \bscpro{\varphi}{\partial_{r_j} f \varphi}_{\HperT} 
			- \bscpro{\varphi}{\partial_{r_j} f \, \partial_{k_j} \varphi}_{\HperT} 
			+ \\
			&\qquad \qquad \qquad 
			+ \bscpro{\partial_{k_j} \varphi}{\partial_{r_j} f \varphi}_{\HperT} 
			- \bscpro{\varphi}{\partial_{r_j} f \varphi}_{\HperT} \, \bscpro{\partial_{k_j} \varphi}{\varphi}_{\HperT} 
			\\
			&\qquad \qquad 
			= 2 \, \bscpro{\varphi}{\partial_{k_j} \varphi}_{\HperT} \, \bscpro{\varphi}{\partial_{r_j} f \varphi}_{\HperT} - \ii \, 2 \, \Im \bscpro{\varphi}{\partial_{r_j} f \, \partial_{k_j} \varphi}_{\HperT}
			. 
		\end{align*}
		Thus, the commutator terms sum up to 
		\begin{align*}
			&\bigl \{ \pi_0 , f \bigr \} \, \pi_0 + \pi_0 \, \bigl \{ f , \pi_0 \bigr \} 
			= \\
			&\qquad \qquad 
			= \Bigl ( 2 \, \bscpro{\varphi}{\partial_{k_j} \varphi}_{\HperT} \, \bscpro{\varphi}{\partial_{r_j} f \varphi}_{\HperT} - \ii \, 2 \, \Im \bscpro{\varphi}{\partial_{r_j} f \, \partial_{k_j} \varphi}_{\HperT} \Bigr ) \, \pi_0 
			\, + \\
			&\qquad \qquad \qquad + 
			\bscpro{\varphi}{\partial_{r_j} f \varphi}_{\HperT} \, \bigl [ \partial_{k_j} \pi_0 \, , \, \pi_0 \bigr ]
			, 
		\end{align*}
		and overall, we yield the desired expression for 
		\begin{align*}
			f_{\mathrm{ro}} 
			&= \bscpro{\varphi}{f \varphi}_{\HperT} \, \pi_0 + \lambda \, \Bigl (
			\bscpro{\varphi \,}{\bigl [ f , \pi_1 \bigr ]_+ \, \varphi}_{\HperT} \, \pi_0 
			+ \bscpro{\varphi}{f \varphi}_{\HperT} \, \pi_1 
			+ \Bigr . 
			\\
			&\qquad 
			\Bigl . 
			- \tfrac{\ii}{2} \Bigl ( 2 \, \bscpro{\varphi}{\nabla_k \varphi}_{\HperT} \cdot \bscpro{\varphi}{\nabla_r f \varphi}_{\HperT} 
			- \ii \, 2 \, \Im \bscpro{\varphi}{\nabla_r f \cdot \nabla_k \varphi}_{\HperT} \Bigr ) \, \pi_0 
			+ \Bigr . \\
			&\qquad 
			\Bigl . 
			+ \bscpro{\varphi}{\nabla_r f \varphi}_{\HperT} \cdot \bigl [ \nabla_k \pi_0 \, , \, \pi_0 \bigr ]
			\Bigr ) 
			\\
			&= \bscpro{\varphi}{f \varphi}_{\HperT} \, \pi_0 
			+ \\
			&\qquad 
			+ \lambda \, \Bigl ( 
			2 \, \Re \bscpro{f \varphi \,}{\, \pi_1 \varphi}_{\HperT}
			- \bscpro{\varphi}{\nabla_r f \varphi}_{\HperT} \cdot \mathcal{A} 
			\Bigr . \\
			&\qquad \qquad \quad \Bigl . 
			- \, \Im \bscpro{\varphi}{\nabla_r f \cdot \nabla_k \varphi}_{\HperT}
			\Bigr ) \, \pi_0 
			\, + \\
			&\qquad 
			+ \lambda \, \Bigl ( \bscpro{\varphi}{f \varphi}_{\HperT} \, \pi_1 
			+ \bscpro{\varphi}{\nabla_r f \varphi}_{\HperT} \cdot \bigl [ \nabla_k \pi_0 \, , \, \pi_0 \bigr ]
			\Bigr ) 
			.  
		\end{align*}
		%
		% paragraph ray_optics_observable (end)
	\end{proof}
	%
	% section computations (end)
\end{appendix}
%
%%% end content %%% (end)

\printbibliography

\end{document}